\def\ARXIVVERSION{1}
\documentclass[onecolumn,draftcls]{IEEEtran}

\usepackage{cite}
\usepackage{pbox}
\usepackage{color}
\usepackage{xcolor}
\usepackage{graphicx}
\usepackage{amsmath}
\usepackage{amssymb}
\usepackage{amsthm}
\usepackage{mathrsfs}
\usepackage{mathtools}
\usepackage{psfrag}
\usepackage[caption=false,font=footnotesize]{subfig}
\usepackage{tikz,pgfplots}
\usepackage{url}
\usepackage{booktabs}
\usepackage{enumitem}
\usepackage[hidelinks]{hyperref}
\pgfplotsset{compat=newest}
\usetikzlibrary{matrix,positioning,calc,fit,backgrounds,arrows.meta}

\let\oldbrace\{
\def\{{\oldbrace\kern0.5pt}

\def\Var{\mathop{\rm Var}\nolimits}%
\newcommand{\Ac}{\mathcal{A}}

\newcommand{\Pc}{\mathcal{P}}

\newcommand{\Sc}{\mathcal{S}}

\newcommand{\Uc}{\mathcal{U}}
\newcommand{\Vc}{\mathcal{V}}

\newcommand{\Xc}{\mathcal{X}}

\newcommand{\Ut}{{\tilde{U}}}

\newcommand{\xt}{{\tilde{x}}}

\DeclareMathOperator\E{\sf E}

\newcommand{\N}{\mathrm{N}}
\newcommand{\U}{\mathrm{Unif}}

\newcommand{\ind}{\boldsymbol{1}}

\definecolor{lineblue}{rgb}{.2 .6 .9}
\definecolor{linered}{rgb}{.9 .3 .3}
\definecolor{linegreen}{rgb}{.3 .8 .3}
\definecolor{linepurple}{rgb}{.75 .5 .75}
\definecolor{LineGray}{rgb}{.6 .6 .6}
\definecolor{AxisGray}{rgb}{.3 .3 .3}
\definecolor{LineBlue}{rgb}{.2 .6 .9}
\definecolor{LineRed}{rgb}{.9 .3 .3}
\definecolor{LineGreen}{rgb}{.3 .8 .3}
\definecolor{LinePurple}{rgb}{.8 .3 .8}

\newcommand{\TV}{\mathsf{TV}}
\newcommand{\full}{\mathrm{full}}
\newcommand{\ro}{\mathrm{ro}}
\newcommand{\ri}{\mathrm{ri}}

\newcommand{\conv}{\operatorname{conv}}
\newcommand{\cl}{\operatorname{cl}}
\newcommand{\Fix}{\operatorname{Fix}}
\newcommand{\suppo}{\operatorname{supp}}

\newcommand{\Rfull}{R_{\full}}
\newcommand{\Rro}{R_{\ro}}
\newcommand{\Rri}{R_{\ri}}
\newcommand{\Gammaro}{\Gamma_{W_0}}
\newcommand{\Deltaro}{\Delta_{W_0}}

\newcommand{\Cren}{\mathcal C(W_0)}
\newcommand{\Xsupp}{\Xc_0}
\newcommand{\baseG}{\mathsf G_0}
\newcommand{\baseC}{\mathsf C_0}

\newcommand{\preceqin}{\preceq_{\rm in}}

\newtheorem{theorem}{Theorem}
\newtheorem{lemma}{Lemma}
\newtheorem{corollary}{Corollary}
\newtheorem{proposition}{Proposition}

\theoremstyle{definition}
\newtheorem{example}{Example}
\newtheorem{remark}{Remark}
\newtheorem{definition}{Definition}

\IEEEoverridecommandlockouts
\allowdisplaybreaks

\begin{document}

\title{Semantic Transferability of Generative Rate–Distortion–Perception Systems}

\newcommand{\authorIEEE}[1]{\ifdefined\ARXIVVERSION\else,~\IEEEmembership{#1}\fi}
\author{{Jinwan~Jeon, Seoyeon~Hwang, and~Sung~Hoon~Lim}%
\thanks{{J. Jeon, S. Hwang, and S. H. Lim are with Hallym University, Chuncheon-si, Gangwon-do, South Korea (e-mail: jeonjw@hallym.ac.kr; hsy@hallym.ac.kr; shlim@hallym.ac.kr).}}}

\maketitle

\begin{abstract}
Generative lossy compression combines a rate-limited representation with a stochastic decoder that preserves source-dependent information while generating unresolved detail. We ask what remains reusable when a rate--distortion--perception (RDP) system selected for one operating point is transferred to a prescribed conditional-generation task. We distinguish the single-letter RDP factorization from its block-code implementation and compare Full Redesign with three increasingly restrictive transfer strategies. Renderer-Only Transfer retains only the renderer and redesigns semantic steering. Representation-Induced Transfer additionally retains the semantic latent law induced by the original representation. Full-Model Transfer retains a compatible RDP code, marginalizing over the task conditional law to form its semantic encoder. For finite alphabets, we characterize Renderer-Only Transfer's exact compression--common-randomness rate region, renderer-hull controllability condition, and compression-rate penalty relative to Full Redesign. Under unlimited common randomness, this penalty is the minimum residual semantic information $I(S;U|X)$ forced by the inherited renderer. Representation-Induced Transfer is exact if and only if the semantic condition depends on the source only through the recurrent-class label $C$ of the end-to-end RDP test channel. Every compatible RDP code reused under Full-Model Transfer preserves $C$ and supports every task satisfying $X-C-S$. Any additionally transferable within-class semantic dependence must satisfy $I(S;X|C)\le R_0-H(C)$, where $R_0$ is the inherited compression rate. The bound is exact when $R_0=H(C)$. A scalar Gaussian specialization yields a renderer-noise controllability threshold, a closed-form transfer penalty, and examples of identical Gaussian RDP test channels with different Renderer-Only Transfer capabilities.
\end{abstract}

\begin{IEEEkeywords}
rate--distortion--perception, generative compression, channel synthesis, strong coordination, representation reuse, conditional generation, common randomness, likelihood encoder.
\end{IEEEkeywords}

\section{Introduction}
\label{sec:introduction}

Learned compression systems increasingly use stochastic or generative decoders to reconstruct signals at rates for which many source details cannot be transmitted explicitly.  In this regime, the decoder has two jobs.  It must preserve the information that is tied to the particular source realization, and it must generate the unresolved detail in a statistically plausible manner.  This viewpoint appears throughout modern learned image compression~\cite{BalleEtAl2017,TheisEtAl2017,BalleEtAl2018,MinnenEtAl2018,AgustssonTheis2020} and is especially explicit in generative compression systems based on adversarial, likelihood-based, and diffusion-inspired decoders~\cite{Tschannen2018,Agustsson2019,Mentzer2020,YanEtAl2021,YanEtAl2022,MuckleyEtAl2023,CareilEtAl2024}.  Similar encoder--latent--generator factorizations are also standard in latent-variable generative modeling~\cite{KingmaWelling2014,HoEtAl2020,RombachEtAl2022}.

The rate--distortion--perception (RDP) framework supplies an information-theoretic abstraction of this behavior.  Distortion measures how well a reconstruction remains tied to the particular source sample, whereas perception constrains the distribution of reconstructed samples to agree with the source distribution~\cite{BlauMichaeli2018,BlauMichaeli2019}.  Coding theorems and refinements of the framework have clarified the roles of deterministic, private-randomness, and common-randomness codes, as well as the distinction between weak and strong distributional criteria~\cite{TheisWagner2021,ChenEtAl2022,Wagner2022,HamdiWagnerGunduz2026,SalehkalaibarEtAl2024}.  
Throughout this paper, we describe this statistical structure through the single-letter RDP factorization $q_0(x, u, \xt)$ such that,
\begin{align}
    X \longrightarrow U \xrightarrow{\;W_0\;} \tilde X,
    \label{eq:intro-base-chain}
\end{align}
where $q_0(x)$ is the source distribution, $q_0(u|x)$ is a single-letter representation law, and $W_0(\tilde x|u):=q_0(\xt|u)$ is a stochastic decoder, which we call the \emph{renderer}. 
{This factorization provides an information-theoretic abstraction of the encoder--latent--generator structure used in generative compression. It captures how the source information retained in $U$ guides the generation of $\tilde X$ through $W_0$, with distortion and perception constraining the fidelity and distribution of the reconstruction, respectively.}

{
The practical motivation of this work is model transfer. For example, a future transfer task may require samples conditioned on a semantic variable $S$, such as a class label or an attribute. We describe this task by a prescribed transfer target distribution $q(s,x)=q(s)q(x|s)$ and focus on the matched-marginal setting
$q(x)=q_0(x)$.
Thus, the desired output marginal remains unchanged, while the generated output $\hat X$ must have the prescribed dependence on $S$. Since the base RDP system already provides a latent-to-output renderer $W_0$, a natural question is whether this renderer can be reused through a suitable mapping from $S$ to its latent input:
\begin{align}
    S \longrightarrow \Ut \xrightarrow{\;W_0\;} \hat X.
    \label{eq:intro-transfer-chain}
\end{align}
The key issue is that correct unconditional generation, i.e., generation that matches only the marginal $q(x)$, does not by itself guarantee that the transferred renderer $W_0$ can realize each specific transfer task $q(x|s)$. Thus, our primary question is: Can that renderer be transferred to a later conditional-generation task by changing how the condition accesses its latent input?
}

Indeed, training a new conditional generator for every later attribute, task, or semantic description can be expensive, and several generative-modeling methods instead retain an already trained generator and learn a new latent constraint or steering mechanism~\cite{EngelEtAl2018}.  Modern diffusion systems provide a particularly visible example: ControlNet and T2I-Adapter keep a large pretrained text-to-image backbone fixed while learning condition-specific modules that inject structural information such as edges, depth, segmentation, or pose~\cite{ZhangRaoAgrawala2023,MouEtAl2024}.  These architectures are practical motivations rather than operational instances of the coding model developed here; the shared structural question is what a previously trained generative component can support when new conditions arrive later. {A related fixed-decoder constraint arises in prompt compression for black-box language models, where the compressor is optimized while the language model remains unchanged~\cite{NagleEtAl2024}.} {An information-theoretic formulation of fixed-generator reuse also appears in latent-space watermarking~\cite{JeonLeeLim2026}, where a recoverable message is embedded while preserving a prescribed conditional output distribution.}

The distributional criteria on the base and semantic sides are chosen to be directly comparable.  Strong near-perfect perception requires
\begin{align}
    \left\|p(\tilde x^n)-q_0(x^n)\right\|_{\TV}\to0,
    \label{eq:intro-strong-perception}
\end{align}
whereas the later conditional generator is required to satisfy
\begin{align}
    \left\|p(s^n,\hat x^n)-\prod_{i=1}^n q(s_i,\hat x_i)\right\|_{\TV}\to0.
    \label{eq:intro-strong-synthesis}
\end{align}
Thus, channel synthesis is not introduced as an unrelated downstream metric.  It is the conditional version of the same block-distribution matching principle{~\cite{HamdiGunduz2026Survey}}: the first condition asks for realistic unconditional samples, while the second asks for realistic samples with the correct relationship to $S$.

We use Full Redesign, which allows both the encoder and decoder to be designed freely for the conditional-generation task, as a performance benchmark and compare it with three \emph{transfer strategies} that retain different components of the pretrained system. Their statistical and operational structures are summarized in Fig.~\ref{fig:three-levels}, with formal definitions postponed to Section~\ref{sec:problem}. Renderer-Only Transfer retains only the generative renderer $W_0$ and is free to learn a new condition-to-latent steering law. Representation-Induced Transfer additionally uses the semantic latent law induced by the pretrained representation $q_0(u|x)$, which models using the class- or condition-dependent latent statistics already present in the base representation rather than learning a new steering law. Full-Model Transfer is stronger still at the operational level: it reuses an already constructed RDP block code, including its encoder, latent codebook, common-randomness mechanism, and renderer.

\begin{figure}[t]
\centering
\resizebox{\textwidth}{!}{%
\begin{tikzpicture}[
    font=\small,
    block/.style={
        draw,
        rectangle,
        minimum width=27mm,
        minimum height=11mm,
        text width=23mm,
        align=center,
        inner sep=2pt
    },
    arrow/.style={-{Latex[length=1.45mm,width=1.0mm]}, line width=0.32pt},
    alab/.style={midway, above=0.8mm, inner sep=0pt},
    rowlab/.style={anchor=east, align=right, font=\small}
]

\def\dx{4.05}
\def\xleft{-2.70}
\def\xrighttwo{6.75}
\def\xrightthree{10.80}
\def\xlabel{-3.05}

\node[rowlab] at (\xlabel,0) {Base RDP Model};
\node[block] (benc) at (0,0) {Representation\\$q_0(u|x)$};
\node[block] (brend) at (\dx,0) {Fixed Renderer\\$W_0$};
\coordinate (bin) at (\xleft,0);
\coordinate (bout) at (\xrighttwo,0);
\draw[arrow] (bin) -- node[alab] {$X$} (benc.west);
\draw[arrow] (benc.east) -- node[alab] {$U$} (brend.west);
\draw[arrow] (brend.east) -- node[alab] {$\tilde X$} (bout);

\node[rowlab] at (\xlabel,-1.80) {Renderer-Only Transfer};
\node[block] (renc) at (0,-1.80) {Semantic Map\\$p(v|s)$};
\node[block] (rmap) at (\dx,-1.80) {Latent Map\\$p(u|v)$};
\node[block] (rrend) at ({2*\dx},-1.80) {Fixed Renderer\\$W_0$};
\coordinate (rin) at (\xleft,-1.80);
\coordinate (rout) at (\xrightthree,-1.80);
\draw[arrow] (rin) -- node[alab] {$S$} (renc.west);
\draw[arrow] (renc.east) -- node[alab] {$V$} (rmap.west);
\draw[arrow] (rmap.east) -- node[alab] {$U$} (rrend.west);
\draw[arrow] (rrend.east) -- node[alab] {$\hat X$} (rout);

\node[rowlab] at (\xlabel,-3.60) {Representation-Induced Transfer};
\node[block] (ienc) at (0,-3.60) {Induced Map\\$B_0(u|s)$};
\node[block] (irend) at (\dx,-3.60) {Fixed Renderer\\$W_0$};
\coordinate (iin) at (\xleft,-3.60);
\coordinate (iout) at (\xrighttwo,-3.60);
\draw[arrow] (iin) -- node[alab] {$S$} (ienc.west);
\draw[arrow] (ienc.east) -- node[alab] {$U$} (irend.west);
\draw[arrow] (irend.east) -- node[alab] {$\hat X$} (iout);

\node[rowlab] at (\xlabel,-5.40) {Full-Model Transfer};
\node[block] (fsamp) at (0,-5.40) {Task Sampler\\$q(x^n|s^n)$};
\node[block] (fcode) at (\dx,-5.40) {Fixed RDP Code\\$\mathsf C_{0,n}$};
\coordinate (fmin) at (\xleft,-5.40);
\coordinate (fmout) at (\xrighttwo,-5.40);
\draw[arrow] (fmin) -- node[alab] {$S^n$} (fsamp.west);
\draw[arrow] (fsamp.east) -- node[alab] {$X^n$} (fcode.west);
\draw[arrow] (fcode.east) -- node[alab] {$\tilde X^n$} (fmout);

\node[rowlab] at (\xlabel,-7.20) {Full Redesign};
\node[block] (fenc) at (0,-7.20) {Semantic Map\\$p(v|s)$};
\node[block] (fgen) at (\dx,-7.20) {Generator\\$p(x|v)$};
\coordinate (fin) at (\xleft,-7.20);
\coordinate (fout) at (\xrighttwo,-7.20);
\draw[arrow] (fin) -- node[alab] {$S$} (fenc.west);
\draw[arrow] (fenc.east) -- node[alab] {$V$} (fgen.west);
\draw[arrow] (fgen.east) -- node[alab] {$\hat X$} (fout);

\coordinate (sepwest) at (current bounding box.west);
\coordinate (sepeast) at (current bounding box.east);
\draw[dashed, line width=0.32pt] (sepwest |- 0,-0.90) -- (sepeast |- 0,-0.90);
\draw[dashed, line width=0.32pt] (sepwest |- 0,-6.30) -- (sepeast |- 0,-6.30);

\draw[arrow, line width=0.6pt, -{Latex[length=1.8mm,width=1.25mm]}]
    ([xshift=3mm]sepwest |- 0,-0.25)
    -- node[midway, right=1.5mm, yshift=3mm, inner sep=0pt] {Transfer}
    ([xshift=3mm]sepwest |- 0,-1.55);

\draw[arrow, line width=0.6pt, -{Latex[length=1.8mm,width=1.25mm]}]
    ([xshift=3mm]sepwest |- 0,-6.95)
    -- node[midway, right=1.5mm, yshift=-3mm, inner sep=0pt] {Benchmark}
    ([xshift=3mm]sepwest |- 0,-5.65);

\end{tikzpicture}%
}
\caption{Structural Comparison of the Base RDP Model, Full Redesign, and the Three Transfer Strategies. {All rows except Full-Model Transfer show} the corresponding single-letter statistical structures. {The Full-Model Transfer row shows the equivalent representation~\eqref{eq:fm-induced-law}. Marginalizing $X^n$ gives the encoder in~\eqref{eq:fm-encoder}.}}
\label{fig:three-levels}
\end{figure}

The first three semantic systems admit a direct rate-region comparison. Whenever Representation-Induced Transfer exactly realizes the desired task,
\begin{align}
    \mathscr R_{\ri}(q|\baseG)
    \subseteq
    \mathscr R_{\ro}(q|W_0)
    \subseteq
    \mathscr R_{\full}(q).
    \label{eq:intro-hierarchy}
\end{align}
The inclusions reflect what is allowed to change: Full Redesign chooses the entire semantic generator, Renderer-Only Transfer fixes $W_0$, and Representation-Induced Transfer also fixes the semantic latent statistics inherited from $q_0(u|x)$. Full-Model Transfer is a different, operationally stronger question because it fixes an actual block-code sequence rather than merely a single-letter target law.

This separation exposes three questions that are easy to conflate in practice. First, is the old \emph{generator} $W_0$ capable of the new task after latent steering is adapted? Second, if it is, does the old \emph{representation} already induce the correct semantic latent statistics? Third, does an actual deployed RDP code inherit this single-letter behavior when used without redesign? The first two questions are controlled by different objects. The convex renderer hull
\begin{align}
    \mathcal C(W_0)
    :=\conv\{W_0(x|u):u\in\Uc\}
    \label{eq:intro-hull}
\end{align}
characterizes what becomes available after steering adaptation. In contrast, the stationary distributions of the end-to-end RDP test channel
\begin{align}
    T_0(\tilde x|x)
    :=q_0(\tilde x|x)=\sum_uq_0(u|x)W_0(\tilde x|u)
    \label{eq:intro-t0}
\end{align}
characterize what is preserved by Representation-Induced Transfer. 
Full-Model Transfer then asks what an actual compatible RDP code preserves when reused without {\em any} redesign. 
{We identify a family of semantic tasks that every compatible RDP code is guaranteed to support. A particular implementation may support additional tasks, but its compression rate limits how much additional semantic dependence it can preserve. The choice of which component to retain also matters: two RDP systems with the same source-to-reconstruction channel can support different sets of transfer tasks when only their renderers are reused.}

\paragraph*{Contributions.}
In the present work, we make the following main contributions.
\begin{enumerate}[leftmargin=*,label=\arabic*)]

    \item \emph{Renderer-Only Transfer.} {We characterize which conditional-generation tasks a retained renderer can realize and the exact tradeoff between the compression rate and the common-randomness rate required for their synthesis (Theorem~\ref{thm:reuse-region} and Corollary~\ref{cor:controllability}). This separates tasks that are impossible through the renderer from those that are feasible but require a higher compression rate than Full Redesign. Even with unlimited common randomness, this rate penalty can remain strictly positive. We show that it equals the minimum information about the semantic condition that the renderer's latent input must carry beyond what is present in the generated output (Theorem~\ref{thm:unlimited-reuse}). We also show that a renderer supports every transfer task with the same output marginal as the base system if and only if it attains the Full Redesign rate region for every such task (Theorem~\ref{thm:universal-no-loss}).}

    \item \emph{Representation-Induced Transfer.} {We characterize the semantic tasks that can be transferred when the pretrained representation law $q_0(u|x)$ is retained in addition to the renderer $W_0$. The key object is the recurrent-class label $C$, which records the source distinctions preserved without error by the base RDP system. A task is exactly transferable if and only if its semantic condition depends on the source only through $C$ (Theorem~\ref{thm:maximal-invariant}). Thus, $C$ determines the entire family of exactly transferable tasks and provides a minimal interface supporting all of them (Corollary~\ref{cor:minimal-universal-interface}). Moreover, whenever transfer is exact, the inherited latent statistics attain the minimum compression rate of Full Redesign under unlimited common randomness. Hence, redesigning the semantic steering cannot further reduce that rate (Theorem~\ref{thm:ri-optimal}).}

    \item \emph{Full-Model Transfer.} {While Representation-Induced Transfer designs a new semantic synthesis code for the inherited latent statistics, Full-Model Transfer reuses an existing RDP code at its original rates. Thus, the compression rate is fixed by the base system rather than optimized for the specific transfer task. We show that every compatible implementation preserves the recurrent-class label $C$, which guarantees successful transfer for matched-marginal transfer targets $(S,X)$ satisfying $X-C-S$. We also derive a converse that limits the semantic dependence beyond $C$ that a particular implementation can preserve, in terms of its inherited compression rate (Theorem~\ref{thm:operational-transfer}).}

    \item \emph{Separations and Gaussian Insight.} {We illustrate the implications of our finite-alphabet characterizations through discrete examples and extend the analysis to a scalar Gaussian model. In both settings, we construct two RDP factorizations with the same base source--reconstruction joint law for which a transfer task is impossible through one renderer but achievable through the other (Propositions~\ref{prop:same-T0-separation} and~\ref{prop:gaussian-same-test}). Thus, the retained renderer determines which transfer targets can be realized and at what compression rates, but this information is not determined by the end-to-end RDP test channel alone. The scalar Gaussian model makes the renderer's role explicit through its noise variance. Under unlimited common randomness, transfer at a finite compression rate is possible exactly when the target conditional variance strictly exceeds the variance introduced by the renderer. We also derive the corresponding minimum compression rate and its penalty relative to Full Redesign (Proposition~\ref{prop:gaussian-reuse}).}
\end{enumerate}

{In summary, the rate-region characterizations build largely on established channel-synthesis techniques and provide a basis for analyzing the restrictions imposed by reusing pretrained components. Our main contribution lies in identifying how these restrictions determine the conditions and costs of transfer, thereby clarifying the distinct roles of the retained renderer, representation, and code.}

{
{\noindent \em Notation:} We closely follow the notation in~\cite{ElGamalKim2011}. Unless stated otherwise (e.g. Gaussian cases), all alphabets are finite. {Uppercase and lowercase letters denote random variables and realizations, respectively.} For $1\le i\le j$, write $X_i^j=(X_i,\ldots,X_j)$ and $X^n=X_1^n$. {Product targets are explicit in distributional criteria. Information quantities use the distribution specified in each statement or proof. Event probabilities and expectations use $\mathsf P$ and $\E$, respectively. Kernel compositions are explicit sums over intermediate variables.}

The notation $\mathbf 1\{\cdot\}$ denotes an indicator. For a pmf $p(a)$, define $\suppo(p(a)):=\{a:p(a)>0\}$.
For $a\in\Ac$, let $\delta_a$ denote the point-mass distribution at $a$, i.e., $\delta_a(a'):=\mathbf 1\{a'=a\}$.
For a finite set $\Ac$, $\Pc(\Ac)$ denotes the probability simplex on $\Ac$; $\conv(\Ac)$ and $\cl(\Ac)$ denote the convex hull and closure of a set $\Ac$, respectively.

For distributions $p$ and $r$ on the same finite alphabet,
\begin{align}
    \|p-r\|_{\TV}
    &:=\frac12\sum_a|p(a)-r(a)|.
    \label{eq:tv-definition}
\end{align}
A stochastic map contracts total variation. We repeatedly use this fact to pass from strong synthesis of one collection of random variables to a marginal or postprocessed collection.
}

{Section~\ref{sec:preliminaries} reviews strong-perception RDP coding and the full channel-synthesis benchmark. Section~\ref{sec:problem} states the problem, and Section~\ref{sec:main-results} presents the results for the three transfer strategies. Section~\ref{sec:examples} gives discrete separations and the scalar Gaussian specialization. Section~\ref{sec:discussion} discusses positioning, limitations, and extensions. Detailed proofs appear in the appendices.}

\section{Pretrained System and Transfer Task}
\label{sec:preliminaries}

We first review strong-perception RDP coding, which describes the pretrained system whose components are available for reuse. We then present channel synthesis, which provides the operational formulation of the new conditional-generation task to be performed using those components.

\subsection{Strong-Perception RDP Coding}
\label{subsec:strong-rdp}

\begingroup

Consider a discrete memoryless source (DMS) $X\sim q_0(x)$ with alphabet $\mathcal X$ and a bounded distortion measure $d:\mathcal X\times\mathcal X\to[0,\infty)$. For a source block $X^n\sim q_0(x^n)=\prod_{i=1}^n q_0(x_i)$, let $d_n(x^n,\tilde x^n):=\sum_{i=1}^n d(x_i,\tilde x_i)$.

A $(2^{nR_0},2^{nR_{c,0}},n)$ RDP code $\mathsf C_{0,n}$ with renderer $W_0$ consists of
\begin{itemize}

    \item a message set $[1:2^{nR_0}]$ and a common-randomness set $[1:2^{nR_{c,0}}]$,

    \item a stochastic encoder that assigns a distribution $p(m_0|x^n,k_0)$ on the message set to each pair $(x^n,k_0)\in\mathcal X^n\times[1:2^{nR_{c,0}}]$, and

    \item a decoder specified by a latent codeword $u^n(m_0,k_0)\in\mathcal U^n$ for each pair $(m_0,k_0)$, which generates the reconstruction according to $W_0(\tilde x^n|u^n(m_0,k_0))$, where $W_0(\tilde x^n|u^n):=\prod_{i=1}^n W_0(\tilde x_i|u_i)$.
    \end{itemize}
The encoder and decoder share $K_0\sim\U[1:2^{nR_{c,0}}]$, independent of $X^n$. The induced message, latent sequence, and reconstruction are denoted by $M_0$, $U^n=u^n(M_0,K_0)$, and $\tilde X^n$, respectively. Thus, $R_0$ and $R_{c,0}$ are the compression rate and common-randomness rate, respectively.

A sequence of such codes satisfies the strong near-perfect perception criterion if
\begin{align}
    \left\|p(\tilde x^n)-q_0(x^n)\right\|_{\TV}
    \longrightarrow0
    \label{eq:strong-near-perfect}
\end{align}
and the distortion constraint $D_0$ if
\begin{align}
    \limsup_{n\to\infty}
    \frac1n\E\!\left[d_n(X^n,\tilde X^n)\right]
    \le D_0.
    \label{eq:base-distortion-operational}
\end{align}

To describe the pretrained representation separately from its operational block code, we introduce the following single-letter factorization.

\begin{definition}[Single-Letter RDP Factorization]
\label{def:rdp-factorization}
For a DMS $X\sim q_0(x)$, a tuple
\begin{align}
    \baseG=(\Uc,q_0(u|x),W_0)
    \label{eq:base-model}
\end{align}
is called an $(R_0,R_{c,0},D_0)$ single-letter RDP factorization if, under
\begin{align}
    q_0(x,u,\tilde x)
    =q_0(x)q_0(u|x)W_0(\tilde x|u),
    \label{eq:base-factorization-law}
\end{align}
we have
\begin{align}
    q_0(\tilde x=x)&=q_0(x),
    \label{eq:base-perfect-single}\\
    \E[d(X,\tilde X)]&\le D_0,
    \label{eq:base-single-distortion}\\
    R_0&\ge I(X;U),
    \label{eq:base-rate-r}\\
    R_0+R_{c,0}&\ge I(\tilde X;U).
    \label{eq:base-rate-sum}
\end{align}
We take
\begin{align}
    \mathcal U=\operatorname{supp}(q_0(u)),
    \label{eq:effective-support}
\end{align}
so that every renderer input is used with positive probability under the factorization.
\end{definition}

The joint distribution in Definition~\ref{def:rdp-factorization} satisfies $X-U-\tilde X$. Here $q_0(u|x)$ specifies the single-letter representation statistics. We relate these statistics to an operational block code through a compatibility condition.

For a code $\mathsf C_{0,n}$, the actual joint distribution is
\begin{align}
&p(x^n,k_0,m_0,u^n,\tilde x^n)
\nonumber\\
&\quad=
q_0(x^n)\,2^{-nR_{c,0}}
p(m_0|x^n,k_0)
\mathbf 1\{u^n=u^n(m_0,k_0)\}
W_0(\tilde x^n|u^n).
\label{eq:actual-rdp-law}
\end{align}
Define the corresponding ideal distribution by
\begin{align}
&\bar p(x^n,k_0,m_0,u^n,\tilde x^n)
\nonumber\\
&\quad=
2^{-n(R_0+R_{c,0})}
\mathbf 1\{u^n=u^n(m_0,k_0)\}
q_0(x^n|u^n)
W_0(\tilde x^n|u^n),
\label{eq:ideal-rdp-law}
\end{align}
where $q_0(x|u)$ is the reverse conditional distribution associated with $q_0(x,u)$, and $q_0(x^n|u^n)=\prod_{i=1}^n q_0(x_i|u_i)$. Under this ideal distribution, $(M_0,K_0)$ is uniform, $U^n=u^n(M_0,K_0)$, and $X^n$ and $\tilde X^n$ are generated conditionally independently given $U^n$ according to $q_0(x^n|u^n)$ and $W_0(\tilde x^n|u^n)$, respectively.

\begin{definition}[Compatible RDP Code Sequence]
\label{def:compatible-rdp-code}
A sequence of operational RDP codes $\mathsf C_0=\{\mathsf C_{0,n}\}_{n\ge1}$ is compatible with the single-letter factorization $\baseG$ if
\begin{align}
    \left\|
    p(x^n,k_0,m_0,u^n,\tilde x^n)
    -\bar p(x^n,k_0,m_0,u^n,\tilde x^n)
    \right\|_{\TV}
    \longrightarrow0.
    \label{eq:rdp-compatibility}
\end{align}
\end{definition}

Compatibility requires the actual joint distribution to approach the ideal distribution associated with the factorization. It does not require the operational encoder to satisfy $p(u^n|x^n)=q_0(u^n|x^n)$. When $R_0>I(X;U)$ and $R_0+R_{c,0}>I(\tilde X;U)$, the likelihood-encoder construction in~{\cite[proof of Thm.~2, eq.~(82)]{Wagner2022}} yields a compatible code sequence satisfying the strong perception and distortion constraints above.

Finally, the single-letter factorization induces the end-to-end RDP test channel
\begin{align}
    T_0(\tilde x|x)
    :=\sum_{u\in\mathcal U}q_0(u|x)W_0(\tilde x|u).
    \label{eq:t0-definition}
\end{align}
For a pmf $p$ and a channel $T$, write $(pT)(x'):=\sum_xp(x)T(x'|x)$. The perfect-perception condition in Definition~\ref{def:rdp-factorization} is then equivalent to
\begin{align}
    q_0T_0=q_0.
    \label{eq:px-stationary}
\end{align}
Neither the single-letter factorization $\baseG$ nor a compatible operational implementation $\mathsf C_0$ is assumed to be unique.
\endgroup

\subsection{Channel Synthesis and the Full Redesign Benchmark}
\label{subsec:full-synthesis}

\begingroup

We now consider conditional synthesis, where a semantic source $S$ specifies the desired output distribution $q(x|s)$. Designing both the encoder and decoder freely for this task gives the Full Redesign benchmark. We follow the distributed channel-synthesis formulation in~\cite{Cuff2013}.

Let
\begin{align}
    q(s,x)=q(s)q(x|s)
    \label{eq:semantic-target-prelim}
\end{align}
be a prescribed finite-alphabet transfer target distribution, and write $q(s^n,x^n):=\prod_{i=1}^n q(s_i,x_i)$, with associated marginals and conditional distributions denoted analogously. The encoder observes $S^n\sim q(s^n)$, and the decoder generates $\hat X^n$ so that $(S^n,\hat X^n)$ approaches the prescribed joint distribution.

A $(2^{nR},2^{nR_c},n)$ synthesis code consists of
\begin{itemize}

    \item a message set $[1:2^{nR}]$ and a common-randomness set $[1:2^{nR_c}]$,

    \item a stochastic encoder that assigns a distribution $p(m|s^n,k)$ on the message set to each pair $(s^n,k)\in\mathcal S^n\times[1:2^{nR_c}]$, and

    \item a stochastic decoder that assigns a distribution $p(\hat x^n|m,k)$ on $\mathcal X^n$ to each pair $(m,k)\in{[1:2^{nR}]}\times{[1:2^{nR_c}]}$.
\end{itemize}
The encoder and decoder share $K\sim\U[1:2^{nR_c}]$, independent of $S^n$. The compression rate and common-randomness rate are $R$ and $R_c$, respectively. The message and reconstruction are generated according to $p(m|s^n,k)$ and $p(\hat x^n|m,k)$, respectively.

A rate pair $(R,R_c)$ is achievable if there exists a sequence of such codes satisfying the strong synthesis criterion
\begin{align}
    \left\|p(s^n,\hat x^n)-\prod_{i=1}^n q(s_i,\hat x_i)\right\|_{\TV}\to0.
    \label{eq:full-synthesis-criterion}
\end{align}
The Full Redesign region $\mathscr R_{\full}(q)$ is the closure of the set of achievable rate pairs. No component of the pretrained RDP system is {transferred} (retained).
\endgroup

\begin{theorem}[Full Redesign Region~\cite{Cuff2013}]
\label{thm:cuff-full}
Let
\begin{align}
    \mathcal D(q)
    &:=
    \left\{
    p(s,v,x):
    \begin{array}{l}
        p(s,x)=q(s,x),\quad S-V-X,\\
        |\Vc|\le |\Sc||\Xc|+1
    \end{array}
    \right\}.
    \label{eq:full-admissible-laws}
\end{align}
The Full Redesign region is
\begingroup
\begin{align}
    \mathscr R_{\full}(q)
    &=\bigcup_{p\in\mathcal D(q)}
    \left\{
    (R,R_c):
    \begin{aligned}
        R     &\ge I(S;V),\\
        R+R_c &\ge I(S,X;V)
    \end{aligned}
    \right\},
    \label{eq:full-region}
\end{align}
\endgroup
where $p(s,v,x)$ is a joint pmf.
\end{theorem}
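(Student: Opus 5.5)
This is Cuff's channel-synthesis theorem, so the plan is to reproduce the standard two-part argument of~\cite{Cuff2013}: achievability via a likelihood-encoder / soft-covering construction, a converse via an auxiliary-variable identification, and finally a cardinality bound. Throughout, fix $p(s,v,x)\in\mathcal D(q)$.

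\emph{Achievability.} Assume $R>I(S;V)$ and $R+R_c>I(S,X;V)$. The construction is as follows.
\begin{itemize}
\item Generate a random codebook $\{v^n(m,k)\}$ of $2^{n(R+R_c)}$ sequences drawn i.i.d.\ from $p(v)$.
\item Define the ideal joint law
\begin{align}
\bar p(s^n,m,k,\hat x^n)=2^{-n(R+R_c)}\,p(s^n|v^n(m,k))\,p(\hat x^n|v^n(m,k)).
\end{align}
\item The decoder applies the memoryless channel $p(x|v)$ to $v^n(M,K)$.
\item The encoder is the likelihood encoder $\bar p(m|s^n,k)$.
\end{itemize}
Two soft-covering steps then close the argument. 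First, since $R+R_c>I(S,X;V)$, the induced $\bar p(s^n,\hat x^n)$ is close in total variation to $q(s^n,x^n)$ in expectation over codebooks. Second, since $R>I(S;V)$, for each fixed $k$ the sub-codebook indexed by $m$ covers $S$, so $\bar p(s^n,k)\approx q(s^n)2^{-nR_c}$; this ensures $K$ is nearly independent of $S^n$ and uniform. The actual distribution differs from $\bar p$ only in replacing $\bar p(s^n,k)$ by $q(s^n)2^{-nR_c}$, so the two are close in TV. Combining the two estimates via the triangle inequality and selecting a good codebook yields~\eqref{eq:full-synthesis-criterion}. Taking the closure then removes the strict inequalities.

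\emph{Converse.} Given codes with TV error $\epsilon_n\to0$, introduce $J\sim\U[1:n]$ independent of everything and set $V=(M,K,J)$. The rate bounds follow from chains of standard inequalities.
\begin{itemize}
\item For the compression rate,
\begin{align}
nR\ge H(M)\ge I(S^n;M|K)=I(S^n;M,K)\ge\sum_i I(S_i;M,K)=nI(S_J;M,K|J)=nI(S_J;M,K,J),
\end{align}
where the middle equality uses independence of $K$ and $S^n$, the next inequality holds because the $S_i$ are i.i.d.\ under $q$, and the last equality holds because $S_J$ is independent of $J$.
\item For the sum rate,
\begin{align}
n(R+R_c)\ge I(S^n,\hat X^n;M,K)\ge\sum_i I(S_i,\hat X_i;M,K)-n\delta(\epsilon_n).
\end{align}
The second inequality requires that the pair $(S^n,\hat X^n)$ be approximately i.i.d.; this follows from the TV closeness via continuity of entropy on finite alphabets, as in Cuff's Lemma on nearly i.i.d.\ sequences.
\end{itemize}
The Markov chain $S^n-(M,K)-\hat X^n$ holds exactly by construction, so $S_J-V-\hat X_J$ holds. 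The joint law of $(S_J,\hat X_J)$ is within $\epsilon_n$ of $q$, but not equal to it. Taking $n\to\infty$ and using continuity together with compactness after the cardinality reduction gives a limit point in $\mathcal D(q)$.

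\emph{Cardinality and main obstacle.} The bound $|\Vc|\le|\Sc||\Xc|+1$ comes from the support lemma. Preserving $p(s,x)$ requires $|\Sc||\Xc|-1$ constraints, and preserving $H(S|V)$ and $H(S,X|V)$ requires two more, for $|\Sc||\Xc|+1$ in total. This reduction must be applied \emph{before} taking limits so that the sequence of auxiliary laws lives in a compact set. The main obstacle is the converse's $\delta(\epsilon_n)$ step: a TV-close joint law does not by itself make the per-letter mutual-information sum comparable to the block quantity. I would follow Cuff's argument, which bounds
\begin{align}
\sum_i I(S_i,\hat X_i;M,K)-I(S^n,\hat X^n;M,K)
\end{align}
by the total correlation of $(S^n,\hat X^n)$, and then shows that this total correlation is $o(n)$ when TV$\to0$. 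The route is $H(S^n,\hat X^n)\ge nH_q(S,X)-n\delta$ via uniform continuity of entropy, combined with the per-letter entropies being near $H_q$.
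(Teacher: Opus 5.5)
Your proposal is correct in substance and takes the same route the paper relies on. The paper does not reprove this theorem; it cites Cuff. Its own converse for the renderer-constrained region in Appendix~\ref{app:reuse-region} uses exactly your ingredients: the auxiliary $V=(M,K,Q)$, the total-correlation bound from Cuff's Lemma~VI.3, and a support-lemma reduction before extracting a convergent subsequence.

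One step of your sum-rate converse is left unjustified. Your chain ends at
\[
\frac1n\sum_{i=1}^n I(S_i,\hat X_i;M,K)=I(S_J,\hat X_J;M,K|J)=I(S_J,\hat X_J;V)-I(S_J,\hat X_J;J).
\]
In your compression-rate chain, $S_J$ is exactly independent of $J$. The pair $(S_J,\hat X_J)$ is not, because the per-letter laws $p(s_i,\hat x_i)$ are only $\epsilon_n$-close to $q(s,x)$. You therefore also need $I(S_J,\hat X_J;J)\to0$, which is the second bound in~\eqref{eq:app-TC-vanish} of the paper's appendix.

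This follows from the continuity estimate you already invoke. Write $I(S_J,\hat X_J;J)=H(S_J,\hat X_J)-\frac1n\sum_i H(S_i,\hat X_i)$, where $h_2$ below denotes the binary entropy function. Each of these entropies is within $\epsilon_n\log(|\Sc||\Xc|)+h_2(\epsilon_n)$ of $H_q(S,X)$, so the difference vanishes. With this term included, your limiting law lies in $\mathcal D(q)$ and satisfies both rate bounds.
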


For fixed $R_c$, define
\begin{align}
    \Rfull(R_c;q)
    &:=
    \min_{p\in\mathcal D(q)}
    \max\bigl\{I(S;V),I(S,X;V)-R_c\bigr\}.
    \label{eq:full-boundary}
\end{align}
The two classical endpoints are
\begin{align}
    \Rfull(0;q)&=C_{\rm W}(S;X),
    \label{eq:full-wyner}\\
    \lim_{R_c\to\infty}\Rfull(R_c;q)&=I(S;X),
    \label{eq:full-mutual}
\end{align}
where $C_{\rm W}$ is Wyner's common information~\cite{Wyner1975}.

\section{Problem Statement}
\label{sec:problem}


We consider a future transfer task for a pretrained RDP system. The semantic encoder observes $S^n\sim q(s^n)$ and communicates with a decoder that generates $\hat X^n$. Their objective is to approximate the prescribed joint law $\prod_{i=1}^n q(s_i,\hat x_i)$, while retaining specified components of the pretrained system.

The pretrained system is described by an RDP factorization
\begin{align}
    \baseG=(\Uc,q_0(u|x),W_0)
    \label{eq:fixed-G0}
\end{align}
in the sense of Definition~\ref{def:rdp-factorization}, {\em fixed before the task is specified}. We consider three forms of reuse: retaining only the renderer $W_0$, additionally retaining the semantic latent law induced by $q_0(u|x)$, or retaining an entire compatible RDP code sequence
\begin{align}
    \mathsf C_0=\{\mathsf C_{0,n}\}_{n\ge1}
    \label{eq:fixed-C0}
\end{align}
as defined in Definition~\ref{def:compatible-rdp-code}. These choices determine which components may be redesigned for the new task.

Renderer-Only Transfer allows an arbitrary transfer target marginal $q(x)$. For the main Representation-Induced and Full-Model Transfer results, we assume
\begin{align}
    q(x)=q_0(x).
    \label{eq:matched-marginal-main}
\end{align}
The transfer target therefore has the same output marginal as the base target distribution $q_0(x)$, while the conditional law $q(x|s)$ specifies the desired output distribution for each semantic condition $s$.

\subsection{Renderer-Only Transfer}
\label{subsec:renderer-only-definition}

{Universal RDP and rate--distortion--classification representations retain an encoder representation and adapt the decoder across objectives~\cite{ZhangEtAl2025,NguyenEtAl2025}. In contrast, Renderer-Only Transfer retains the generative renderer and redesigns the condition-to-latent steering.}

\begingroup

A $(2^{nR},2^{nR_c},n)$ Renderer-Only Transfer code consists of
\begin{itemize}

    \item a message set $[1:2^{nR}]$ and a common-randomness set $[1:2^{nR_c}]$,

    \item a stochastic encoder that generates a message $M\in[1:2^{nR}]$ from each pair $(s^n,k)\in\mathcal S^n\times[1:2^{nR_c}]$, and

    \item a stochastic latent-generation decoder that generates $U^n\in\mathcal U^n$ from each pair $(m,k)\in{[1:2^{nR}]}\times{[1:2^{nR_c}]}$.
\end{itemize}
Denote the encoder and latent-decoder distributions by
\begin{align}
    p(m|s^n,k),
    \qquad
    p(u^n|m,k),
    \label{eq:reuse-code-kernels}
\end{align}
respectively. The encoder and decoder share $K\sim\U[1:2^{nR_c}]$, independent of $S^n$. Given $U^n=u^n$, the inherited renderer generates $\hat X^n$ according to
\begin{align}
    W_0(\hat x^n|u^n)
    =\prod_{i=1}^nW_0(\hat x_i|u_i).
    \label{eq:reuse-renderer-product}
\end{align}
Thus the induced joint distribution is
\begin{align}
&p(s^n,k,m,u^n,\hat x^n)\nonumber\\
&\quad=q(s^n)\,2^{-nR_c}p(m|s^n,k)p(u^n|m,k)W_0(\hat x^n|u^n).
\label{eq:reuse-induced-law}
\end{align}
No prescribed distribution is imposed on $U^n$. The encoder and latent-generation decoder may be designed to meet the output synthesis criterion through the fixed renderer $W_0$.

\begin{definition}[Renderer-Only Transfer Region]
\label{def:reuse-region}
A pair $(R,R_c)$ is achievable for $(q(s,x),W_0)$ under Renderer-Only Transfer if there exists a sequence of codes of the form~\eqref{eq:reuse-code-kernels}--\eqref{eq:reuse-renderer-product} satisfying
\begin{align}
    \left\|p(s^n,\hat x^n)-\prod_{i=1}^n q(s_i,\hat x_i)\right\|_{\TV}\to0.
    \label{eq:reuse-criterion}
\end{align}
The Renderer-Only Transfer region $\mathscr R_{\ro}(q|W_0)$ is the closure of the set of achievable rate pairs.
\end{definition}
\endgroup

For a given common-randomness rate $R_c$, define the minimum required {compression rate} as
\begin{align}
    \Rro(R_c;q,W_0)
    :=\inf\{R:(R,R_c)\in\mathscr R_{\ro}(q|W_0)\}.
    \label{eq:reuse-boundary-definition}
\end{align}

\subsection{Representation-Induced Transfer}
\label{subsec:ri-definition}
The Renderer-Only Transfer model does not prescribe the latent distribution. We now consider a more restrictive form of transfer in which the pretrained representation determines the target conditional distribution of the latent variable given the semantic source. For a matched-marginal task~\eqref{eq:matched-marginal-main}, define
\begin{align}
    B_0(u|s)
    &:=\sum_xq(x|s)q_0(u|x)
    \label{eq:ri-latent-law}
\end{align}
and the corresponding single-letter output kernel
\begin{align}
(B_0W_0)(\hat x|s)
&=\sum_uB_0(u|s)W_0(\hat x|u) =\sum_xq(x|s)T_0(\hat x|x).
\label{eq:ri-output-law}
\end{align}
{Products of named kernels, such as $B_0W_0$ in~\eqref{eq:ri-output-law}, denote channel composition.}
In a learned latent-variable model, $B_0$ is the semantic or class-conditional aggregate latent distribution induced by the existing representation. Representation-Induced Transfer therefore models using these inherited latent statistics rather than learning a new semantic steering law for the fixed renderer. The kernel $B_0$ is a latent target law induced by the transferred representation, not a claim that the original RDP encoder operates symbolwise according to $q_0(u|x)$.

{A $(2^{nR},2^{nR_c},n)$ Representation-Induced Transfer code uses the encoder, latent-generation decoder, and inherited renderer specified in Section~\ref{subsec:renderer-only-definition}, with the induced joint distribution~\eqref{eq:reuse-induced-law}. The semantic encoder and latent-generation decoder are newly designed, while $B_0$ and $W_0$ remain fixed. A sequence of such codes is additionally required to synthesize the inherited semantic latent distribution:}
\begin{align}
    \left\|p(s^n,u^n)-\prod_{i=1}^n q(s_i)B_0(u_i|s_i)\right\|_{\TV}
    &\to0.
    \label{eq:ri-latent-criterion}
\end{align}
By contraction of total variation through $W_0$,
\begin{align}
    \left\|p(s^n,\hat x^n)-\prod_{i=1}^n q(s_i)(B_0W_0)(\hat x_i|s_i)\right\|_{\TV}
    &\to0.
    \label{eq:ri-induced-output}
\end{align}
Thus Representation-Induced Transfer realizes the prescribed transfer target distribution exactly only when
\begin{align}
    (B_0W_0)(x|s)
    &=q(x|s),\qquad q(s)>0.
    \label{eq:ri-compatibility}
\end{align}
We say that $q(s,x)$ is \emph{representation-induced compatible with $\baseG$} if~\eqref{eq:ri-compatibility} holds.

\begin{definition}[Representation-Induced Transfer Region]
\label{def:ri-region}
{If~\eqref{eq:ri-compatibility} holds, a pair $(R,R_c)$ is achievable for $(q(s,x),\baseG)$ under Representation-Induced Transfer if there exists a sequence of such codes satisfying~\eqref{eq:ri-latent-criterion}. The Representation-Induced Transfer region $\mathscr R_{\ri}(q|\baseG)$ is the closure of the set of achievable rate pairs. If~\eqref{eq:ri-compatibility} fails, we set $\mathscr R_{\ri}(q|\baseG)=\emptyset$.}
\end{definition}

For a given common-randomness rate $R_c$, define the minimum required {compression rate} under Representation-Induced Transfer as
\begin{align}
    \Rri(R_c;q,\baseG)
    &:=\inf\{R:(R,R_c)\in\mathscr R_{\ri}(q|\baseG)\}.
    \label{eq:ri-boundary-definition}
\end{align}

Whenever~\eqref{eq:ri-compatibility} holds, every Representation-Induced Transfer code is also a Renderer-Only Transfer code, and hence
\begin{align}
    \mathscr R_{\ri}(q|\baseG)
    \subseteq
    \mathscr R_{\ro}(q|W_0)
    \subseteq
    \mathscr R_{\full}(q).
    \label{eq:ri-region-inclusion-problem}
\end{align}

\subsection{Full-Model Transfer}
\label{subsec:full-model-definition}

\begingroup

Representation-Induced Transfer retains the single-letter statistics of the pretrained representation while designing a new semantic synthesis code. Full-Model Transfer instead reuses a fixed compatible RDP code sequence $\mathsf C_0$ from~\eqref{eq:fixed-C0}, including its encoder, latent codebook, common-randomness mechanism, and renderer.

For a matched-marginal task, the semantic encoder observes $S^n\sim q(s^n)$ and shares the original seed $K_0\sim\U[1:2^{nR_{c,0}}]$ with the decoder, independently of $S^n$. It generates $M_0\in[1:2^{nR_0}]$ according to
\begin{align}
    p(m_0|s^n,k_0)
    &=\sum_{x^n}q(x^n|s^n)p(m_0|x^n,k_0),
    \label{eq:fm-encoder}
\end{align}
where $p(m_0|x^n,k_0)$ is the fixed RDP encoder. Given $(M_0,K_0)$, the decoder selects $U^n=u^n(M_0,K_0)$ from the original codebook and generates $\tilde X^n$ according to $W_0(\tilde x^n|U^n)$.

Thus the semantic encoder is determined by the task distribution and the original code. The compression and common-randomness rates remain $(R_0,R_{c,0})$, as specified in Section~\ref{subsec:strong-rdp}, and are not optimized for the transfer task. We retain $M_0,K_0$ for the inherited code, in contrast to $M,K$ for newly designed semantic codes. The task distribution $q(x|s)$ is treated as known, and its learning cost is outside the model.

For analysis, use the equivalent representation with $(S^n,X^n)\sim q(s^n,x^n)$ independent of $K_0$, followed by the original RDP code. Matched marginals give $X^n\sim q_0(x^n)$ in this representation. Marginalizing $X^n$ from the following joint distribution recovers the actual distribution induced by the semantic encoder and decoder:
\begin{align}
&p(s^n,x^n,k_0,m_0,u^n,\tilde x^n)
\nonumber\\
&\quad=
q(s^n,x^n)\,2^{-nR_{c,0}}
p(m_0|x^n,k_0)
\mathbf 1\{u^n=u^n(m_0,k_0)\}
W_0(\tilde x^n|u^n).
\label{eq:fm-induced-law}
\end{align}

\begin{definition}[Full-Model Transferability]
\label{def:full-model-transfer}
A compatible RDP code sequence $\mathsf C_0$ is said to be \emph{full-model transferable} for the matched-marginal transfer task $q(s,x)$ if
\begin{align}
    \left\|
    p(s^n,\tilde x^n)
    -\prod_{i=1}^n q(s_i,\tilde x_i)
    \right\|_{\TV}
    &\longrightarrow0,
    \label{eq:fm-criterion}
\end{align}
where $p(s^n,\tilde x^n)$ is the marginal of~\eqref{eq:fm-induced-law}.
\end{definition}
\endgroup

\begin{remark}[RDP Factorization Versus Full Test-Channel Synthesis]
The single-letter test channel $T_0$ describes the pretrained RDP factorization, but the actual pretrained block code is not required to implement $T_0$ independently across coordinates. The strong RDP criterion requires the output distribution to approach $q_0(x^n)$ and constrains the average distortion. It does not require
\begin{align}
    p(x^n,\tilde x^n)
    \approx
    \prod_{i=1}^n q_0(x_i)T_0(\tilde x_i|x_i).
\end{align}
If this stronger requirement held, the induced semantic output distribution would approach
\begin{align}
    \prod_{i=1}^n q(s_i)
    \sum_{x_i}q(x_i|s_i)T_0(\tilde x_i|x_i).
\end{align}
The single-letter condition $\sum_x q(x|s)T_0(\tilde x|x)=q(\tilde x|s)$ would then be necessary for Full-Model Transfer. Without this stronger requirement, however, failure of the single-letter condition does not by itself establish that the actual block code fails to transfer the task.

Theorem~\ref{thm:operational-transfer} shows that this single-letter condition is sufficient for every compatible pretrained block code and gives a converse bound on the semantic dependence that any transferable task can retain.
\end{remark}

Representation-Induced Transfer provides the nominal single-letter prediction associated with $\baseG$. Section~\ref{sec:full-model} studies the actual frozen block implementation, {which is guaranteed to preserve the nominal recurrent-class invariant but may possess additional transfer capability depending on its {compression rate} and operational structure.}

The four systems can therefore be summarized as follows:
\begin{center}
\begin{tabular}{@{}lll@{}}
\toprule
System & Fixed Component(s) & Redesigned for the Task \\
\midrule
Full Redesign & None & Synthesis Auxiliary and Generator \\
Renderer-Only Transfer & $W_0$ & Semantic Synthesis and Latent Steering \\
Representation-Induced Transfer & $q_0(u|x)$ and $W_0$ & Semantic Code for the Induced Latent Law \\
Full-Model Transfer & $\mathsf C_0$ & {None (Induced Encoder~\eqref{eq:fm-encoder})} \\
\bottomrule
\end{tabular}
\end{center}

\section{Main Results}
\label{sec:main-results}

The results are stated in order of increasing transfer restriction.  We first ask what the inherited renderer can accomplish when its semantic steering is redesigned under Renderer-Only Transfer.  We then retain the representation-induced latent statistics under Representation-Induced Transfer and identify the tasks that already transfer without task-specific latent adaptation.  Finally, we retain an actual operational RDP code, identify a task-independent transferable family, and bound any additional semantic dependence that a particular implementation can preserve.

\subsection{Renderer-Only Transfer}
\label{sec:adaptive}

We begin with the least restrictive transfer architecture. Only $W_0$ is retained from the base RDP system; the semantic encoder, synthesis auxiliary, and condition-to-latent mapping may all be redesigned. Comparing Renderer-Only Transfer with Full Redesign isolates the cost of retaining the generator itself before asking whether the original representation law should also be transferred.

\subsubsection{Exact Transfer Region}

The first theorem keeps the usual channel-synthesis architecture intact and changes only the set of decoder channels that are admissible. To make this restriction explicit, define the renderer output hull
\begin{align}
    \Cren
    &:=\conv\{W_0(x|u):u\in\Uc\}
    \subseteq\Pc(\Xc).
    \label{eq:renderer-hull}
\end{align}
The following reduction will be used repeatedly.

\begin{lemma}[Decoder-Side Reduction and Renderer-Hull Characterization]
\label{lem:decoder-side-reduction}
Fix a renderer $W_0$. For any channel $p(u|v)$, define
\begin{align}
    \overline W_0(x|v)
    &:=\sum_up(u|v)W_0(x|u).
    \label{eq:effective-renderer}
\end{align}
Marginalizing $U$ from $p(s,v,u,x)=q(s)p(v|s)p(u|v)W_0(x|u)$ gives $p(s,v,x)=q(s)p(v|s)\overline W_0(x|v)$. Moreover, a channel $p(x|v)$ admits a factorization
\begin{align}
    p(x|v)
    &=\sum_up(u|v)W_0(x|u)
    \label{eq:decoder-factorization}
\end{align}
for some $p(u|v)$ if and only if
\begin{align}
    p(x|v)
    &\in\Cren,
    \qquad v\in\suppo(p(v)).
    \label{eq:decoder-row-hull}
\end{align}
\end{lemma}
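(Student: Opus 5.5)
The lemma has two parts, and I will handle them in order: first the marginalization identity, then the equivalence between factorization through $W_0$ and membership of each row in the renderer hull.

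\emph{Marginalization.} Summing $p(s,v,u,x)=q(s)p(v|s)p(u|v)W_0(x|u)$ over $u\in\Uc$ pulls out $q(s)p(v|s)$, because neither factor depends on $u$. What remains is $\sum_u p(u|v)W_0(x|u)=\overline W_0(x|v)$ by~\eqref{eq:effective-renderer}. As a consequence, $S-V-X$ holds with decoder $\overline W_0$.

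\emph{Necessity.} Suppose $p(x|v)=\sum_u p(u|v)W_0(x|u)$. For each $v$, $\{p(u|v)\}_{u\in\Uc}$ is a probability vector. The row $p(\cdot|v)$ is therefore a convex combination of the finitely many points $W_0(\cdot|u)$, so it lies in $\Cren$. This holds for every $v$, and in particular for every $v$ in the support.

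\emph{Sufficiency.} Suppose $p(\cdot|v)\in\Cren$ for every $v\in\suppo(p(v))$. By definition of the convex hull of a finite set, there are weights $\lambda_v(u)\ge0$ with $\sum_u\lambda_v(u)=1$ and $p(\cdot|v)=\sum_u\lambda_v(u)W_0(\cdot|u)$. For these $v$, set $p(u|v):=\lambda_v(u)$. Since $\Uc$ is finite, no closure issues arise: the hull of finitely many points is already closed, and Carathéodory is not even needed.

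The only delicate point is the $v$ outside the support. There the conditional $p(x|v)$ is either undefined or irrelevant to the joint law, and the factorization is only needed to hold on the support. If an identity for all $v$ is wanted, I will redefine $p(x|v)$ off the support as $W_0(x|u^\ast)$ for a fixed $u^\ast$ and take $p(u|v)=\delta_{u^\ast}(u)$. This changes nothing in the joint distribution $p(v,x)$. I will state explicitly that the conditionals are understood up to this modification on null sets of $v$.

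No step poses a real obstacle; the argument is routine bookkeeping.
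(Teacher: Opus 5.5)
Your proof is correct and matches the paper's argument essentially step for step: marginalize $U$; observe that each factored row is a convex combination of the rows of $W_0$; and conversely take the convex-hull weights on $\suppo(p(v))$ as $p(u|v)$. Your treatment of $v$ outside the support is a little more explicit than the paper's remark that those values may be chosen arbitrarily, but it amounts to the same thing.
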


\begin{proof}
The first statement is immediate from marginalizing $U$. Hence both factorizations induce the same $p(s,v,x)$ and, in particular, the same $I(S;V)$ and $I(S,X;V)$.

For the second, if~\eqref{eq:decoder-factorization} holds, then each $p(x|v)$ is a convex combination of the rows of $W_0$ and therefore belongs to $\Cren$. For the reverse implication, suppose~\eqref{eq:decoder-row-hull} holds. By the definition of the convex hull, for every $v\in\suppo(p(v))$ there exist coefficients $\lambda(u|v)\ge0$ with $\sum_u\lambda(u|v)=1$ such that $p(x|v) =\sum_u\lambda(u|v)W_0(x|u)$. Defining $p(u|v):=\lambda(u|v)$ gives~\eqref{eq:decoder-factorization}; its values outside $\suppo(p(v))$ may be chosen arbitrarily.
\end{proof}

Define
\begin{align}
    \mathscr P_{W_0}(q)
    &:=\left\{
    p(s,v,u,x):
    \begin{array}{l}
        p(s,x)=q(s,x),\\
        S-V-U-X,\quad p(x|u)=W_0(x|u)
    \end{array}
    \right\}.
    \label{eq:reuse-family}
\end{align}
Here $p(s,v,u,x)$ is a joint pmf.

\begin{theorem}[Renderer-Only Transfer Region]
\label{thm:reuse-region}
For finite alphabets,
\begingroup
\begin{align}
    \mathscr R_{\ro}(q|W_0)
    &=\bigcup_{p\in\mathscr P_{W_0}(q)}
    \left\{
    (R,R_c):
    \begin{aligned}
        R     &\ge I(S;V),\\
        R+R_c &\ge I(S,X;V)
    \end{aligned}
    \right\}.
    \label{eq:reuse-region}
\end{align}
\endgroup
It suffices to take $|\Vc|\le |\Sc||\Xc|+1$.
\end{theorem}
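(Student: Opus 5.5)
We prove the two inclusions separately, following Cuff's channel-synthesis argument for Theorem~\ref{thm:cuff-full} and changing only the admissible decoder channels.

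\emph{Achievability.} Fix $p\in\mathscr P_{W_0}(q)$ and a rate pair with $R>I(S;V)$ and $R+R_c>I(S,X;V)$.
\begin{enumerate}[label=\arabic*)]
\item Generate a random codebook $\{v^n(m,k)\}$ i.i.d.\ according to $p(v)$, and use the likelihood encoder that selects $m$ with probability proportional to $p(s^n|v^n(m,k))$.
\item At the decoder, pass $V^n$ through the memoryless channel $p(u|v)$ to produce $U^n$, and then through $W_0$. This is a legitimate latent-generation decoder $p(u^n|m,k)=\prod_ip(u_i|v_i(m,k))$ of the form~\eqref{eq:reuse-code-kernels}.
\item The composed output channel is $\overline W_0(x|v)=p(x|v)$ by Lemma~\ref{lem:decoder-side-reduction}.
\item Cuff's soft-covering argument for the Markov chain $S-V-X$ now gives~\eqref{eq:reuse-criterion} directly. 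Its first stage uses the sum-rate condition to show that the ideal distribution, with $(M,K)$ uniform and $(S^n,X^n)$ drawn through $p(s|v)p(x|v)$, is close to the product law; its second stage uses the condition $R>I(S;V)$ to make the $S^n$-marginal nearly independent of $K$.
\item Closure of the achievable set then yields the boundary points.
\end{enumerate}

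\emph{Converse.} Take a code satisfying~\eqref{eq:reuse-criterion}.
\begin{enumerate}[label=\arabic*)]
\item Follow Cuff's converse. Introduce a time-sharing index $J\sim\U[1:n]$ and set $V:=(M,K,J)$, together with $S:=S_J$, $U:=U_J$, $X:=\hat X_J$.
\item The standard bounds give $nR\ge I(S^n;M,K)\ge\sum_i I(S_i;M,K)-n\epsilon_n$. Here the independence of $K$ and $S^n$ is used, together with near-i.i.d.\ behaviour of $S^n$, which follows from TV closeness and continuity of entropy on finite alphabets.
\item Similarly, $n(R+R_c)\ge I(S^n,\hat X^n;M,K)\ge \sum_i I(S_i,\hat X_i;M,K,J)-n\epsilon_n$. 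This uses that $(S_J,\hat X_J)$ is nearly independent of $J$ and nearly i.i.d.\ across coordinates.
\item The structural fact needed is the Markov chain $S_i-(M,K)-U_i-\hat X_i$ with $p(\hat x_i|u_i)=W_0$. It holds because, given $(M,K)$, the latent $U^n$ is generated without access to $S^n$ and then rendered symbolwise. Hence $S-V-U-X$ with $p(x|u)=W_0(x|u)$, exactly; this is precisely where the renderer constraint enters.
\end{enumerate}

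The only approximation left is that $p(s,x)$ under $J$ equals $q(s,x)$ only up to $o(1)$ in TV. We therefore obtain laws in a relaxed family $\mathscr P_{W_0}^{\epsilon}(q)$ and must pass to the limit. After cardinality reduction, the set of laws $p(s,v,u,x)$ with $S-V-U-X$, $p(x|u)=W_0$ and bounded $|\Vc|$ is compact. The mutual informations are continuous on it, and the constraint $p(s,x)=q(s,x)$ is closed. A limit point therefore lies in $\mathscr P_{W_0}(q)$, and the rate inequalities persist.

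\emph{Cardinality.} Reduce $|\Vc|$ before taking limits. Via Lemma~\ref{lem:decoder-side-reduction}, parametrize the problem by $p(v)$ and the rows $p(s,x|v)=p(s|v)p(x|v)$ with $p(x|v)\in\Cren$. Apply the support lemma (Fenchel--Eggleston--Carath\'eodory) to the connected compact set $\Pc(\Sc)\times\Cren$. It suffices to preserve $q(s,x)$, which takes $|\Sc||\Xc|-1$ functionals, together with $H(S|V)$ and $H(S,X|V)$. This gives $|\Sc||\Xc|+1$.

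Membership of each $p(x|v)$ in $\Cren$ is preserved, since the new $V$-values are chosen among the old rows. Lemma~\ref{lem:decoder-side-reduction} then re-inserts $U$.

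The main obstacle is the converse's handling of the approximate i.i.d.\ structure, namely bounding $I(S^n,\hat X^n;M,K)$ below by the single-letter sum under only TV closeness. The plan is to cite Cuff's Lemma on TV-close distributions (entropy continuity together with the near-independence of $J$) rather than redo it. The renderer restriction itself adds only the exact Markov/kernel constraint and the compactness step.
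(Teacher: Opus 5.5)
Your proposal is correct and follows essentially the same route as the paper. For achievability, you use Cuff's synthesis scheme with the decoder channel factored as $p(u|v)$ followed by $W_0$. For the converse, you use Cuff's single-letterization with $V=(M,K,Q)$, the exact factorization $p(s,u,x|v)=p(s|v)p(u|v)W_0(x|u)$, Cuff's Lemma~VI.3 for the sum rate, and cardinality reduction before the compactness limit.

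The differences are cosmetic. Since $S^n\sim q(s^n)$ is exactly i.i.d., the bound $nR\ge\sum_i I(S_i;M,K)$ holds without the $-n\epsilon_n$ term. Your support-lemma argument on $\Pc(\Sc)\times\Cren$, with $U$ re-inserted via Lemma~\ref{lem:decoder-side-reduction}, spells out what the paper states in one line.
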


The proof is given in Appendix~\ref{app:reuse-region}.
Theorem~\ref{thm:reuse-region} shows that retaining $W_0$ introduces no new type of {coding resource}: it restricts only which decoder channels may be used in Cuff's synthesis region. Equivalently, every conditional decoder row selected by the synthesis auxiliary must lie in the renderer hull $\Cren$. This observation separates two distinct limitations of transfer: a target may lie outside the renderer's range altogether, or it may be realizable only through a less rate-efficient factorization.

\subsubsection{Semantic Controllability}

Before comparing rates, it is useful to separate a more basic question: can the fixed renderer generate the requested conditional laws at any rate? The answer is row-wise and geometric.

\begin{corollary}[Exact Controllability]
\label{cor:controllability}
The following are equivalent:
\begin{enumerate}[label=(\roman*)]

    \item $\mathscr R_{\ro}(q|W_0)\ne\emptyset$;

    \item there exists $p(u|s)$ such that
    \begin{align}
        q(x|s)
        &=\sum_up(u|s)W_0(x|u);
        \label{eq:reuse-factorization-feasible}
    \end{align}

    \item for every $s\in\suppo(q(s))$,
    \begin{align}
        q(x|s)
        &\in\Cren.
        \label{eq:row-hull-condition}
    \end{align}
\end{enumerate}
\end{corollary}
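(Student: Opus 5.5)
The plan is to prove the cycle (i)$\Rightarrow$(ii)$\Rightarrow$(iii)$\Rightarrow$(i), relying on Theorem~\ref{thm:reuse-region} and Lemma~\ref{lem:decoder-side-reduction}.

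\emph{(i)$\Rightarrow$(ii).} If $\mathscr R_{\ro}(q|W_0)\ne\emptyset$, then Theorem~\ref{thm:reuse-region} provides some $p\in\mathscr P_{W_0}(q)$. Under this law we have $S-V-U-X$, $p(x|u)=W_0(x|u)$, and $p(s,x)=q(s,x)$. Define $p(u|s):=\sum_v p(v|s)p(u|v)$ for $s\in\suppo(q(s))$. For $s$ outside the support, take any distribution on $\Uc$. The Markov chain then gives
\[
q(x|s)=\sum_u p(u|s)W_0(x|u)
\]
on the support of $q(s)$. Requiring \eqref{eq:reuse-factorization-feasible} only on that support is the natural reading, since $q(x|s)$ is undefined elsewhere. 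I will state this convention explicitly.

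\emph{(ii)$\Rightarrow$(iii).} For each $s$, the right-hand side of \eqref{eq:reuse-factorization-feasible} is a convex combination of the rows $W_0(\cdot|u)$. It therefore lies in $\Cren$.

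\emph{(iii)$\Rightarrow$(i).} Take $V=S$, meaning $p(v|s)=\delta_s(v)$ with $\Vc=\Sc$. Lemma~\ref{lem:decoder-side-reduction}, applied with the channel $p(x|v)=q(x|v)$ and using \eqref{eq:row-hull-condition}, supplies a $p(u|v)$ satisfying $\sum_u p(u|v)W_0(x|u)=q(x|v)$ for every $v\in\suppo(p(v))=\suppo(q(s))$. The resulting law $q(s)\delta_s(v)p(u|v)W_0(x|u)$ has the following properties:
\begin{itemize}
\item it satisfies $S-V-U-X$ trivially, since $V$ is a deterministic copy of $S$;
\item it has $p(x|u)=W_0(x|u)$ on the support of $U$;
\item its $(S,X)$-marginal equals $q(s,x)$.
\end{itemize}
Hence it lies in $\mathscr P_{W_0}(q)$. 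Theorem~\ref{thm:reuse-region} then shows that $(R,R_c)=(H(S),H(S,X))$, or any larger pair, is achievable, so the region is nonempty.

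The cardinality bound $|\Vc|\le|\Sc||\Xc|+1$ is not needed here, because Theorem~\ref{thm:reuse-region} characterizes the region as a union over all of $\mathscr P_{W_0}(q)$. In any case $|\Sc|\le|\Sc||\Xc|+1$, so the choice $V=S$ meets the bound.

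The only delicate point is the technicality of defining $p(x|u)$ for latent values $u$ that occur with zero probability. The family \eqref{eq:reuse-family} should be read as imposing $p(x|u)=W_0(x|u)$ through the factorization $p(s,v,u,x)=p(s,v,u)W_0(x|u)$, and I will use that factorized form throughout. I do not see any other obstacle; the corollary is essentially a direct consequence of Theorem~\ref{thm:reuse-region} combined with the row-wise hull criterion of Lemma~\ref{lem:decoder-side-reduction}.
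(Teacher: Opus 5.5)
Your proposal is correct and follows the paper's argument. You prove (i)$\Rightarrow$(ii) by marginalizing $V$ from a law in $\mathscr P_{W_0}(q)$ supplied by Theorem~\ref{thm:reuse-region}, and the reverse direction by exhibiting an explicit member of $\mathscr P_{W_0}(q)$; the only cosmetic difference is that the paper uses $V=U$ for (ii)$\Rightarrow$(i) and treats (ii)$\Leftrightarrow$(iii) as immediate from the definition of $\Cren$, whereas you close the cycle with $V=S$ and the row-wise hull decomposition of Lemma~\ref{lem:decoder-side-reduction}, and either choice of $V$ works.
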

\begin{proof}
If $\mathscr R_{\ro}(q|W_0)\ne\emptyset$, Theorem~\ref{thm:reuse-region} gives a joint distribution $p(s,v,u,x)=q(s)p(v|s)p(u|v)W_0(x|u)$ with $(S,X)$ marginal $q(s,x)$. Marginalizing $V$ yields $p(u|s) =\sum_vp(v|s)p(u|v)$, and hence $q(x|s)=\sum_u p(u|s)W_0(x|u)$. If such a $p(u|s)$ exists, choosing $V=U$ in Theorem~\ref{thm:reuse-region} shows that the Renderer-Only Transfer region is nonempty. Finally, $(ii)\Longleftrightarrow(iii)$ is immediate from the definition of $\Cren$.
\end{proof}
We say that a transfer target distribution $q(s,x)$ is \emph{controllable through $W_0$} if it satisfies any of the equivalent conditions in Corollary~\ref{cor:controllability}. Operationally, condition (ii) provides a single-letter criterion for determining whether the prescribed transfer target distribution is achievable through the fixed renderer. Moreover, since the base RDP factorization satisfies $q_0(x) =\sum_uq_0(u)W_0(x|u)$, we always have $q_0(x)\in\Cren$. This guarantees that the renderer can reproduce the unconditional base law, but it does not ensure that it can reproduce each conditional law $q(x|s)$ whose mixture is $q_0(x)$. Thus strong unconditional realism and conditional controllability are distinct requirements.

\subsubsection{Price of Transfer Relative to Full Redesign}

When the target is controllable, the next question is how much performance is lost by retaining $W_0$. By Lemma~\ref{lem:decoder-side-reduction}, every Renderer-Only Transfer factorization also defines a valid Full Redesign factorization, and hence
\begin{align}
    \mathscr R_{\ro}(q|W_0)
    &\subseteq\mathscr R_{\full}(q).
    \label{eq:reuse-in-full}
\end{align}
Define the fixed-$R_c$ price of Renderer-Only Transfer by
\begin{align}
    \Deltaro(R_c;q)
    &:=\Rro(R_c;q,W_0)-\Rfull(R_c;q)
    \in[0,\infty].
    \label{eq:finite-price-reuse}
\end{align}
The value $+\infty$ represents lack of controllability; a finite positive value represents a {compression-rate penalty} despite exact feasibility.

\begin{proposition}[Zero Finite-$R_c$ Transfer Price]
\label{prop:finite-zero-price}
For finite alphabets, $\Deltaro(R_c;q)=0$ if and only if at least one optimizer of the Full Redesign boundary~\eqref{eq:full-boundary} can be chosen so that its conditional output distributions satisfy
\begin{align}
    p(x|v)
    &\in\Cren,
    \qquad v\in\suppo(p(v)).
    \label{eq:optimizer-in-hull}
\end{align}
\end{proposition}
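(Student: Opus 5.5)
By Theorem~\ref{thm:reuse-region} and Lemma~\ref{lem:decoder-side-reduction}, both boundaries are minimizations of the same objective
\[
f(p):=\max\{I(S;V),\,I(S,X;V)-R_c\},
\]
taken over two feasible sets of joint laws $p(s,v,x)$. For Full Redesign, Theorem~\ref{thm:cuff-full} uses $\mathcal D(q)$. For Renderer-Only Transfer, marginalizing $U$ from laws in $\mathscr P_{W_0}(q)$ gives the subfamily
\[
\mathcal D_{W_0}(q):=\bigl\{p\in\mathcal D(q):\ p(x|v)\in\Cren \text{ for all } v\in\suppo(p(v))\bigr\},
\]
with $|\Vc|\le|\Sc||\Xc|+1$. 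Lemma~\ref{lem:decoder-side-reduction} makes this identification exact in both directions: any $p(s,v,u,x)\in\mathscr P_{W_0}(q)$ marginalizes to a law with rows $\overline W_0(x|v)\in\Cren$, and any such law lifts back to $\mathscr P_{W_0}(q)$ with the same information quantities. So the first step is to write
\[
\Rro(R_c;q,W_0)=\inf_{p\in\mathcal D_{W_0}(q)} f(p),\qquad \Rfull(R_c;q)=\min_{p\in\mathcal D(q)} f(p).
\]

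The second step is to check that the region characterization yields this boundary formula. The region in Theorem~\ref{thm:reuse-region} is a union of sets $\{R\ge I(S;V),\ R+R_c\ge I(S,X;V)\}$. For fixed $R_c$, the smallest admissible $R$ in the set indexed by $p$ is exactly $f(p)$, so the infimum over the union is $\inf_{p\in\mathcal D_{W_0}(q)}f(p)$. I will also argue that this infimum is attained, which needs compactness of $\mathcal D_{W_0}(q)$.
\begin{itemize}
\item With the cardinality bound fixed, $\mathcal D_{W_0}(q)$ is the image under marginalization of the set of $(p(v|s),p(u|v))$. That set is a product of simplices, hence compact.
\item The constraint $p(s,x)=q(s,x)$ is continuous in these parameters, so it cuts out a closed subset. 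Its continuous image is therefore compact.
\item Mutual information is continuous on finite alphabets, so $f$ attains its minimum on this compact set.
\end{itemize}
This parametrization avoids the awkwardness of the support-dependent row condition in the definition of $\mathcal D_{W_0}(q)$. The same argument shows the Full Redesign minimum is attained.

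With both minima attained, the equivalence follows directly.
\begin{itemize}
\item \emph{If direction.} Suppose a Full Redesign optimizer $p^\star$ satisfies~\eqref{eq:optimizer-in-hull}. Then $p^\star\in\mathcal D_{W_0}(q)$, so $\Rro\le f(p^\star)=\Rfull$. Combined with the inclusion~\eqref{eq:reuse-in-full}, this gives $\Deltaro=0$.
\item \emph{Only-if direction.} Suppose $\Deltaro=0$. Let $p^\star$ minimize $f$ over $\mathcal D_{W_0}(q)$. Then $f(p^\star)=\Rro=\Rfull$. Since $\mathcal D_{W_0}(q)\subseteq\mathcal D(q)$, $p^\star$ is also an optimizer of~\eqref{eq:full-boundary}, and its rows lie in $\Cren$.
\end{itemize}
The degenerate case is when $\mathcal D_{W_0}(q)$ is empty. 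Then $\Rro=\infty$, so $\Deltaro=\infty\neq0$, and no optimizer can satisfy~\eqref{eq:optimizer-in-hull}; the statement holds vacuously.

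The main obstacle is matching the cardinality bounds. A Full Redesign optimizer satisfying~\eqref{eq:optimizer-in-hull} must lie in the Renderer-Only feasible set with the same bound on $|\Vc|$. This holds here because both Theorem~\ref{thm:cuff-full} and Theorem~\ref{thm:reuse-region} use $|\Vc|\le|\Sc||\Xc|+1$. Even without matching bounds, any $p\in\mathcal D(q)$ with rows in $\Cren$ lifts to $\mathscr P_{W_0}(q)$ by Lemma~\ref{lem:decoder-side-reduction}, since Theorem~\ref{thm:reuse-region} places no bound on $|\Vc|$ in the union itself. The bound only limits which auxiliaries suffice, so the argument goes through either way.
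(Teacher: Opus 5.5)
Your proof is correct and follows the same route as the paper: the if direction lifts a hull-compatible Full Redesign optimizer through Lemma~\ref{lem:decoder-side-reduction}, and the only-if direction marginalizes $U$ from a Renderer-Only optimizer to obtain a Full Redesign optimizer whose rows lie in $\Cren$. Your compactness argument for attainment and your treatment of the empty case just make explicit what the paper leaves implicit in the phrase ``for finite alphabets, choose an optimizer.''
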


\begin{proof}
If a Full Redesign optimizer satisfies $p(x|v)\in\Cren$ for every $v\in\suppo(p(v))$, then Lemma~\ref{lem:decoder-side-reduction} factors its decoder channel through $W_0$ without changing $I(S;V)$ or $I(S,X;V)$, and hence the same boundary point is achievable under Renderer-Only Transfer.

For the reverse implication, suppose $\Deltaro(R_c;q)=0$. For finite alphabets, choose an optimizer of the Renderer-Only Transfer boundary. Its factorization $S-V-U-X$ satisfies
\begin{align}
    p(x|v)
    &=\sum_up(u|v)W_0(x|u)
    \in\Cren,
    \qquad v\in\suppo(p(v)).
\end{align}
After marginalizing $U$, the resulting $S-V-X$ factorization is feasible for Full Redesign and achieves the same boundary value. Since the two boundary values are equal, this factorization is also an optimizer of the Full Redesign problem.
\end{proof}

Proposition~\ref{prop:finite-zero-price} shows that zero transfer price at a given $R_c$ does not require $W_0$ to implement every possible decoder channel. It is enough that one Full Redesign optimizer at that operating point uses only decoder distributions already contained in $\Cren$.

\subsubsection{Unlimited Common Randomness}

\begin{theorem}[Price of Renderer-Only Transfer With Unlimited Common Randomness]
\label{thm:unlimited-reuse}
If $q(s,x)$ is controllable through $W_0$, then
\begin{align}
    R_{\ro}^{\infty}(q|W_0)
    &:=\lim_{R_c\to\infty}\Rro(R_c;q,W_0)
    \nonumber\\
    &=\min_{p(u|s):\,\sum_u p(u|s)W_0(x|u)=q(x|s)}I(S;U).
    \label{eq:ro-unlimited-rate}
\end{align}
Moreover, relative to the Full Redesign benchmark
\begin{align}
    R_{\full}^{\infty}(q)
    &:=\lim_{R_c\to\infty}\Rfull(R_c;q) =I(S;X),
\end{align}
{the compression-rate penalty} of Renderer-Only Transfer is
\begin{align}
    \Gammaro(q)
    &:=R_{\ro}^{\infty}(q|W_0)-R_{\full}^{\infty}(q)
    \nonumber\\
    &=\min_{p(u|s):\,\sum_u p(u|s)W_0(x|u)=q(x|s)}I(S;U|X).
    \label{eq:gamma-definition}
\end{align}
If the target is not controllable through $W_0$, we set $R_{\ro}^{\infty}(q|W_0)=\Gammaro(q)=+\infty$.
\end{theorem}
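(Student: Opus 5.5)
Starting from Theorem~\ref{thm:reuse-region}, I would take the limit $R_c\to\infty$. For fixed $p\in\mathscr P_{W_0}(q)$, the sum-rate constraint $R+R_c\ge I(S,X;V)$ stops binding once $R_c\ge I(S,X;V)$. Since $|\Vc|\le|\Sc||\Xc|+1$, the quantity $I(S,X;V)$ is bounded uniformly by $\log(|\Sc||\Xc|+1)$. So for all $R_c$ beyond this bound, $\Rro(R_c;q,W_0)$ equals
\begin{align}
    \min_{p\in\mathscr P_{W_0}(q),\,|\Vc|\le|\Sc||\Xc|+1} I(S;V).
\end{align}
The minimum is attained because the feasible set is compact with finite alphabets (it is nonempty by controllability, Corollary~\ref{cor:controllability}). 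Hence the limit exists and equals this value.

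The next step reduces the auxiliary $V$ to $U$. For any $p\in\mathscr P_{W_0}(q)$, the chain $S-V-U$ gives $I(S;V)\ge I(S;U)$ by data processing. Moreover, marginalizing $V$ yields a $p(u|s)$ with $\sum_u p(u|s)W_0(x|u)=q(x|s)$, as in the proof of Corollary~\ref{cor:controllability}. Conversely, any such $p(u|s)$ is realized by choosing $V=U$, and this choice achieves $I(S;V)=I(S;U)$.

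One technicality is that $|\Uc|$ may exceed the cardinality bound on $\Vc$. I would handle it by noting that the limit of $\Rro$ is taken over the region itself, which is the union over all admissible $p$; the cardinality bound is only a sufficiency statement and does not restrict that union. Taking $V=U$ directly is therefore legitimate. This gives~\eqref{eq:ro-unlimited-rate}, where the minimum over the compact polytope of feasible $p(u|s)$ is attained by continuity.

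For the penalty, I use~\eqref{eq:full-mutual}, which gives $R_{\full}^\infty(q)=I(S;X)$. Under any feasible $p(u|s)$, the joint law $q(s)p(u|s)W_0(x|u)$ satisfies $S-U-X$ and has $(S,X)$ marginal $q(s,x)$. The chain rule then gives
\begin{align}
    I(S;U)=I(S;U,X)=I(S;X)+I(S;U|X).
\end{align}
Here $I(S;X)$ is fixed by $q$ and does not depend on the choice of $p(u|s)$. Subtracting $I(S;X)$ and minimizing therefore yields~\eqref{eq:gamma-definition}. The non-controllable case is immediate from Corollary~\ref{cor:controllability}, since the region is then empty and both quantities are $+\infty$ by convention.

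The main obstacle I expect is the exchange of limit and infimum, i.e., justifying that $\lim_{R_c\to\infty}\Rro$ equals the minimum of $I(S;V)$. I would settle it using the uniform bound on $I(S,X;V)$, which comes either from the cardinality bound or from $I(S,X;V)\le\log|\Sc||\Xc|$.
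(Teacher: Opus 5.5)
Your proposal is correct and follows the paper's proof essentially step for step: you drop the sum-rate constraint of Theorem~\ref{thm:reuse-region}, reduce $V$ to $U$ via data processing together with the choice $V=U$, and obtain the penalty from $I(S;U)=I(S;X)+I(S;U|X)$ under $S-U-X$. Your uniform bound $I(S,X;V)\le\log|\Sc||\Xc|$ and your compactness and cardinality remarks make rigorous what the paper states more briefly as ``the sum-rate constraint becomes inactive.''
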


\begin{proof}
Under unlimited common randomness, the sum-rate constraint in Theorem~\ref{thm:reuse-region} becomes inactive, and therefore
\begin{align}
    R_{\ro}^{\infty}(q|W_0)
    &=\min_{p\in\mathscr P_{W_0}(q)}I(S;V).
    \label{eq:unlimited-ro-v-form}
\end{align}
We first lower-bound this expression. Any feasible factorization $S-V-U-X$ induces a channel $p(u|s)$ satisfying $q(x|s)=\sum_u p(u|s)W_0(x|u)$. Moreover, $S-V-U$ and the data-processing inequality give $I(S;V) \ge I(S;U)$. Hence
\begin{align}
    R_{\ro}^{\infty}(q|W_0)
    &\ge\min_{p(u|s):\,\sum_u p(u|s)W_0(x|u)=q(x|s)}I(S;U).
\end{align}
For the reverse inequality, take any feasible $p(u|s)$ and choose $V=U$ in Theorem~\ref{thm:reuse-region}. This gives
\begin{align}
    R_{\ro}^{\infty}(q|W_0)
    &\le\min_{p(u|s):\,\sum_u p(u|s)W_0(x|u)=q(x|s)}I(S;U).
\end{align}
Combining the two inequalities proves~\eqref{eq:ro-unlimited-rate}.

We next compare Renderer-Only Transfer with Full Redesign. For every feasible $p(u|s)$, the induced law satisfies $S-U-X$ and has $(S,X)$-marginal $q(s,x)$. Therefore,
\begin{align}
    I(S;U)
    &=I(S;X)+I(S;U|X)-I(S;X|U)
    \nonumber\\
    &=I(S;X)+I(S;U|X),
\end{align}
where the second equality follows from $S-U-X$. Since $I(S;X)$ is fixed by the transfer target distribution,
\begin{align}
    R_{\ro}^{\infty}(q|W_0)
    &=I(S;X)
    +\min_{p(u|s):\,\sum_u p(u|s)W_0(x|u)=q(x|s)}I(S;U|X).
\end{align}
Subtracting $R_{\full}^{\infty}(q)=I(S;X)$ proves~\eqref{eq:gamma-definition}.
\end{proof}

Theorem~\ref{thm:unlimited-reuse} identifies the price of retaining the renderer with the least amount of semantic information that its input $U$ must carry beyond what is visible in the generated output $X$. Full Redesign needs only the dependence already present in $(S,X)$, whereas the inherited renderer may force the synthesis code to distinguish latent states that become indistinguishable after rendering.

\begin{corollary}[Task-Specific Zero-Price Renderer-Only Transfer]
\label{cor:task-zero-price}
For a controllable target, $\Gammaro(q)=0$ if and only if there exists a feasible latent $U$ satisfying both
\begin{align}
    S-U-X
    \qquad\text{and}\qquad
    S-X-U.
    \label{eq:double-markov-zero-price}
\end{align}
\end{corollary}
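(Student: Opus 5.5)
The plan is to read Corollary~\ref{cor:task-zero-price} directly off the formula~\eqref{eq:gamma-definition} in Theorem~\ref{thm:unlimited-reuse}, which gives, for a controllable target,
\begin{align}
    \Gammaro(q)=\min_{p(u|s):\,\sum_u p(u|s)W_0(x|u)=q(x|s)}I(S;U|X).
\end{align}
Throughout, a \emph{feasible latent} means a random variable $U$ with joint law $p(s,u,x)=q(s)p(u|s)W_0(x|u)$, where $p(u|s)$ satisfies the constraint in~\eqref{eq:gamma-definition}. Every such law automatically satisfies $S-U-X$ and has $(S,X)$-marginal $q(s,x)$. Consequently, the corollary reduces to showing that the minimum above equals zero exactly when some feasible $U$ also satisfies $S-X-U$.

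First, I will check that the minimum is attained, so that ``$\Gammaro(q)=0$'' is equivalent to ``some feasible $p(u|s)$ has $I(S;U|X)=0$.'' The feasible set is the set of row-stochastic matrices $p(u|s)$, indexed by $s\in\suppo(q(s))$, that satisfy finitely many linear equalities. This set is compact, since the alphabets are finite. It is nonempty, by controllability and Corollary~\ref{cor:controllability}(ii). Next, $I(S;U|X)$ is a continuous function of $p(u|s)$: the joint law $q(s)p(u|s)W_0(x|u)$ depends continuously, indeed linearly, on $p(u|s)$, and conditional mutual information is continuous in the joint pmf on finite alphabets. Hence the minimum is achieved. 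The paper already writes a min in~\eqref{eq:gamma-definition}, but I will make this point explicit.

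With attainment in hand, both directions are short. For $(\Rightarrow)$: if $\Gammaro(q)=0$, take a minimizer $p^\star(u|s)$. Under its induced law, $I(S;U|X)=0$, which is equivalent to $S-X-U$, and $S-U-X$ holds by construction. For $(\Leftarrow)$: if a feasible $U$ satisfies $S-X-U$, then $I(S;U|X)=0$ for that $U$. Since $\Gammaro(q)\ge 0$ by nonnegativity of conditional mutual information, it follows that $\Gammaro(q)=0$.

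The only step needing care is attainment, that is, the compactness and continuity argument above. One subtlety is that the latent alphabet $\Uc$ is fixed to be the renderer input set and is finite, so no cardinality bound is required. Rows $p(u|s)$ with $q(s)=0$ are irrelevant and can be fixed arbitrarily.
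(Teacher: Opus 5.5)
Your proposal is correct and follows the paper's argument. The paper likewise reads the corollary off \eqref{eq:gamma-definition}, using two facts: every feasible $p(u|s)$ yields $S-U-X$, and $I(S;U|X)=0$ holds if and only if $S-X-U$. Your explicit compactness--continuity check that the minimum is attained is a sound addition that the paper leaves implicit by writing $\min$.
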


The proof is immediate from the feasibility condition $S-U-X$ and the equivalence $I(S;U|X)=0\Longleftrightarrow S-X-U$. Thus zero Renderer-Only Transfer price occurs when it is sufficient for $U$ not to carry extra semantic information about $S$ beyond what is already visible in $X$.

\subsubsection{Universal No-Loss Renderers}

A matched-marginal transfer task is any $q(s,x)$ with $q(x)=q_0(x)$. We say that $W_0$ is \emph{universally controllable over matched-marginal transfer targets} if every target in this class is controllable through $W_0$. {For the base source, write $\Xsupp:=\suppo(q_0(x))$.} The next theorem characterizes this property.

\begin{theorem}[Universal Matched-Marginal No-Loss Renderer-Only Transfer]
\label{thm:universal-no-loss}
The following are equivalent:
\begin{enumerate}[label=(\roman*)]

    \item every matched-marginal transfer target is controllable through $W_0$, in the sense that, for every finite semantic alphabet $\Sc$ and every $q(s,x)$ satisfying $q(x)=q_0(x)$, there exists a channel $p(u|s)$ such that $q(x|s) =\sum_up(u|s)W_0(x|u)$;
    \item $\Pc(\Xsupp) \subseteq\Cren$;
    \item for every $x\in\Xsupp$, there exists $u_x\in\Uc$ such that
    \begin{align}
        W_0(x'|u_x)
        &=\delta_x(x');
        \label{eq:point-mass-row}
    \end{align}

    \item for every matched-marginal transfer target $q$,
    \begin{align}
        \mathscr R_{\ro}(q|W_0)
        &=\mathscr R_{\full}(q).
        \label{eq:universal-region-equality}
    \end{align}
\end{enumerate}
\end{theorem}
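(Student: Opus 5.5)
The plan is to prove the cycle (i)$\Rightarrow$(iii)$\Rightarrow$(ii)$\Rightarrow$(i), and then attach (iv) through (ii)$\Rightarrow$(iv)$\Rightarrow$(i). One fact is used repeatedly. For any joint law with $p(x)=q_0(x)$, every conditional row $p(x|s)$ or $p(x|v)$ with $s$ or $v$ in the support is supported on $\Xsupp$. This holds because $p(v)p(x|v)\le p(x)=q_0(x)$, so $p(x|v)=0$ whenever $q_0(x)=0$, and the same bound applies with $s$ in place of $v$.

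For (i)$\Rightarrow$(iii), I apply (i) to the identity task. Take $\Sc:=\Xsupp$, which is finite, and $q(s,x):=q_0(s)\,\mathbf 1\{x=s\}$. This task has $q(x)=q_0(x)$, and for every $s\in\suppo(q(s))$ we have $q(x|s)=\delta_s$. Corollary~\ref{cor:controllability} then gives $\delta_x\in\Cren$ for every $x\in\Xsupp$. The step that needs care is converting hull membership into the existence of an actual row. The point mass $\delta_x$ is an extreme point of $\Pc(\Xc)$. Suppose $\delta_x=\sum_u\lambda_uW_0(\cdot|u)$ with $\lambda_u\ge0$ and $\sum_u\lambda_u=1$. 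Then every row with $\lambda_u>0$ satisfies $W_0(x'|u)=0$ for all $x'\neq x$, because otherwise the mixture would place positive mass off $x$. Hence each such row equals $\delta_x$, and any one of them can serve as $u_x$.

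For (iii)$\Rightarrow$(ii), note that $\Pc(\Xsupp)=\conv\{\delta_x:x\in\Xsupp\}$, and each $\delta_x$ is a row $W_0(\cdot|u_x)$. For (ii)$\Rightarrow$(i), take any matched-marginal $q$. By the support fact, every row $q(\cdot|s)$ with $q(s)>0$ lies in $\Pc(\Xsupp)\subseteq\Cren$, and Corollary~\ref{cor:controllability}(iii) gives controllability.

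For (ii)$\Rightarrow$(iv), the inclusion $\mathscr R_{\ro}(q|W_0)\subseteq\mathscr R_{\full}(q)$ is already known from~\eqref{eq:reuse-in-full}. For the reverse inclusion, take any $p\in\mathcal D(q)$ appearing in the union~\eqref{eq:full-region}. Since $p(x)=q_0(x)$, the support fact places each row $p(x|v)$ with $v\in\suppo(p(v))$ in $\Pc(\Xsupp)\subseteq\Cren$. Lemma~\ref{lem:decoder-side-reduction} then supplies a kernel $p(u|v)$ that turns $p$ into an element of $\mathscr P_{W_0}(q)$ with the same $I(S;V)$ and $I(S,X;V)$. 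The rectangle this $p$ contributes to~\eqref{eq:full-region} therefore also appears in~\eqref{eq:reuse-region}, and taking unions and closures gives equality.

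For (iv)$\Rightarrow$(i), observe that $\mathscr R_{\full}(q)$ is never empty: choosing $V=X$ is admissible in $\mathcal D(q)$, subject to the stated cardinality bound, which holds since $|\Xc|\le|\Sc||\Xc|+1$. So (iv) gives $\mathscr R_{\ro}(q|W_0)\neq\emptyset$, and Corollary~\ref{cor:controllability}(i) yields controllability for every matched-marginal $q$. The only nonroutine step is the extreme-point argument in (i)$\Rightarrow$(iii), which upgrades hull membership of $\delta_x$ to an actual renderer row.
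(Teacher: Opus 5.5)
Your proof is correct and follows the paper's argument in substance: the identity task $S=X$ with the extreme-point property of $\delta_x$, lifting every Full Redesign law through the renderer, and nonemptiness of $\mathscr R_{\full}(q)$ for (iv)$\Rightarrow$(i). The differences are only organizational. You reach (iv) from (ii) by combining Lemma~\ref{lem:decoder-side-reduction} with your explicit support observation $p(x|v)\in\Pc(\Xsupp)$, which the paper leaves implicit. The paper instead goes (iii)$\Rightarrow$(iv) via the deterministic lift $U=g(X)$ with $W_0(\cdot|g(x))=\delta_x$, and this lift is a special case of the same factorization.
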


The proof is given in Appendix~\ref{app:universal-no-loss}.

Theorem~\ref{thm:universal-no-loss} shows that universal controllability is substantially stronger than the perception requirement imposed on the base RDP system. The latter guarantees only that $q_0(x)$ belongs to the renderer hull $\Cren$. Universal controllability requires the renderer to realize every possible decomposition of $q_0(x)$ into conditional laws indexed by a semantic variable. Equivalently, $\Cren$ must contain the entire probability simplex on $\Xsupp$, including every point mass $\delta_x$. Thus a renderer may be fully adequate for unconditional perceptual reconstruction while being fundamentally incapable of supporting arbitrary conditional-generation tasks with the same output marginal.

\subsubsection{Renderer Ordering}

For two renderers with the same output alphabet, write $W_1\preceqin W_2$ if there exists a stochastic kernel $p(u_2|u_1)$ such that
\begin{align}
    W_1(x|u_1)
    &=\sum_{u_2}p(u_2|u_1)W_2(x|u_2).
    \label{eq:input-degradation}
\end{align}
This is input degradation, or input simulation, in the sense of~\cite{Nasser2018}.

\begin{proposition}[Monotonicity Under Input Simulation]
\label{prop:renderer-order}
If $W_1\preceqin W_2$, then for every transfer target $q$,
\begin{align}
    \mathscr R_{\ro}(q|W_1)
    &\subseteq\mathscr R_{\ro}(q|W_2),
    \label{eq:region-ordering}
\end{align}
and consequently $\Gamma_{W_2}(q) \le\Gamma_{W_1}(q)$. For finite alphabets, $W_1\preceqin W_2$ if and only if $\mathcal C(W_1) \subseteq\mathcal C(W_2)$. \end{proposition}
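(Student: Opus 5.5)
The plan is to prove the region inclusion directly from the single-letter characterization in Theorem~\ref{thm:reuse-region}, deduce the penalty ordering from Theorem~\ref{thm:unlimited-reuse}, and handle the hull equivalence separately as a short convexity argument.

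\emph{Region inclusion.} Fix a transfer target $q$ and take any $p\in\mathscr P_{W_1}(q)$, with factorization $p(s,v,u_1,x)=q(s)p(v|s)p(u_1|v)W_1(x|u_1)$. Let $p(u_2|u_1)$ be the simulating kernel from~\eqref{eq:input-degradation}, and define
\begin{align}
    p'(s,v,u_2,x):=\sum_{u_1}q(s)p(v|s)p(u_1|v)p(u_2|u_1)W_2(x|u_2).
\end{align}
Setting $p'(u_2|v):=\sum_{u_1}p(u_1|v)p(u_2|u_1)$, this is $q(s)p(v|s)p'(u_2|v)W_2(x|u_2)$, so $S-V-U_2-X$ holds and $p'(x|u_2)=W_2(x|u_2)$ on the support. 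Summing over $u_2$ first and applying~\eqref{eq:input-degradation} recovers $\sum_{u_1}p(u_1|v)W_1(x|u_1)$. Hence the $(S,V,X)$-marginal of $p'$ equals that of $p$. Consequently $p'\in\mathscr P_{W_2}(q)$ and yields the same $I(S;V)$ and $I(S,X;V)$, so each constituent set of the union~\eqref{eq:reuse-region} for $W_1$ also appears in the union for $W_2$. The cardinality bound on $\Vc$ is untouched, since $V$ is unchanged, and this gives~\eqref{eq:region-ordering}.

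\emph{Penalty ordering.} For $\Gamma_{W_2}(q)\le\Gamma_{W_1}(q)$, the region inclusion gives $R_{\ro}(R_c;q,W_2)\le R_{\ro}(R_c;q,W_1)$ for every $R_c$. Taking $R_c\to\infty$ and subtracting $I(S;X)$, via Theorem~\ref{thm:unlimited-reuse}, yields the claim. Both cases of the $+\infty$ convention are covered. If $q$ is not controllable through $W_1$, then $\Gamma_{W_1}(q)=\infty$ and the inequality is trivial. If $q$ is controllable through $W_1$, then Corollary~\ref{cor:controllability} and the inclusion make it controllable through $W_2$ as well.

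\emph{Hull equivalence.} For the forward direction, each row $W_1(\cdot|u_1)$ is a convex combination of rows of $W_2$ and hence lies in $\mathcal C(W_2)$. Since $\mathcal C(W_2)$ is convex, the convex hull of the rows of $W_1$ is contained in it as well. For the converse, $\mathcal C(W_1)\subseteq\mathcal C(W_2)$ places each row $W_1(\cdot|u_1)$ in $\conv\{W_2(\cdot|u_2)\}$. Choosing, for each $u_1$, convex weights $p(u_2|u_1)$ realizing that row gives~\eqref{eq:input-degradation}, exactly as in the reverse direction of Lemma~\ref{lem:decoder-side-reduction}. Finiteness of $\Uc_2$ is what makes $\conv$ a set of finite mixtures of rows, with no closure issues.

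\emph{Main obstacle.} The only delicate point is a support subtlety. The family $\mathscr P_{W_2}(q)$ requires $p'(x|u_2)=W_2(x|u_2)$, which only has content for $u_2$ in the support of $p'(u_2)$. The factorized form handles this automatically: wherever $p'(u_2)>0$, the conditional is $W_2$. This needs to be stated explicitly, but it poses no real difficulty.
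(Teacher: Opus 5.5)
Your proof is correct, and your hull-equivalence argument matches the paper's appeal to Lemma~\ref{lem:decoder-side-reduction}. For the region inclusion, however, you take a different route.

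The paper argues at the code level. Given any Renderer-Only Transfer code for $W_1$, the latent-generation decoder additionally passes each $U_{1,i}$ through $p(u_2|u_1)$ and feeds the result to $W_2$. This composite channel equals $W_1$, so the law of $(S^n,\hat X^n)$ and both rates are unchanged. Every achievable pair for $W_1$ is therefore achievable for $W_2$, and no coding theorem is needed.

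You apply the same composition to the single-letter laws in $\mathscr P_{W_1}(q)$ and then invoke Theorem~\ref{thm:reuse-region}. This uses its converse for $W_1$ and its achievability for $W_2$. That is valid under the paper's finite-alphabet assumptions. It also makes explicit that every admissible law for $W_1$, including any optimizer, stays admissible for $W_2$ with the same auxiliary $V$ and the same rate quantities. The operational argument is self-contained, though, and extends verbatim to general alphabets. An example is the Gaussian renderers of Section~\ref{subsec:gaussian}, where $\beta_1<\beta_2$ lets a Gaussian kernel simulate the first renderer through the second.

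Your explicit handling of the $+\infty$ convention for $\Gamma$ is more careful than the paper's one-line remark.

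One small correction to your closing comment. A convex hull always consists of finite mixtures, so finiteness of $\Uc_2$ is actually needed in the forward direction. There it guarantees that $\sum_{u_2}p(u_2|u_1)W_2(\cdot|u_2)$ is such a finite mixture. For the Gaussian renderers above, input simulation holds, yet the rows of the first renderer are not finite mixtures of rows of the second.
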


\begin{proof}
Take any Renderer-Only Transfer code for $W_1$. By~\eqref{eq:input-degradation}, the decoder can replace each use of $W_1$ by first drawing $U_2$ according to $p(u_2|u_1)$ and then applying $W_2$. The induced law of $(S^n,\hat X^n)$ and both rates are unchanged, proving~\eqref{eq:region-ordering}. The inequality between the unlimited-common-randomness prices follows because the Full Redesign benchmark is the same for both renderers.

The equivalence with hull inclusion is the renderer-hull characterization in Lemma~\ref{lem:decoder-side-reduction}: $W_1\preceqin W_2$ exactly when every row of $W_1$ is a convex combination of rows of $W_2$.
\end{proof}

Proposition~\ref{prop:renderer-order} formalizes the intuition that a renderer with a larger output hull cannot be worse for Renderer-Only Transfer. In particular, two finite-alphabet renderers with the same output hull have identical Renderer-Only Transfer regions for every transfer target, even if their latent parameterizations differ.

\subsection{Representation-Induced Transfer}
\label{sec:ri}

Renderer-Only Transfer describes what the renderer could generate after a new steering law is learned. The representation-induced scheme instead asks which semantic latent law is already induced by the pretrained statistics $q_0(u|x)$. It retains this law together with $W_0$, rather than optimizing a new steering channel. As defined in Section~\ref{subsec:ri-definition}, the semantic system does not execute $q_0(u|x)$ symbol by symbol. Instead, it designs a new synthesis code for the latent target distribution $p(s,u)=q(s)B_0(u|s)$, with $B_0(u|s)$ defined in~\eqref{eq:ri-latent-law}.

\subsubsection{Operational Region}

\begin{theorem}[Representation-Induced Transfer Region]
\label{thm:ri-region}
Suppose $q(s,x)$ is representation-induced compatible with $\baseG$, i.e., $(B_0W_0)(x|s)=q(x|s)$. Then
\begingroup
\begin{align}
    \mathscr R_{\ri}(q|\baseG)
    &=\bigcup_{\substack{p(s,v,u):\\
        p(s,u)=q(s)B_0(u|s),\;S-V-U}}
    \left\{
    (R,R_c):
    \begin{aligned}
        R     &\ge I(S;V),\\
        R+R_c &\ge I(S,U;V)
    \end{aligned}
    \right\}.
    \label{eq:ri-region-main}
\end{align}
\endgroup
It suffices to take $|\Vc|\le |\Sc||\Uc|+1$.
\end{theorem}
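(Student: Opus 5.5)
The region in Theorem~\ref{thm:ri-region} is exactly Cuff's channel-synthesis region (Theorem~\ref{thm:cuff-full}) for the target pair $(S,U)$ with law $q(s)B_0(u|s)$. The plan is to recognize a Representation-Induced Transfer code, as far as its requirement~\eqref{eq:ri-latent-criterion} is concerned, as a standard synthesis code whose output is $U^n$ instead of $\hat X^n$. The renderer $W_0$ is applied after $U^n$ is produced and does not enter the criterion, and by~\eqref{eq:ri-induced-output} it gives the output guarantee for free. So I would argue that $\mathscr R_{\ri}(q|\baseG)$ equals $\mathscr R_{\full}(\tilde q)$ for the finite-alphabet target $\tilde q(s,u):=q(s)B_0(u|s)$ on $\Sc\times\Uc$. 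Substituting $\Uc$ for $\Xc$ in Theorem~\ref{thm:cuff-full} then gives~\eqref{eq:ri-region-main} with cardinality bound $|\Vc|\le|\Sc||\Uc|+1$.

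\emph{Achievability.} Take any $(R,R_c)$ in the right-hand side of~\eqref{eq:ri-region-main}. Cuff's achievability for $\tilde q$ gives codes with encoder $p(m|s^n,k)$ and stochastic decoder $p(u^n|m,k)$ such that $\|p(s^n,u^n)-\prod_i\tilde q(s_i,u_i)\|_{\TV}\to0$. These are precisely a latent-generation encoder and decoder of the form~\eqref{eq:reuse-code-kernels}. Appending $W_0$ yields the joint law~\eqref{eq:reuse-induced-law}, so~\eqref{eq:ri-latent-criterion} holds. Compatibility~\eqref{eq:ri-compatibility} is assumed, so these codes are valid Representation-Induced Transfer codes. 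Taking closure finishes this direction.

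\emph{Converse.} Given achievable Representation-Induced Transfer codes, marginalize $\hat X^n$ in~\eqref{eq:reuse-induced-law}. The pair $(S^n,U^n)$ is then produced by a legitimate $(2^{nR},2^{nR_c},n)$ synthesis code for $\tilde q$ meeting the strong criterion. Cuff's converse therefore places $(R,R_c)$ in $\mathscr R_{\full}(\tilde q)$, which is the stated union. The one point to check is that Cuff's converse applies with no restriction on how the decoder forms $U^n$. It does, since the latent decoder is an arbitrary stochastic map of $(m,k)$, exactly as in the synthesis setup. The support convention $\Uc=\suppo(q_0(u))$ and finiteness are all that is needed.

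I expect little real obstacle. The step needing the most care is making sure the region depends only on the latent criterion and not on the output. Definition~\ref{def:ri-region} imposes only~\eqref{eq:ri-latent-criterion}, so the equivalence is exact rather than merely an inclusion. I would also remark that $I(S,U;V)$ replaces $I(S,X;V)$ because the sum-rate cost is set by the synthesized pair $(S,U)$, not by the rendered output.
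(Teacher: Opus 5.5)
Your proposal is correct and follows essentially the same route as the paper's proof sketch: it treats a Representation-Induced Transfer code as a standard synthesis code for the pair $(S,U)$ with target $q(s)B_0(u|s)$, applies Theorem~\ref{thm:cuff-full} for both achievability and converse, and regards $W_0$ as decoder-local postprocessing whose output law is correct by compatibility. Your version spells out details the sketch leaves implicit, namely that the latent decoder is exactly Cuff's stochastic decoder, the closure step, and the cardinality substitution $|\Xc|\to|\Uc|$.
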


The proof follows that of Theorem~\ref{thm:reuse-region} in Appendix~\ref{app:reuse-region} with minor modifications, so we provide only a sketch.

\begin{proof}[Proof sketch]
The latent target before rendering is $\prod_{i=1}^n q(s_i)B_0(u_i|s_i)$. {Applying Theorem~\ref{thm:cuff-full} to $(S,U)$ gives~\eqref{eq:ri-region-main}.} The subsequent application of $W_0$ is decoder-local postprocessing and requires no additional rate; representation-induced compatibility ensures that its output law is the prescribed $q(s^n,x^n)$. The converse is the ordinary channel-synthesis converse for the prescribed pair $(S,U)$.
\end{proof}

The theorem shows that once the representation-induced latent law has been specified, {its minimum compression rate is exactly the rate required for synthesizing} $(S,U)$. Whether this latent target realizes the prescribed output law is determined separately by the compatibility condition $(B_0W_0)(x|s)=q(x|s)$, that is, no synthesis code for the fixed latent target can repair a failure of this condition. Whenever compatibility holds,
\begin{align}
    \mathscr R_{\ri}(q|\baseG)
    &\subseteq\mathscr R_{\ro}(q|W_0)
    \subseteq\mathscr R_{\full}(q).
    \label{eq:three-region-inclusion}
\end{align}

\subsubsection{Basic Invariances Under Matched Marginals}

Let $q(x)=q_0(x)$. The aggregate latent law induced by $B_0$ is
\begin{align}
    \sum_sq(s)B_0(u|s)
    &=\sum_xq(x)q_0(u|x) =q_0(u).
    \label{eq:aggregate-latent-preservation}
\end{align}
Thus Representation-Induced Transfer uses the same aggregate latent distribution as the base RDP factorization. Moreover,
\begin{align}
    p(\hat x)
    &=\sum_xq(x)T_0(\hat x|x)
    =q_0(\hat x),
    \label{eq:ri-perception-preservation}
\end{align}
so the marginal output law is preserved exactly at the single-letter level. Under the joint distribution
\begin{align}
    p(s,x,u,\hat x)=q(s,x)q_0(u|x)W_0(\hat x|u),
    \label{eq:ri-joint-law}
\end{align}
the distortion also satisfies
\begin{align}
    \E[d(X,\hat X)]
    &=\sum_{x,\tilde x}q_0(x)T_0(\tilde x|x)d(x,\tilde x)
    \le D_0.
    \label{eq:ri-distortion-preservation}
\end{align}
Hence the transfer preserves the two properties for which the base system was designed---distortion and the unconditional output law---even when it fails to preserve the desired dependence on $S$.

\subsubsection{Exact Transfer and the Recurrent-Class Invariant}

From~\eqref{eq:ri-output-law}, $(B_0W_0)(x'|s) =\sum_xq(x|s)T_0(x'|x)$. Therefore Representation-Induced Transfer is exact if and only if
\begin{align}
    \sum_xq(x|s)T_0(x'|x)
    &=q(x'|s),
    \qquad s\in\suppo(q(s)),
    \label{eq:ri-stationarity}
\end{align}
or equivalently $q(x|s)\in\Fix(T_0)$, {where, for a channel $T$ from $\Xc$ to itself, we define its set of stationary distributions by
\begin{align}
    \Fix(T)
    &:=\{\pi\in\Pc(\Xc):\pi T=\pi\}.
    \label{eq:fix-definition}
\end{align}
}
Thus preservation of the marginal law $q(x)=q_0(x)$ under $T_0$ does not in general imply preservation of the conditional laws $q(x|s)$; exact transfer requires the latter for every $s\in\suppo(q(s))$.

Since every stationary distribution $\pi$ of $T_0$ satisfies
\begin{align}
    \pi(x)
    &=\sum_u\left(\sum_{x'}\pi(x')q_0(u|x')\right)W_0(x|u),
    \label{eq:fix-in-hull}
\end{align}
it is a convex combination of the rows of $W_0$ and hence belongs to $\mathcal C(W_0)$. Thus every exactly transferable target under Representation-Induced Transfer is controllable through $W_0$, and
\begin{align}
    \Fix(T_0)
    &\subseteq\mathcal C(W_0)
    \subseteq\Pc(\Xc).
    \label{eq:three-object-inclusion}
\end{align}

We now give a structural interpretation of~\eqref{eq:ri-stationarity}. View $T_0$ as the transition matrix of a finite-state Markov chain on $\Xsupp$. Since $q_0T_0=q_0$ and $q_0(x)$ is positive on $\Xsupp$, this restriction decomposes into disjoint recurrent classes $\Xc_1,\ldots,\Xc_J$~\cite{Norris1997}, i.e., within each class, every state is reachable from every other state under repeated applications of $T_0$, and no transition leaves the class.
Let
\begin{align}
    C
    &=c(X)\in[1:J]
    \label{eq:class-label}
\end{align}
be the recurrent-class index containing $X$. Thus $C$ is a deterministic function of $X$, determined by $T_0$ up to a relabeling of the classes. Let $\pi_c$ denote the unique stationary distribution within class $c$. Every stationary distribution supported on $\Xsupp$ is a mixture of the $\pi_c$'s~\cite{Norris1997}; in particular,
\begin{align}
    q_0(x)
    &=\sum_{c=1}^Jq_0(c)\pi_c(x),
    &q_0(x|c)
    &=\pi_c(x).
    \label{eq:px-class-mixture}
\end{align}
The channel may randomize the realization of $X$ within a recurrent class, but it never changes the class label $C$. Thus, the class label $C=c(X)$ can itself serve as a semantic source. Taking $S=C$ specifies the task of generating $X$ according to $q_0(x|c)$ given $C=c$.

For two semantic kernels $q(s_1|x)$ and $q(s_2|x)$, write $S_1\preceq_X S_2$ if there exists a channel $p(s_1|s_2)$ such that $q(s_1|x) =\sum_{s_2}q(s_2|x)p(s_1|s_2)$, equivalently $X-S_2-S_1$. Thus $S_1\preceq_XS_2$ means that $S_1$ contains no distinction about $X$ beyond those already contained in $S_2$.

\begin{theorem}[Recurrent-Class Characterization and Maximal Exactly Transferable Semantics]
\label{thm:maximal-invariant}
Let $C=c(X)$ be the recurrent-class label of $T_0$ on $\Xsupp$. Then $C$ itself is exactly transferable under Representation-Induced Transfer. Moreover, for every matched-marginal transfer target $q(s,x)$, the following are equivalent:
\begin{enumerate}[label=(\roman*)]

    \item Representation-Induced Transfer is exact, i.e., $(B_0W_0)(x|s)=q(x|s)$ for every $s\in\suppo(q(s))$;

    \item For every $s\in\suppo(q(s))$, $q(x|s) \in\Fix(T_0)$;
    \item The random variables $(X,C,S)$ satisfy
    \begin{align}
        X-C-S,
        \label{eq:X-C-S}
    \end{align}
    equivalently $q(s|x)=q(s|c(x))$.
\end{enumerate}
Consequently, every exactly transferable semantic variable satisfies $S\preceq_XC$. Hence $C$ is the greatest, and therefore maximal, exactly transferable semantic variable under $\preceq_X$.
\end{theorem}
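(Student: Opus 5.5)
The plan is to first dispose of (i)$\Leftrightarrow$(ii), which is already contained in~\eqref{eq:ri-output-law} and~\eqref{eq:ri-stationarity}. Since $(B_0W_0)(x'|s)=\sum_xq(x|s)T_0(x'|x)$, exactness at $s$ is exactly the statement $q(\cdot|s)T_0=q(\cdot|s)$, that is, $q(\cdot|s)\in\Fix(T_0)$. The substance of the theorem is therefore (ii)$\Leftrightarrow$(iii), together with the exact transferability of $C$ and the maximality claim.

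For (ii)$\Leftrightarrow$(iii), I would begin by noting that matched marginals give $q(x|s)q(s)\le q_0(x)$. Hence every $q(\cdot|s)$ with $q(s)>0$ is supported on $\Xsupp$. On $\Xsupp$ the restriction of $T_0$ is closed, because $q_0T_0=q_0$ and no mass leaks to states with $q_0(x)=0$. The chain decomposes into the closed recurrent classes $\Xc_1,\ldots,\Xc_J$, and there are no transient states: a stationary law with full support on $\Xsupp$ cannot charge transient states. By the cited structure theorem~\cite{Norris1997}, every stationary distribution supported on $\Xsupp$ is a mixture $\sum_c\lambda_c\pi_c$, with $\pi_c$ the unique stationary law on $\Xc_c$.

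Given this, (ii) says $q(x|s)=\sum_c\lambda_c(s)\pi_c(x)$. Since the classes are disjoint, $\lambda_c(s)=q(C=c|s)$, so $q(x|s)=\sum_c q(c|s)\pi_c(x)=\sum_cq(c|s)q_0(x|c)$ by~\eqref{eq:px-class-mixture}. Because $C$ is a function of $X$, we have $q(x|s)=q(c(x)|s)\,q(x|c(x),s)$, and comparing the two expressions gives $q(x|c,s)=q_0(x|c)$, which is free of $s$. That is $S-C-X$, equivalently $X-C-S$, and Bayes' rule then gives $q(s|x)=q(s|c(x))$. Conversely, if $X-C-S$ holds, then $q(x|c,s)=q(x|c)=q_0(x|c)=\pi_c(x)$ (using matched marginals), so $q(x|s)=\sum_cq(c|s)\pi_c$ is stationary.

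For the remaining claims, taking $S=C$ gives $q(x|c)=\pi_c\in\Fix(T_0)$, so $C$ is exactly transferable by (ii). For any exactly transferable $S$, (iii) gives $q(s|x)=q(s|c(x))$, and the channel $p(s|c):=q(s|c)$ witnesses $S\preceq_X C$. Finally, $C$ is itself transferable and dominates every transferable $S$, so it is the greatest element under $\preceq_X$, and therefore maximal. The main obstacle I expect is the Markov-chain step: I need to justify carefully that the restriction to $\Xsupp$ has no transient states and that the identification $\lambda_c(s)=q(c|s)$ is valid. The latter uses disjointness of the class supports and the uniqueness of $\pi_c$ within each class; the rest is bookkeeping.
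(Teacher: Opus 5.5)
Your proposal is correct and follows essentially the same route as the paper's proof in Appendix~\ref{app:ri-proofs}. Both obtain (i)$\Leftrightarrow$(ii) from~\eqref{eq:ri-output-law}, decompose every stationary law on $\Xsupp$ into the class-wise $\pi_c$ with $q_0(x|c)=\pi_c(x)$, compare $q(x|s)=\sum_c q(c|s)\pi_c(x)$ with this decomposition to get $q(x|s,c)=q_0(x|c)$, and take $S=C$ for maximality. Your explicit checks that $q(\cdot|s)$ is supported on $\Xsupp$ (via $q(s)q(x|s)\le q_0(x)$) and that $\lambda_c(s)=q(c|s)$ by disjointness of the classes fill in details the paper leaves implicit.
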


The proof is given in Appendix~\ref{app:ri-proofs}.
\subsubsection{Semantic Information and Universal Interfaces}

The theorem does not require $S=C$. Every deterministic function or stochastic degradation of $C$ is exactly transferable. Conversely, a task that distinguishes source symbols within the same recurrent class falls outside this exact family. If $T_0$ is irreducible on $\Xsupp$, then $C$ is constant and exact transfer requires $S\perp X$; if $T_0$ is the identity channel, then $C=X$ and every semantic kernel is exactly transferable.

\begin{example}[A Transferable Semantic Variable Different From $C$]
Let $\Xc=\{1,2,3,4,5,6\}$, let $q_0(x)$ be uniform, and let $T_0$ mix uniformly within the three recurrent classes $\Xc_1=\{1,2\}$, $\Xc_2=\{3,4\}$, and $\Xc_3=\{5,6\}$. Thus $C\in\{1,2,3\}$. Define $S =\mathbf 1\{C=3\}$. Although $S\ne C$, we have $X-C-S$, and the two conditional laws of $X$ are mixtures of the classwise stationary distributions; hence $S$ is exactly transferable. In contrast, $S'=\mathbf 1\{X=1\}$ distinguishes two points within $\Xc_1$. Since $\delta_1T_0 =\tfrac12\delta_1+\tfrac12\delta_2 \ne\delta_1$, $S'$ is not exactly transferable.
\end{example}

For a fixed exactly transferable target, $C$ also has a direct synthesis interpretation. Since $C$ is a deterministic function of $X$ and $X-C-S$,
\begin{align}
    I(S;X)
    &=I(S;C)+I(S;X|C) =I(S;C).
    \label{eq:C-rate-interface}
\end{align}
Under unlimited common randomness, choosing the Full Redesign synthesis auxiliary $V=C$ therefore achieves the optimal {compression rate} $I(S;X)$. Thus $C$ is a rate-optimal interface for every fixed exactly transferable task, although a particular task may require only a coarser function of $C$.

\begin{corollary}[Maximum Exactly Transferable Semantic Information]
\label{cor:max-transferable-information}
Let $X\sim q_0$ and $C=c(X)$. The maximum exactly transferable semantic information is
\begin{align}
    \max_{\substack{\Sc,\,q(s|x):\\
        X-C-S}}
    I(X;S)
    &=I(X;C)
    =H(C){,}
    \label{eq:max-transferable-information}
\end{align}
where $\Sc$ ranges over finite alphabets.
\end{corollary}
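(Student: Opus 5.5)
The proof is a matching upper and lower bound. For the upper bound, take any finite $\Sc$ and any kernel $q(s|x)$ with $X-C-S$ under $X\sim q_0$. The chain rule gives
\begin{align}
    I(X;S)
    &\le I(X,C;S)
    =I(C;S)+I(X;S|C)
    =I(C;S),
\end{align}
where the last equality is exactly the Markov condition $X-C-S$. This is the same computation as in~\eqref{eq:C-rate-interface}. Then $I(C;S)\le H(C)$ holds because $C$ is a finite-alphabet random variable. Since $C=c(X)$ is a deterministic function of $X$, we have $H(C|X)=0$, and hence $H(C)=I(X;C)$. So every feasible $S$ satisfies $I(X;S)\le I(X;C)=H(C)$.

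For achievability, take $\Sc=[1:J]$ and $S=C$, that is, $q(s|x)=\mathbf 1\{s=c(x)\}$. The chain $X-C-S$ holds trivially, and $I(X;S)=I(X;C)=H(C)$. This choice is a legitimate matched-marginal task, since $q(x)=q_0(x)$ by construction. It is also exactly transferable under Representation-Induced Transfer, by the first assertion of Theorem~\ref{thm:maximal-invariant}: each conditional law $q_0(x|c)=\pi_c$ is stationary for $T_0$ by~\eqref{eq:px-class-mixture}. Hence the maximum is attained, and writing $\max$ rather than $\sup$ is justified.

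The only step needing care is the interpretation of the constraint set. The corollary maximizes over kernels satisfying $X-C-S$. By the equivalence (i)$\Leftrightarrow$(iii) of Theorem~\ref{thm:maximal-invariant}, this set coincides with the set of exactly transferable matched-marginal targets, which is what justifies calling the quantity the ``maximum exactly transferable semantic information.'' One should also note that $q(s|x)$ only needs to be defined on $\Xsupp$, where $c(\cdot)$ is defined, because symbols outside the support carry zero probability. With these points settled, the argument is just the two-line chain-rule bound together with the choice $S=C$.
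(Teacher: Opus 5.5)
Your proof is correct and matches the paper's argument. The paper's upper bound is data processing along $X-C-S$; your chain-rule computation $I(X;S)\le I(C;S)\le H(C)=I(X;C)$ is an unpacked version of that step, and equality is attained by $S=C$ as in the paper.
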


The result follows from $X-C-S$ and data processing, with equality attained by $S=C$. Thus $H(C)$ is the largest amount of information about $X$ that can be carried by an exactly transferable semantic variable.

Call a deterministic representation $Z=z(X)$ a \emph{universal exact-semantic interface} if
\begin{align}
    X-Z-S
    \label{eq:universal-interface-definition}
\end{align}
for every semantic variable $S$ that is exactly transferable under Representation-Induced Transfer.

\begin{corollary}[Minimal Universal Exact-Semantic Interface]
\label{cor:minimal-universal-interface}
The recurrent-class variable $C$ is a universal exact-semantic interface. Moreover, every other universal exact-semantic interface $Z$ satisfies
\begin{align}
    X-Z-C.
    \label{eq:universal-interface-minimality}
\end{align}
Thus every universal interface must preserve at least the information contained in $C$.
\end{corollary}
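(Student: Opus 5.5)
The plan is to derive both parts of Corollary~\ref{cor:minimal-universal-interface} from the equivalence (i)$\Leftrightarrow$(iii) in Theorem~\ref{thm:maximal-invariant}. That theorem says a matched-marginal semantic variable $S$ is exactly transferable if and only if $X-C-S$, i.e., $q(s|x)=q(s|c(x))$.

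\emph{Universality.} This part is immediate. Let $S$ be any exactly transferable semantic variable. Theorem~\ref{thm:maximal-invariant} gives $X-C-S$. Since $C=c(X)$ is a deterministic function of $X$, this is exactly the defining condition~\eqref{eq:universal-interface-definition} with $Z=C$. Hence $C$ is a universal exact-semantic interface.

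\emph{Minimality.} Let $Z=z(X)$ be any universal interface. The key step is to use as a test semantic variable the class label itself, $S:=C$. With $q(s|x)=\mathbf 1\{s=c(x)\}$, we have $q(x)=q_0(x)$, so the task is matched-marginal. Trivially $X-C-C$ holds, because $C$ is a function of $X$ and conditioning on $C$ fixes it. So by Theorem~\ref{thm:maximal-invariant}, $C$ is exactly transferable; the theorem also states this directly. Universality of $Z$ then yields $X-Z-C$, which is~\eqref{eq:universal-interface-minimality}.

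To make the interpretation explicit, I will unpack the Markov chain. $X-Z-C$ means $\mathsf P(C=c\mid X=x)=\mathsf P(C=c\mid Z=z(x))$ for $x\in\Xsupp$. The left side is the point mass $\mathbf 1\{c=c(x)\}$. So the conditional law of $C$ given $Z=z$ is degenerate on every $z$ with positive probability, and $C$ is a.s.\ a deterministic function of $Z$. Consequently, any two source symbols in $\Xsupp$ with the same $z$-value lie in the same recurrent class, so $Z$ refines $C$. This gives $I(X;Z)=H(Z)\ge H(C)$, consistent with Corollary~\ref{cor:max-transferable-information}.

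There is no serious obstacle. The only point needing care is confirming that $S=C$ is an admissible test task: it uses a finite semantic alphabet $[1:J]$, it has the matched marginal, and its conditionals $q_0(x|c)=\pi_c\in\Fix(T_0)$ by~\eqref{eq:px-class-mixture}. This also verifies condition (ii) of Theorem~\ref{thm:maximal-invariant} directly, so exact transferability of $C$ does not rest on the theorem's first sentence alone. Symbols outside $\Xsupp$ have zero probability and do not affect the Markov conditions.
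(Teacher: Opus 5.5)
Your proposal is correct and follows essentially the same route as the paper: universality of $C$ comes from the equivalence (i)$\Leftrightarrow$(iii) in Theorem~\ref{thm:maximal-invariant}, and minimality comes from applying the definition of a universal interface to the exactly transferable task $S=C$. Your additional unpacking of $X-Z-C$ into ``$C$ is a deterministic function of $Z$ on $\Xsupp$'' is accurate and spells out the remark that any universal interface must preserve $C$.
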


The first statement follows from Theorem~\ref{thm:maximal-invariant}. For the second, $C$ itself is exactly transferable, so an interface that supports every exactly transferable task must in particular support $S=C$.

\begin{example}[Why a Universal Interface Must Preserve $C$]
In the three-class example above, consider $S_1=\mathbf 1\{C=1\}$ and $S_2=\mathbf 1\{C=3\}$. Each task is a strict coarsening of $C$ and is exactly transferable. Together, $(S_1,S_2)$ determine all three values of $C$. Therefore a representation intended to support arbitrary future transfer tasks that are exactly transferable cannot merge any two recurrent classes.
\end{example}

\subsubsection{Optimality When Representation-Induced Transfer Is Exact}

For an exactly compatible target, define
\begin{align}
    R_{\ri}^{\infty}(q|\baseG)
    &:=\lim_{R_c\to\infty}\Rri(R_c;q,\baseG).
    \label{eq:ri-unlimited-rate-definition}
\end{align}

\begin{theorem}[Optimality of Exact Representation-Induced Transfer]
\label{thm:ri-optimal}
Suppose Representation-Induced Transfer is exact. Under the Representation-Induced Transfer law~\eqref{eq:ri-joint-law},
\begin{align}
    I(S;U)
    &=I(S;X).
    \label{eq:ri-info-equality}
\end{align}
Consequently,
\begin{align}
    R_{\full}^{\infty}(q)
    &=R_{\ro}^{\infty}(q|W_0)
    =R_{\ri}^{\infty}(q|\baseG)
    =I(S;X),
    \label{eq:ri-optimal-hierarchy}
\end{align}
and the latent law $B_0$ attains the Renderer-Only Transfer minimum.
\end{theorem}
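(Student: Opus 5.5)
The plan is to prove \eqref{eq:ri-info-equality} through the recurrent-class characterization, and then obtain the rate hierarchy by sandwiching.

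\emph{Step 1: a chain rule.} Under \eqref{eq:ri-joint-law} we have $S-X-U$, so $I(S;U)\le I(S;X)$. For the reverse inequality, use Theorem~\ref{thm:maximal-invariant}: exactness gives $X-C-S$ with $C=c(X)$. I will show that $C$ is a deterministic function of $U$, almost surely under $q_0(x,u)$. Granting this, $S-X-U$ together with $X-C-S$ implies $S-C-U$, since $p(s|x,u)=q(s|x)=q(s|c(x))$. Because $C$ is determined by $U$, the chain rule gives
\begin{align}
I(S;U)=I(S;U,C)=I(S;C)+I(S;U|C)=I(S;C).
\end{align}
Finally $I(S;C)=I(S;X)$ by \eqref{eq:C-rate-interface}, which gives \eqref{eq:ri-info-equality}.

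\emph{Step 2: $U$ determines $C$ (the main obstacle).} Suppose some $u\in\Uc$ has $q_0(x|u)>0$ for two sources $x,x'$ lying in different recurrent classes. Then $T_0(\tilde x|x)\ge q_0(u|x)W_0(\tilde x|u)$, and the same bound holds with $x'$ in place of $x$. Choose any $\tilde x$ with $W_0(\tilde x|u)>0$. Then both $x$ and $x'$ transition to $\tilde x$ with positive probability. Since $q_0T_0=q_0$, we have $q_0(\tilde x)>0$, so $\tilde x\in\Xsupp$.

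No transition leaves a recurrent class, so $\tilde x$ would have to belong to the class of $x$ and also to the class of $x'$. This contradicts the disjointness of the classes. Hence every $u\in\Uc$ (all of which have positive mass by \eqref{eq:effective-support}) has $q_0(x|u)$ supported in a single class, and $c(X)$ is a function of $U$ almost surely. Some care is needed to make sure the decomposition into recurrent classes covers all of $\Xsupp$. This holds because a stationary distribution with full support on $\Xsupp$ leaves no transient states, as the paper already assumes.

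\emph{Step 3: the hierarchy.} $B_0$ is a feasible channel $p(u|s)$ in Theorem~\ref{thm:unlimited-reuse}, since exactness means $\sum_u B_0(u|s)W_0(x|u)=q(x|s)$. Therefore $R_{\ro}^\infty\le I(S;U)=I(S;X)=R_{\full}^\infty$. For Representation-Induced Transfer, take $V=U$ in Theorem~\ref{thm:ri-region}; letting $R_c\to\infty$ gives $R_{\ri}^\infty\le I(S;U)=I(S;X)$.

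The region inclusions \eqref{eq:three-region-inclusion} imply $R_{\full}^\infty\le R_{\ro}^\infty\le R_{\ri}^\infty$, since the boundaries are monotone under inclusion of the regions. Combining these inequalities gives equality throughout, which is \eqref{eq:ri-optimal-hierarchy}. It also shows that $B_0$ attains the minimum in \eqref{eq:ro-unlimited-rate}.
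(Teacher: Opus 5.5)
Your proof is correct, but you obtain the lower bound $I(S;U)\ge I(S;X)$ by a different and longer route than the paper.

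The paper argues from the output side. Under \eqref{eq:ri-joint-law} one also has $S-U-\hat X$, so $I(S;U)\ge I(S;\hat X)$. Exactness means that $(S,\hat X)$ has law $q(s,x)$, so $I(S;\hat X)=I(S;X)$. Together with $S-X-U$, this sandwiches $I(S;U)$ in two lines.

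You argue from the source side instead. Theorem~\ref{thm:maximal-invariant} turns exactness into $X-C-S$. You then re-derive the source half of Lemma~\ref{lem:class-consistency}, which the paper proves later for Full-Model Transfer, to show that $C$ is a function of $U$. This gives $S-C-U$ and $I(S;U)=I(S;C)=I(S;X)$. Working with the base law $q_0(x,u)$ is legitimate here, because matched marginals make the $(X,U)$-marginal of \eqref{eq:ri-joint-law} equal to $q_0(x,u)$.

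The two routes trade off as follows:
\begin{itemize}
\item The paper's argument uses only that the rendered output has the target law. It does not need the finite-state recurrent-class decomposition at all, so it is shorter and less tied to the finite-alphabet Markov-chain structure.
\item Your argument uses more machinery but gives a structural refinement. The inherited latent determines $C$ and carries semantic information only through it. This makes explicit why $B_0$ inherits the rate-optimality of $C$ as an interface in \eqref{eq:C-rate-interface}.
\end{itemize}

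Your Step~3 is the same sandwich of boundaries that the paper's sketch invokes through Theorems~\ref{thm:unlimited-reuse} and~\ref{thm:ri-region} and the inclusions \eqref{eq:three-region-inclusion}.
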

\begin{proof}[Proof sketch]
The law~\eqref{eq:ri-joint-law} satisfies both $S-X-U$ and $S-U-\hat X$. Hence $I(S;U)\le I(S;X)$ and $I(S;U)\ge I(S;\hat X)$. Exact transfer gives $p(s,\hat x)=q(s,x)$, so the two bounds yield~\eqref{eq:ri-info-equality}. The rate equality then follows from Theorem~\ref{thm:unlimited-reuse}, Theorem~\ref{thm:ri-region}, and the Full Redesign benchmark.
\end{proof}

The theorem gives the positive reason to retain the representation-induced latent law. If it already generates the correct conditional law, redesigning semantic steering cannot reduce the minimum {compression rate} under unlimited common randomness; the inherited latent law is then optimal even relative to Full Redesign. This representation-level rate statement should not be confused with the base rate $R_0$ of a particular inherited RDP code, which may be larger.

\subsubsection{Approximate Transfer Guarantees from RDP Distortion}

The base RDP distortion constraint provides an approximation guarantee for the fixed Representation-Induced Transfer law, even when exact transfer fails. This guarantee uses the original single-letter distortion measure $d$ and holds for every transfer target distribution $q(s,x)$ satisfying the matched-marginal condition $q(x)=q_0(x)$.

{
For fixed distributions $p$ and $r$ and a nonnegative cost $d$ on $\Xc\times\Xc$, let $W_d(p,r)$ denote the minimum expected cost over all couplings of $p$ and $r$:
\begin{align*}
    W_d(p,r)
    &:=\min_{\substack{\gamma\ge 0:\ \sum_{x'}\gamma(x,x')=p(x)\\
                       \sum_x\gamma(x,x')=r(x')}}
       \sum_{x,x'}\gamma(x,x')d(x,x').
\end{align*}
}
Both marginals remain fixed in this minimization. In the present setting, they are the desired conditional distribution $q(\cdot|s)$ and the actual output distribution $(B_0W_0)(\cdot|s)$ of the fixed transfer law.

The inherited kernel $T_0$ supplies a feasible coupling
$\gamma_s(x,x')=q(x|s)T_0(x'|x)$.
Consequently,
\begin{align}
    &\sum_sq(s)W_d\bigl(q(\cdot|s),(B_0W_0)(\cdot|s)\bigr)
    \notag\\
    &\quad\le\sum_{s,x,x'}q(s,x)T_0(x'|x)d(x,x')
    \notag\\
    &\quad=\sum_{x,x'}q_0(x)T_0(x'|x)d(x,x')
    \le D_0.
    \label{eq:transport-defect-bound}
\end{align}
Thus, the base distortion budget bounds the average transport cost between the target and actual conditional output distributions, with $T_0$ unchanged.

For Hamming distortion, $W_d$ equals total variation, giving
\begin{align}
    \epsilon_{\ri}(q|\baseG)
    &:=\sum_sq(s)
    \left\|q(\cdot|s)-(B_0W_0)(\cdot|s)\right\|_{\TV}
    \le D_0.
    \label{eq:ri-defect}
\end{align}
Hence, low Hamming distortion in the base system ensures small average conditional TV error after Representation-Induced Transfer, even for targets that do not satisfy the exact-transfer conditions.

For general $d$, the guarantee concerns transport cost; a TV guarantee requires an additional relation between $d$ and Hamming distortion. These bounds average over $q(s)$, rather than controlling every conditional distribution separately. They concern single-letter approximation and do not by themselves establish the vanishing block total variation required for strong synthesis.

\subsection{Full-Model Transfer}
\label{sec:full-model}

Recall that Full-Model Transfer reuses a fixed compatible operational RDP code $\baseC$, with {compression and common-randomness rates} $(R_0,R_{c,0})$ satisfying \eqref{eq:base-rate-r}--\eqref{eq:base-rate-sum}. Both the code and these rates remain unchanged under transfer. {In the equivalent representation~\eqref{eq:fm-induced-law}, matched marginals give $X^n\sim q_0(x^n)$, but this alone does not ensure that the dependence between $S^n$ and the reconstruction is preserved.} With the code and rates fixed, we therefore seek additional conditions on the transfer task under which transfer succeeds. Specifically, we identify a family of tasks supported by every compatible implementation and derive a necessary condition that limits the additional semantic dependence a particular code may preserve at its fixed rate.

\subsubsection{Class Consistency of Compatible Implementations}

The recurrent-class variable $C$ from Theorem~\ref{thm:maximal-invariant} is guaranteed by the nominal factorization $\baseG$. The next lemma connects that single-letter invariant to every compatible operational implementation.

\begin{lemma}[Class Consistency Through the Latent]
\label{lem:class-consistency}
Let $(X,U,\tilde X)$ have the single-letter base law $q_0$, satisfying $X-U-\tilde X$ and $X\overset{d}{=}\tilde X$. Let $c$ denote the recurrent-class label of $T_0$.

For every $u$ with $q_0(u)>0$, there exists a unique class label $c_U(u)$ such that
\begin{align}
    q_0(x|u)>0
    &\Longrightarrow c(x)=c_U(u),
    \label{eq:class-consistency-source}\\
    W_0(y|u)>0
    &\Longrightarrow c(y)=c_U(u).
    \label{eq:class-consistency-output}
\end{align}
{Consequently, $c(X)=c_U(U)=c(\tilde X)$ almost surely.}
\end{lemma}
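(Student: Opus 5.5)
The plan is to use the recurrent-class structure of $T_0$ on $\Xsupp$ together with the factorization $T_0(y|x)=\sum_u q_0(u|x)W_0(y|u)$. The key observation is that recurrent classes are closed under $T_0$. Hence, if $q_0(x)>0$ and $T_0(y|x)>0$, then $c(y)=c(x)$. Every state in $\Xsupp$ is recurrent, because $q_0$ is a stationary distribution with full support on $\Xsupp$, so no state is transient.

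First, we show that source and output support classes agree through a common latent. Fix $u$ with $q_0(u)>0$, and suppose $q_0(x|u)>0$ and $W_0(y|u)>0$. Then $q_0(x)>0$ and $q_0(u|x)>0$, so $T_0(y|x)\ge q_0(u|x)W_0(y|u)>0$. It follows that $y\in\Xsupp$ (by $q_0T_0=q_0$) and that $c(y)=c(x)$ by closure of the class of $x$.

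Second, we apply this pairwise. For two source points $x,x'$ with $q_0(x|u),q_0(x'|u)>0$, pick any $y$ with $W_0(y|u)>0$; such a $y$ exists because $W_0(\cdot|u)$ is a pmf. The first step gives $c(x)=c(y)=c(x')$. So all source points compatible with $u$ share one class, and we call it $c_U(u)$. This label is well defined because $q_0(u)>0$ guarantees that some $x$ has $q_0(x|u)>0$. Applying the first step again, every $y$ with $W_0(y|u)>0$ satisfies $c(y)=c_U(u)$. This gives \eqref{eq:class-consistency-source} and \eqref{eq:class-consistency-output}. Uniqueness holds because the nonempty source support of $q_0(\cdot|u)$ pins the label down.

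Third, the almost-sure statement follows directly. Under $q_0(x,u,\tilde x)$, with probability one we have $q_0(U)>0$, $q_0(X|U)>0$, and $W_0(\tilde X|U)>0$, so $c(X)=c_U(U)=c(\tilde X)$. The hypothesis $X\overset{d}{=}\tilde X$ ensures that $\tilde X\in\Xsupp$, so $c(\tilde X)$ is defined, although the first step already gives this.

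The main obstacle is justifying closure: that $T_0(y|x)>0$ with $x\in\Xsupp$ forces $y$ into the same recurrent class. The argument is as follows. Stationarity with positive mass on $\Xsupp$ means the restricted chain has no transient states, so each communicating class in $\Xsupp$ is closed and recurrent, a standard fact cited from \cite{Norris1997}. Transitions out of $\Xsupp$ are impossible, since $q_0(y)\ge q_0(x)T_0(y|x)>0$. I will state this explicitly rather than rely on the informal description in the text.
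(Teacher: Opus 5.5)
Your proposal is correct and follows the paper's proof. Both fix $u$, a source point $x$ with $q_0(x|u)>0$ and an output point $y$ with $W_0(y|u)>0$, use $T_0(y|x)\ge q_0(u|x)W_0(y|u)>0$, and invoke closedness of the recurrent classes on $\Xsupp$ (from $q_0T_0=q_0$) to place both conditional supports in one class; your explicit pairwise step through a common $y$ and your well-definedness remark just spell out what the paper compresses into ``since $x$ and $y$ were arbitrary.''
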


The proof is given in Appendix~\ref{app:operational-transfer}. Under the ideal law associated with any compatible code $\mathsf C_{0,n}$, the lemma holds coordinatewise. Compatibility then transfers this class-consistency property to the actual block code. For $x^n\in\Xc^n$, write
\begin{align}
    c(x^n)
    &:=\bigl(c(x_1),\ldots,c(x_n)\bigr).
    \label{eq:componentwise-class-map}
\end{align}

\begin{theorem}[Full-Model Transfer: Guaranteed Family and Residual-Semantic Converse]
\label{thm:operational-transfer}
Fix a compatible strong-RDP code sequence $\baseC$ satisfying~\eqref{eq:strong-near-perfect}. Then:
\begin{enumerate}[label=(\roman*)]

    \item {under the base operational law~\eqref{eq:actual-rdp-law}, the inherited implementation preserves the recurrent-class sequence,}
    \begin{align}
        \mathsf P\bigl[c(X^n)\ne c(\tilde X^n)\bigr]
        &\longrightarrow0;
        \label{eq:operational-class-preservation}
    \end{align}

    \item For any matched-marginal task $S$ satisfying
    \begin{align}
        X-C-S,
        \label{eq:future-invariant-task}
    \end{align}
    Full-Model Transfer succeeds using the same fixed code sequence $\baseC$;

    \item if $\baseC$ is full-model transferable for a matched-marginal transfer target $q(s,x)$, then
    {
    \begin{align}
        H(C)+I(S;X|C)
        &\le R_0.
        \label{eq:full-model-converse}
    \end{align}
    }
\end{enumerate}
{Consequently, if $R_0=H(C)$, then Full-Model Transfer is exact if and only if $X-C-S$.}
\end{theorem}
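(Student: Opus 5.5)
The plan is to prove the three parts in order. Throughout, (i) is the engine, (ii) follows from (i) by a coupling argument, and (iii) is a multi-letter information bound that uses (i) to handle the class labels.

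\emph{Part (i).} I work under the ideal law $\bar p$ in~\eqref{eq:ideal-rdp-law}. There, given $U^n=u^n(M_0,K_0)$, the pairs $(X_i,\tilde X_i)$ are drawn coordinatewise from $q_0(x_i|u_i)W_0(\tilde x_i|u_i)$. Two facts hold under $\bar p$:
\begin{itemize}
\item every $u_i\in\Uc=\suppo(q_0(u))$, by~\eqref{eq:effective-support}, so $q_0(x_i|u_i)$ is well defined;
\item Lemma~\ref{lem:class-consistency} applies coordinatewise and gives $c(X_i)=c_U(U_i)=c(\tilde X_i)$ with $\bar p$-probability one.
\end{itemize}
Hence the event $\{c(X^n)\ne c(\tilde X^n)\}$ has $\bar p$-probability zero. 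Its probability under the actual law~\eqref{eq:actual-rdp-law} is therefore at most the total variation in~\eqref{eq:rdp-compatibility}, which tends to $0$.

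\emph{Part (ii).} I use the representation~\eqref{eq:fm-induced-law}. Since $X-C-S$ holds memorylessly, $q(s^n|x^n)=\prod_i q(s_i|c(x_i))=:Q(s^n|c(x^n))$. Because the code acts on $X^n\sim q_0(x^n)$ independently of $S^n$ given $X^n$, the actual law is
\[
p(s^n,x^n,\tilde x^n)=q_0(x^n)\,p(\tilde x^n|x^n)\,Q(s^n|c(x^n)),
\]
where $p(\tilde x^n|x^n)$ is the base operational test channel. I introduce the auxiliary law
\[
p'(s^n,x^n,\tilde x^n)=q_0(x^n)\,p(\tilde x^n|x^n)\,Q(s^n|c(\tilde x^n)).
\]
The steps are then:
\begin{enumerate}
\item Couple $p$ and $p'$ with the same $(X^n,\tilde X^n)$, and draw $S^n$ identically whenever $c(X^n)=c(\tilde X^n)$. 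This gives $\|p-p'\|_{\TV}\le\mathsf P[c(X^n)\ne c(\tilde X^n)]\to0$ by (i).
\item The $(s^n,\tilde x^n)$-marginal of $p'$ is $p(\tilde x^n)Q(s^n|c(\tilde x^n))$.
\item By strong perception~\eqref{eq:strong-near-perfect} and contraction of total variation through $Q$, this marginal is close to $q_0(\tilde x^n)Q(s^n|c(\tilde x^n))$.
\item By matched marginals and $X-C-S$, the last expression equals $\prod_i q(s_i,\tilde x_i)$.
\end{enumerate}
The triangle inequality then gives~\eqref{eq:fm-criterion}.

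\emph{Part (iii).} First rewrite the target: $H(C)+I(S;X|C)=I(X;C)+I(X;S|C)=I(X;S,C)$. Let $C^n=c(X^n)$. Since $K_0$ is independent of $(S^n,X^n)$, and using the chain $(S^n,C^n)-X^n-(M_0,K_0)-\tilde X^n$,
\[
nR_0\ge I(M_0;S^n,C^n|K_0)=I(M_0,K_0;S^n,C^n)\ge I(\tilde X^n;S^n,C^n).
\]
Next I swap $C^n$ for $\tilde C^n:=c(\tilde X^n)$:
\[
I(\tilde X^n;S^n,C^n)\ge I(\tilde X^n;S^n,\tilde C^n)-H(\tilde C^n|C^n).
\]
By Fano's inequality and (i), $H(\tilde C^n|C^n)\le1+n\epsilon_n\log J$ with $\epsilon_n\to0$. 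Finally, $\tilde C^n$ is a function of $\tilde X^n$, and by transferability~\eqref{eq:fm-criterion} the law of $(S^n,\tilde X^n)$ is $\delta_n$-close in total variation to the i.i.d.\ law $q$. Continuity of entropy for finite alphabets, $|H(P)-H(P')|\le n\delta_n\log|\mathcal A|+h(\delta_n)$, applied to $H(\tilde X^n)$, $H(\tilde X^n|S^n,\tilde C^n)$ and the related terms, gives $I(\tilde X^n;S^n,\tilde C^n)\ge n\,I(X;S,C)-o(n)$. Dividing by $n$ and letting $n\to\infty$ yields~\eqref{eq:full-model-converse}.

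The main obstacle is this last continuity step. Total-variation closeness of $n$-letter laws only controls entropies up to an $n\delta_n\log|\mathcal A|$ term, so the argument must keep that error $o(n)$ and relies on $\delta_n\to0$.

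\emph{Final claim.} If $R_0=H(C)$, then~\eqref{eq:full-model-converse} forces $I(S;X|C)=0$, which is exactly $X-C-S$. Conversely, $X-C-S$ gives success by (ii).
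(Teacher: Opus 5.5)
Your proposal is correct and follows the paper's proof essentially step for step: part (i) via the ideal law and compatibility, part (ii) via the comparison law that replaces $c(x^n)$ by $c(\tilde x^n)$ in the semantic kernel, and part (iii) via $nR_0\ge I(S^n,c(X^n);\tilde X^n)$ plus finite-alphabet entropy continuity. The only difference is cosmetic and lies in (iii): you swap $c(X^n)$ for $c(\tilde X^n)$ using a Fano bound on $H(\tilde C^n|C^n)$ and then apply continuity to the law of $(S^n,\tilde X^n)$, whereas the paper first shows that the triple $(S^n,c(X^n),\tilde X^n)$ is $(\epsilon_n+\eta_n)$-close in total variation to the product target and applies continuity to that triple directly.
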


The complete proof is given in Appendix~\ref{app:operational-transfer}.

Comparing the conditions for exact Representation-Induced Transfer and exact Full-Model Transfer under a fixed factorization $\baseG$, we obtain from Theorems~\ref{thm:maximal-invariant} and~\ref{thm:operational-transfer}
\begin{align*}
    \text{exact Representation-Induced Transfer}
    &\iff X-C-S,\\
    X-C-S
    &\implies \text{exact Full-Model Transfer using $\baseC$},
\end{align*}
respectively, for any matched-marginal task and any compatible strong-RDP code sequence $\baseC$.
Thus $C$ specifies a semantic family guaranteed by every compatible code, but it need not describe the full transfer capability of a particular code. A compatible implementation may also preserve distinctions within a recurrent class. The converse limits the resulting additional semantic dependence:

{
\begin{align}
    I(S;X|C)
    &\le R_0-H(C).
    \label{eq:residual-semantic-budget}
\end{align}
}
Thus
\begin{align}
    \bigl\{q(s,x):q(x)=q_0(x),\ I(S;X|C)=0\bigr\}
    &\subseteq
    \bigl\{q(s,x):\text{$\baseC$ is full-model transferable}\bigr\}
    \nonumber\\
    &\subseteq
    {\bigl\{q(s,x):q(x)=q_0(x),\ I(S;X|C)\le R_0-H(C)\bigr\}}.
    \label{eq:full-model-sandwich}
\end{align}
The first term in
\begin{align}
    I(S;X)
    &=I(S;C)+I(S;X|C)
    \label{eq:semantic-information-decomposition}
\end{align}
is dependence represented by the guaranteed invariant, whereas the second term measures dependence on the particular source value within its recurrent class. When $R_0=H(C)$, the converse forces $I(S;X|C)=0$, and the two sides of~\eqref{eq:full-model-sandwich} coincide. Hence $X-C-S$ is again necessary and sufficient for exact Full-Model Transfer. At larger rates, the sufficient condition need not be necessary, as the following example shows.

\begin{example}[Full-Model Transfer Beyond the Recurrent-Class Invariant]
\label{ex:full-model-beyond-C}
Let $X\sim\mathrm{Bern}(1/2)$, $\Uc=\{0,1,e\}$, and
\begin{align*}
    q_0(u|x)
    &=\frac12\ind\{u=x\}+\frac12\ind\{u=e\},\\
    W_0(\tilde x|u)
    &=\begin{cases}
        \ind\{\tilde x=u\},&u\in\{0,1\},\\
        1/2,&u=e,
    \end{cases}
    \qquad \tilde x\in\{0,1\}.
\end{align*}
The nominal channel $T_0$ is a binary symmetric channel with crossover probability $1/4$. It is irreducible, so $C$ is constant and $H(C)=0$. Under Hamming distortion, the factorization has distortion $1/4$ and
$I(X;U)=I(\tilde X;U)=1/2$, and is therefore valid at
$(R_0,R_{c,0},D_0)=(1,0,1/4)$.

For each $n$, take $K_0=1$ and let the $2^n$ codewords $u^n(m_0)$ enumerate all binary sequences. The encoder sends the unique index for which $u^n(m_0)=x^n$. Since both $q_0(x|u)$ and $W_0(\tilde x|u)$ are deterministic for $u\in\{0,1\}$, the actual and ideal distributions coincide. Suppressing the constant $K_0$, they are
\begin{align*}
    p(x^n,m_0,u^n,\tilde x^n)
    &=\bar p(x^n,m_0,u^n,\tilde x^n)\\
    &=2^{-n}\ind\{x^n=u^n=\tilde x^n=u^n(m_0)\}.
\end{align*}
Thus the code sequence is compatible and has perfect perception and zero distortion, satisfying the base constraint $D_0=1/4$.

Now consider the matched-marginal task $S=X$. The induced semantic encoder sends the index of $s^n$, and the inherited renderer outputs $\tilde X^n=s^n$. Hence Full-Model Transfer is exact for every $n$, with $I(S;X|C)=1=R_0-H(C)$. However, $X-C-S$ fails because $C$ is constant, so Representation-Induced Transfer is not exact. The nominal factorization uses the erasure row with probability $1/2$, whereas the operational code uses only the deterministic renderer rows. This is permitted by compatibility: the block encoder need not implement $q_0(u|x)$ symbol by symbol. Here the rate-one code preserves more information than required by the base distortion constraint, and that additional information enables exact transfer of $S=X$.
\end{example}

By the likelihood-encoder construction in~\cite{Wagner2022}, compatible strong-RDP implementations exist at strict versions of the base rate inequalities~\eqref{eq:base-rate-r}--\eqref{eq:base-rate-sum}. Hence the guaranteed family in Theorem~\ref{thm:operational-transfer} is available at the ordinary interior RDP rates, with no task-specific modification of the deployed code.

\section{Examples and Gaussian Specialization}
\label{sec:examples}

\subsection{A Controllable Task With a Positive Price of Transfer}
\label{subsec:overlap-example}

Let $\Xc=\{0,1,2\}$, $q_0(x)=\left(\frac14,\frac12,\frac14\right)$, $\Uc=\{0,1\}$, and
\begin{align}
    W_0(x|0)&=\left(\frac12,\frac12,0\right), \qquad W_0(x|1)=\left(0,\frac12,\frac12\right).
    \label{eq:overlap-renderer}
\end{align}
Define the base representation law by
\begin{align}
    q_0(u=1|x)=\frac{x}{2},
    \qquad x\in\{0,1,2\}.
    \label{eq:overlap-base-encoder}
\end{align}
Then $q_0(u=0)=q_0(u=1)=1/2$, and $\tilde X$ has the same marginal distribution as $X$.  Under Hamming distortion,
\begin{align}
    \E[d(X,\tilde X)]&=\frac12, \qquad I(X;U)=\frac12, \qquad I(\tilde X;U)=\frac12.
    \label{eq:overlap-base-quantities}
\end{align}
Hence this is an $(R_0,R_{c,0},D_0)=(1/2,0,1/2)$ {RDP factorization}.

Let $S\sim\mathrm{Bern}(1/2)$ and choose the transfer target distribution
\begin{align}
    q(x|s=0)&=W_0(x|0), \qquad q(x|s=1)=W_0(x|1).
    \label{eq:overlap-task}
\end{align}
Its output marginal is $q_0(x)$.  The two rows are the endpoints of $\Cren$, so exact Renderer-Only Transfer forces $U=S$.  Therefore
\begin{align}
    R_{\ro}^{\infty}(q|W_0)
    &=1.
    \label{eq:overlap-reuse-rate}
\end{align}
The output symbol $1$ occurs with probability $1/2$ and hides $S$, whereas $0$ and $2$ reveal it.  Thus
\begin{align}
    I(S;X)
    &=\frac12, \qquad \Gamma_{W_0}(q)
    =\frac12.
    \label{eq:overlap-gamma}
\end{align}

The induced test channel is
\begin{align}
    T_0
    &=\begin{bmatrix}
        1/2&1/2&0\\
        1/4&1/2&1/4\\
        0&1/2&1/2
    \end{bmatrix}.
    \label{eq:overlap-T0}
\end{align}
Representation-Induced Transfer changes the two target rows to
\begin{align}
    \left(\frac38,\frac12,\frac18\right),
    \qquad
    \left(\frac18,\frac12,\frac38\right),
    \label{eq:overlap-ri-rows}
\end{align}
respectively, and has defect $\epsilon_{\ri}=1/8$.  This example therefore exhibits
\begin{align}
    \text{Representation-Induced Transfer fails}
    \;<\;
    \text{Renderer-Only Transfer succeeds at a positive price}
    \;<\;
    \text{Full Redesign}.
\end{align}

\subsection{Same RDP Test Channel, Different Renderer-Only Transfer Capability}
\label{subsec:same-T0-example}

Let $\Xc=\{0,1,2\}$, $q_0(x)=\left(\frac14,\frac14,\frac12\right)$, and define the class map
\begin{align}
    {c}(x)
    &=\begin{cases}
        0,&x\in\{0,1\},\\
        1,&x=2.
    \end{cases}
\end{align}
Consider the end-to-end channel
\begin{align}
    T_0(\tilde x|0)
    &=T_0(\tilde x|1)
    =\left(\frac12,\frac12,0\right), \qquad T_0(\tilde x|2)=(0,0,1).
    \label{eq:same-T0-channel}
\end{align}
It preserves $q_0(x)$.  Under class distortion
\begin{align}
    d_{c}(x,y)=\ind\{{c}(x)\ne {c}(y)\},
    \label{eq:class-distortion}
\end{align}
the distortion is zero.

\paragraph{Realization A: class renderer.}
Let $U_A={c}(X)$ and
\begin{align}
    W_A(x|0)&=\left(\frac12,\frac12,0\right), \qquad W_A(x|1)=(0,0,1).
    \label{eq:class-renderer}
\end{align}
Then $I(X;U_A)=I(\tilde X;U_A)=1$. Its renderer hull contains only distributions of the form
\begin{align}
    \left(\frac{\alpha}{2},\frac{\alpha}{2},1-\alpha\right),
    \qquad 0\le\alpha\le1.
    \label{eq:class-renderer-hull}
\end{align}

\paragraph{Realization B: output-latent renderer.}
Let $U_B=\tilde X$, choose $q_0(u_B|x)=T_0(u_B|x)$, and let
\begin{align}
    W_B(y|u)=\ind\{y=u\}.
    \label{eq:identity-renderer}
\end{align}
Then
\begin{align}
    I(X;U_B)=I(X;\tilde X)=1,
    \qquad
    I(\tilde X;U_B)=H(\tilde X)=\frac32.
\end{align}
At the common base point
\begin{align}
    (R_0,R_{c,0},D_0)=\left(1,\frac12,0\right),
    \label{eq:common-base-point}
\end{align}
both are valid RDP factorizations, and $R_0=1$ is minimum because zero class distortion requires the class bit to be recovered {(a proof of this statement is given in Appendix~\ref{app:same-test-channel-separation})}.

\begin{proposition}[RDP-Equivalent Realizations Can Be Renderer-Only-Transfer Inequivalent]
\label{prop:same-T0-separation}
Realizations A and B induce the same $T_0$, the same source--reconstruction joint law, the same Representation-Induced Transfer map, and the same recurrent-class invariant.  Nevertheless,
\begin{align}
    \mathcal C(W_A)
    \subsetneq
    \mathcal C(W_B)=\Pc(\Xc).
    \label{eq:hull-strict-separation}
\end{align}
For the matched-marginal task $S=X$, realization A is uncontrollable, whereas realization B is controllable and has zero Renderer-Only Transfer price at every common-randomness rate.
\end{proposition}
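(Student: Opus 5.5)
The plan is to verify each claim of Proposition~\ref{prop:same-T0-separation} by direct computation, and then to invoke Corollary~\ref{cor:controllability} and Proposition~\ref{prop:finite-zero-price} for the transfer statements.

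\emph{Same $T_0$ and same joint law.} For realization A, $q_0(u_A|x)=\ind\{u_A=c(x)\}$. Composing with $W_A$ gives $T_0(\cdot|x)=W_A(\cdot|c(x))$, which matches~\eqref{eq:same-T0-channel} row by row. For realization B, $q_0(u_B|x)=T_0(u_B|x)$ and $W_B$ is the identity, so the composition is again $T_0$. Both source--reconstruction laws equal $q_0(x)T_0(\tilde x|x)$. The representation-induced output map~\eqref{eq:ri-output-law} depends on $\baseG$ only through $\sum_x q(x|s)T_0(\hat x|x)$, so the two realizations induce the same map. The recurrent-class label is a function of $T_0$ alone. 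On $\Xsupp=\{0,1,2\}$ the chain has the closed irreducible classes $\{0,1\}$ and $\{2\}$, so $C=c(X)$ with the class map given above.

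\emph{Hulls.} $\mathcal C(W_B)$ is the convex hull of the three point masses $\delta_0,\delta_1,\delta_2$, which is $\Pc(\Xc)$. $\mathcal C(W_A)$ is the segment~\eqref{eq:class-renderer-hull}. This segment misses, for example, $\delta_0$, which gives the strict inclusion~\eqref{eq:hull-strict-separation}.

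\emph{Task $S=X$.} Here $q(x|s)=\delta_s$ for every $s\in\suppo(q(s))=\{0,1,2\}$, and $q(x)=q_0(x)$, so the task is matched-marginal. For realization A, $\delta_0\notin\mathcal C(W_A)$, so condition (iii) of Corollary~\ref{cor:controllability} fails and $\mathscr R_{\ro}(q|W_A)=\emptyset$. For realization B, every row lies in $\Pc(\Xc)=\mathcal C(W_B)$, so the task is controllable.

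\emph{Zero price for B.} One route is Proposition~\ref{prop:finite-zero-price}: every Full Redesign optimizer at any $R_c$ has decoder rows $p(x|v)\in\Pc(\Xc)=\mathcal C(W_B)$, so its hull condition holds automatically. More directly, Lemma~\ref{lem:decoder-side-reduction} lets any feasible $S-V-X$ factorization in $\mathcal D(q)$ be realized through $W_B$ with the same $I(S;V)$ and $I(S,X;V)$. Together with~\eqref{eq:reuse-in-full}, this gives $\mathscr R_{\ro}(q|W_B)=\mathscr R_{\full}(q)$, and hence $\Deltaro(R_c;q)=0$ for every $R_c$. Alternatively, Theorem~\ref{thm:universal-no-loss}(iii)$\Rightarrow$(iv) applies, since $W_B$ contains every point-mass row.

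The only mildly delicate step is pinning down the recurrent classes and $\Xsupp$. That step is a direct inspection of the two blocks of $T_0$. Everything else reduces to earlier results.
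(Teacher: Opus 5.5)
Your proposal is correct and follows the paper's argument. The paper separates the hulls by noting that~\eqref{eq:class-renderer-hull} misses the required point masses $\delta_0,\delta_1$, and it gets zero price from the identity renderer $W_B$ realizing every decoder channel; this is exactly your Lemma~\ref{lem:decoder-side-reduction} / Theorem~\ref{thm:universal-no-loss} route to zero price at every $R_c$. You also spell out the $T_0$, joint-law, and recurrent-class checks that the paper leaves to inspection, and the one implicit step worth adding is that $q_0(u_B)=q_0T_0=q_0$ has full support, so all three point-mass rows of $W_B$ lie in $\Uc$.
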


{
For $S=X$, \eqref{eq:class-renderer-hull} excludes the required point masses $\delta_0$ and $\delta_1$, whereas the identity renderer $W_B$ can implement every decoder channel.
}

The example shows that the end-to-end RDP test channel determines Representation-Induced Transfer but not Renderer-Only Transfer.  The internal factorization remains operationally relevant even after the complete classical RDP coupling is fixed.


\subsection{Scalar Gaussian Specialization}
\label{subsec:gaussian}

The preceding general characterizations retain their finite-alphabet scope. Here we specialize our results to the Gaussian case in which the operational definitions use real source, latent, and reconstruction alphabets, probability kernels, and $\|P-P'\|_{\TV}=\sup_A|P(A)-P'(A)|$, where $A$ ranges over measurable sets. Sources and targets remain i.i.d., the renderer remains memoryless, and the base distortion is $d(x,\tilde x)=(x-\tilde x)^2$. The Gaussian base RDP characterization follows from~\cite[Thm.~2 and Prop.~1]{Wagner2022}, whose uniform-integrability condition holds in this setting. For synthesis achievability, we use general-alphabet soft covering~\cite[Secs.~V and VII]{Cuff2013}.

The scalar Gaussian model makes the two representation invariants especially transparent.  The end-to-end test channel determines how the original representation attenuates semantic dependence, whereas the renderer noise determines which semantic conditionals remain reachable after adaptation.  Jointly Gaussian realizations are natural in this setting: scalar Gaussian output-constrained/RDP problems with limited common randomness admit explicit Gaussian characterizations, and Gaussian optimality is known for Gaussian-vector RDP under quadratic criteria~\cite{Wagner2022,XieEtAl2025,QianEtAl2025}.

Let $X_0\sim\N(0,1)$ and consider the jointly Gaussian RDP factorization
\begin{align}
    X_0
    &=\alpha_0 U+\sqrt{1-\alpha_0^2}\,Z_X,
    \label{eq:gauss-base-x}\\
    \tilde X
    &=\beta_0 U+\sqrt{1-\beta_0^2}\,Z_0,
    \label{eq:gauss-base-y}
\end{align}
where $U,Z_X,Z_0$ are independent standard Gaussian random variables and $0<\alpha_0,\beta_0<1$.  The output has the same $\N(0,1)$ marginal as the source.  The end-to-end correlation and squared-error distortion are
\begin{align}
    \rho_0
    &:=\E[X_0\tilde X]=\alpha_0\beta_0,
    \label{eq:gauss-rho0}\\
    D_0
    &=\E[(X_0-\tilde X)^2]=2(1-\rho_0),
    \label{eq:gauss-D0}
\end{align}
and the two base RDP information quantities are
\begin{align}
    I(X_0;U)
    &=\frac12\log\frac{1}{1-\alpha_0^2},
    \label{eq:gauss-base-rate1}\\
    I(\tilde X;U)
    &=\frac12\log\frac{1}{1-\beta_0^2}.
    \label{eq:gauss-base-rate2}
\end{align}
Thus $\beta_0^2$ is the fraction of output variance that can be controlled through the renderer input, whereas $1-\beta_0^2$ is the variance of the renderer-local randomness.

Now let the transfer target pair $(S,X)$ be zero-mean jointly Gaussian with unit variances and correlation $\rho$, so
\begin{align}
    X=\rho S+\sqrt{1-\rho^2}\,Z.
    \label{eq:gauss-semantic-target}
\end{align}
This is a matched-marginal task because $X\sim\N(0,1)$.

\begin{proposition}[Gaussian Controllability and Price of Renderer-Only Transfer]
\label{prop:gaussian-reuse}
For the renderer in~\eqref{eq:gauss-base-y}, the jointly Gaussian transfer target~\eqref{eq:gauss-semantic-target} is representable through $W_0$ if and only if $|\rho|\le\beta_0$.  Under unlimited common randomness, a finite {compression rate} is possible if and only if
\begin{align}
    |\rho|<\beta_0.
    \label{eq:gauss-finite-feasible}
\end{align}
In that case, the unique representing steering law is
\begin{align}
    U|S=s
    \sim
    \N\left(
        \frac{\rho}{\beta_0}s,
        1-\frac{\rho^2}{\beta_0^2}
    \right),
    \label{eq:gauss-steering}
\end{align}
and
\begin{align}
    R_{\ro}^{\infty}(\rho|\beta_0)
    &=\frac12\log\frac{\beta_0^2}{\beta_0^2-\rho^2},
    \label{eq:gauss-reuse-rate}\\
    R_{\full}^{\infty}(\rho)
    &=I(S;X)=\frac12\log\frac{1}{1-\rho^2},
    \label{eq:gauss-full-rate}\\
    \Gamma_{W_0}(\rho)
    &=\frac12\log
      \frac{\beta_0^2(1-\rho^2)}{\beta_0^2-\rho^2}.
    \label{eq:gauss-price}
\end{align}
\end{proposition}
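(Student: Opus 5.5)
The plan is to reduce everything to a deconvolution problem for the Gaussian renderer $W_0(\cdot|u)=\N(\beta_0u,1-\beta_0^2)$. Representability means there is a steering kernel $p(u|s)$ with $\int p(u|s)W_0(x|u)\,du=\N(\rho s,1-\rho^2)$ for ($q$-a.e.) $s$. In characteristic functions, the output law at a given $s$ is the law of $\beta_0U+\sqrt{1-\beta_0^2}Z_0$, so
\begin{align*}
\phi_{\beta_0U|s}(t)\,e^{-(1-\beta_0^2)t^2/2}=e^{i\rho st-(1-\rho^2)t^2/2}.
\end{align*}
Dividing gives $\phi_{\beta_0U|s}(t)=e^{i\rho st-(\beta_0^2-\rho^2)t^2/2}$.

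This function is a characteristic function if and only if $\beta_0^2-\rho^2\ge0$. For the forward direction, if $\beta_0^2<\rho^2$ then $|\phi|$ grows without bound, which is impossible for a characteristic function. Conversely, when $\beta_0^2\ge\rho^2$ it is the characteristic function of $\N(\rho s,\beta_0^2-\rho^2)$. Since characteristic functions determine laws, the representing steering law is unique, which gives~\eqref{eq:gauss-steering} after rescaling by $\beta_0$. This proves the representability claim $|\rho|\le\beta_0$.

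For the rate statements I would use the general-alphabet analogue of Theorem~\ref{thm:unlimited-reuse}. Under unlimited common randomness, $R^\infty_{\ro}=\inf I(S;U)$ over representing steering laws. The achievability part relies on Gaussian soft covering as in~\cite{Cuff2013}, and the converse part is the data-processing step $I(S;V)\ge I(S;U)$, which does not depend on finiteness. I expect justifying this general-alphabet transfer to be the main obstacle, or at least the step needing care. Concretely, the converse requires a single-letterization of $R\ge\frac1n I(S^n;M,K)\ge\frac1n\sum_i I(S_i;U_i)$ together with continuity of the feasible-set constraint under vanishing TV. I would handle it by a time-sharing/convexity argument: $I(S;U)$ is convex in $p(u|s)$ and the constraint is linear, so the averaged kernel is feasible approximately. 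I would then invoke uniqueness of the representation and lower semicontinuity of mutual information to pass to the exact constraint.

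Given uniqueness, the minimum is simply the value at~\eqref{eq:gauss-steering}. Here $(S,U)$ is jointly Gaussian with correlation $\rho/\beta_0$, so
\begin{align*}
I(S;U)=\tfrac12\log\frac{1}{1-\rho^2/\beta_0^2}.
\end{align*}
This gives~\eqref{eq:gauss-reuse-rate}, which is finite if and only if $|\rho|<\beta_0$. At $|\rho|=\beta_0$ the unique law is the deterministic map $U=\mathrm{sgn}(\rho)S$, so $I(S;U)=\infty$ for the continuous $S$ and no finite rate exists. This yields~\eqref{eq:gauss-finite-feasible}.

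The Full Redesign value~\eqref{eq:gauss-full-rate} follows from the Gaussian version of~\eqref{eq:full-mutual}, with $I(S;X)=\frac12\log\frac1{1-\rho^2}$; alternatively, one can take $V=X$ for achievability and use data processing for the converse. The price~\eqref{eq:gauss-price} is then the difference of the two rates. As a consistency check, this difference equals $I(S;U|X)$ evaluated under the unique steering law, in agreement with~\eqref{eq:gamma-definition}.
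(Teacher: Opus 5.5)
Your outline follows the paper's proof closely. The characteristic-function division is a cleaner form of the paper's variance matching plus injectivity of Gaussian convolution. Your averaged-kernel convexity bound is the paper's $nR_n\ge\sum_i I(S_i;U_i)\ge nI(S_Q;U_Q)$. Achievability via Cuff's general-alphabet soft covering, with $U$ as codebook variable, is also the same.

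\textbf{The gap is in the step you flagged.} Uniqueness and lower semicontinuity do not by themselves carry you from approximately feasible averaged kernels to the exact one. Lower semicontinuity requires that the laws of $(S_Q,U_Q)$ actually converge weakly to $q(s)\,\N(\rho s/\beta_0,1-\rho^2/\beta_0^2)$. Uniqueness identifies that limit only \emph{if} limit points exist and are themselves representations. That needs a compactness and continuity argument you have not supplied. Worse, for $|\rho|>\beta_0$ there is no representation to be unique. Single-letter non-representability alone does not exclude code sequences whose time-averaged output merely \emph{approaches} the target. So the operational ``only if'' direction is not yet proved in that range.

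The paper closes both holes by applying your own division trick to the time-shared pair. The renderer noise is independent of $(S_Q,U_Q)$, and $(S_Q,\hat X_Q)$ converges in total variation to the target, so pointwise
\begin{align*}
\varphi_{S_Q,U_Q}(a,b)
=\varphi_{S_Q,\hat X_Q}(a,b/\beta_0)\,e^{\frac{1-\beta_0^2}{2\beta_0^2}b^2}
\longrightarrow\exp\!\left[-\frac12\left(a^2+b^2+2\frac{\rho}{\beta_0}ab\right)\right].
\end{align*}
\begin{itemize}
\item For $|\rho|>\beta_0$ the limit exceeds one at $a=-(\rho/\beta_0)b$ with $b\ne0$, so no code sequence exists at any rate.
\item For $|\rho|\le\beta_0$, L\'evy's continuity theorem gives the needed weak convergence. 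Lower semicontinuity of relative entropy then yields $\liminf R_n\ge I(S;U)$, including $+\infty$ at $|\rho|=\beta_0$.
\end{itemize}
An equivalent route uses tightness. Tightness of $\hat X_Q$ forces tightness of $U_Q$, because the renderer noise is independent and additive. Prokhorov's theorem then gives limit points, and continuity of Gaussian convolution shows that every limit point represents the target.

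A separate, smaller issue concerns your alternative Full Redesign achievability with $V=X$. It fails for continuous alphabets, since $I(S,X;X)=\infty$ violates $R+R_c\ge I(S,X;V)$ at every finite $R_c$. You would need auxiliaries approaching $X$, e.g.\ $X=\gamma V_\gamma+\sqrt{1-\gamma^2}\,N$ with $\gamma\uparrow1$. Alternatively, use the Gaussian synthesis region that the paper cites.
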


{A proof is given in Appendix~\ref{app:gaussian-reuse}.}

The condition $|\rho|\le\beta_0$ has a direct noise-floor interpretation.  The target conditional variance is $1-\rho^2$, while the fixed renderer already injects variance $1-\beta_0^2$.  Hence the renderer cannot synthesize a conditional law sharper than its own irreducible noise.  As $|\rho|\uparrow\beta_0$, the steering law becomes deterministic and the required mutual information diverges.  Conversely, $\Gamma_{W_0}(\rho)\downarrow0$ as $\beta_0\uparrow1$, i.e., as the renderer approaches a deterministic programmable interface.

Representation-Induced Transfer depends on the product $\rho_0=\alpha_0\beta_0$, not on $\beta_0$ alone.  The induced test channel is
\begin{align}
    \tilde X
    =\rho_0X_0+\sqrt{1-\rho_0^2}\,Z',
    \label{eq:gauss-test-channel}
\end{align}
so applying it to the transfer target pair produces an output pair $(S,\hat X)$ with correlation $\rho\rho_0$. Thus exact nontrivial transfer is possible only in the lossless limit $\rho_0=1$. The other exact case is the independent task $\rho=0$.

Even when exact transfer fails, the induced output can retain dependence on $S$. To quantify this retained dependence, we compare $I(S;\hat X)$ with the target value $I(S;X)=\frac12\log\frac{1}{1-\rho^2}$, which also gives the Full Redesign rate under unlimited common randomness. Under the RI law,
\begin{align}
    I(S;\hat X)
    =\frac12\log\frac{1}{1-\rho^2\rho_0^2},
    \label{eq:gauss-ri-information}
\end{align}
Here the output mutual information measures the semantic dependence retained by the induced single-letter law. Using $\rho_0=1-D_0/2$ from~\eqref{eq:gauss-D0}, we obtain
\begin{align}
    I(S;\hat X)
    =\frac12\log
    \frac{1}{1-\rho^2(1-D_0/2)^2}.
    \label{eq:gauss-distortion-semantic}
\end{align}
For a fixed nontrivial transfer target, a smaller base RDP distortion therefore preserves more mutual information about $S$, with $I(S;\hat X)$ approaching the target value $I(S;X)$ as $D_0\downarrow0$. This quantifies dependence retention, while exact transfer requires preservation of the full transfer target distribution.

\begin{proposition}[Same Gaussian Test Channel, Different Renderer-Only Transfer Capability]
\label{prop:gaussian-same-test}
Consider two Gaussian factorizations of the form~\eqref{eq:gauss-base-x}--\eqref{eq:gauss-base-y} satisfying
\begin{align}
    \alpha_1\beta_1=\alpha_2\beta_2=\rho_0
    \quad\text{and}\quad
    0<\beta_1<\beta_2<1.
    \label{eq:gauss-same-product}
\end{align}
They induce the same end-to-end RDP test channel, the same base source--reconstruction law, and the same Representation-Induced Transfer attenuation.  Nevertheless, any jointly Gaussian transfer target satisfying
\begin{align}
    \beta_1<|\rho|<\beta_2
    \label{eq:gauss-separation-range}
\end{align}
is infeasible through the first renderer and feasible at finite rate through the second.
\end{proposition}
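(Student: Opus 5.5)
The plan is to reduce everything to Proposition~\ref{prop:gaussian-reuse} and to the Gaussian composition formula~\eqref{eq:gauss-test-channel}. First I will show that the two factorizations agree on everything except the renderer. In each factorization $(X_0,\tilde X)$ is a zero-mean jointly Gaussian pair with unit variances. This holds because $X_0$ and $\tilde X$ are linear combinations of the independent Gaussians $U,Z_X,Z_0$, and the coefficients give $\E[X_0^2]=\E[\tilde X^2]=1$. The only remaining parameter of this law is the cross-correlation $\E[X_0\tilde X]=\alpha_i\beta_i$. By~\eqref{eq:gauss-same-product} this equals $\rho_0$ for both $i=1,2$.

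From this I get the first three claims in turn.
\begin{itemize}
\item The source--reconstruction joint law is the same bivariate normal with correlation $\rho_0$.
\item The conditional law $\tilde X|X_0=x$ is $\N(\rho_0x,1-\rho_0^2)$ in both cases, which is exactly~\eqref{eq:gauss-test-channel}. So $T_0$ coincides.
\item The Representation-Induced Transfer output kernel is $(B_0W_0)(\hat x|s)=\int q(x|s)T_0(\hat x|x)\,dx$, by the Gaussian analogue of~\eqref{eq:ri-output-law}. It therefore depends on the factorization only through $T_0$, so both yield an output pair $(S,\hat X)$ with correlation $\rho\rho_0$. This is the same attenuation.
\item The induced distortion $2(1-\rho_0)$ from~\eqref{eq:gauss-D0} also agrees, which I will note in passing.
\end{itemize}

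For the separation I will apply Proposition~\ref{prop:gaussian-reuse} separately to each renderer $W_i(\tilde x|u)=\N(\beta_iu,1-\beta_i^2)$, which has the form~\eqref{eq:gauss-base-y}. Fix a target with correlation $\rho$ satisfying~\eqref{eq:gauss-separation-range}.
\begin{itemize}
\item For renderer~1, $|\rho|>\beta_1$. By the first statement of Proposition~\ref{prop:gaussian-reuse}, no steering kernel represents $q(x|s)$ through $W_1$, so the target is infeasible at any rate. Operationally $\mathscr R_{\ro}$ is empty and $R_{\ro}^\infty=+\infty$. I would remark that the intuition is the noise floor: the conditional variance $1-\rho^2$ is smaller than the renderer noise $1-\beta_1^2$.
\item For renderer~2, $|\rho|<\beta_2$. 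Proposition~\ref{prop:gaussian-reuse} gives the steering law~\eqref{eq:gauss-steering} and the finite rate $\frac12\log\frac{\beta_2^2}{\beta_2^2-\rho^2}$ under unlimited common randomness.
\end{itemize}
Finally I will note that such $\rho$ exist because $\beta_1<\beta_2<1$. The range is nonempty, and it lies inside $(0,1)$, so the target is a legitimate nondegenerate Gaussian pair.

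The main obstacle is making sure "infeasible" covers all target representations, not just jointly Gaussian steering laws. This is delegated to the "if and only if" in Proposition~\ref{prop:gaussian-reuse}, presumably via a variance or characteristic-function argument. If I had to justify it directly, I would use characteristic functions. Representability requires $\varphi_{X|S=s}(t)=\varphi_{\beta U|S=s}(t)\,e^{-(1-\beta^2)t^2/2}$. This forces $\varphi_{\beta U|S=s}(t)=\exp\bigl(i\rho st-(\beta^2-\rho^2)t^2/2\bigr)$, which is a valid characteristic function only if $\beta^2\ge\rho^2$. The same computation shows the steering law is unique, which is why infeasibility for $W_1$ is unconditional.
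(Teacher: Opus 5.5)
Your proposal is correct and follows the paper's argument. The common product $\alpha_i\beta_i=\rho_0$ fixes the test channel~\eqref{eq:gauss-test-channel}, and with it the source--reconstruction law and the Representation-Induced attenuation $\rho\rho_0$; Proposition~\ref{prop:gaussian-reuse}, applied to each renderer, then gives infeasibility for $|\rho|>\beta_1$ and finite-rate feasibility for $|\rho|<\beta_2$. One small correction: cite operational infeasibility through $W_1$ from the second part of that proposition (finite rate if and only if $|\rho|<\beta_0$), not the first, because single-letter non-representability is not by itself an operational converse in this continuous setting; the appendix proof of that part applies your characteristic-function modulus argument to the time-shared pair $(S_Q,U_Q)$ of an arbitrary code sequence.
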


{
The common product in~\eqref{eq:gauss-same-product} gives the same end-to-end RDP test channel by~\eqref{eq:gauss-test-channel}. On the other hand, Proposition~\ref{prop:gaussian-reuse} shows that a transfer target is unrepresentable through renderer $i$ when $|\rho|>\beta_i$, and achievable at finite rate when $|\rho|<\beta_i$. Condition~\eqref{eq:gauss-separation-range} therefore gives the stated strict separation. Thus, the base RDP test channel fixes only the product $\alpha_i\beta_i$ and does not determine the renderer's ability to support future transfer tasks.
}

\subsubsection{Numerical Illustrations}

We next visualize two implications of the Gaussian formulas.  First, consider the {minimum-compression-rate} Gaussian RDP boundary at a fixed base distortion $D_0$ and base common-randomness rate $R_{c,0}$.  Let
\begin{align}
    \rho_0
    &=1-\frac{D_0}{2},
    &c_0
    &=2^{-2R_{c,0}}.
    \label{eq:gauss-plot-parameters}
\end{align}
On the Gaussian RDP boundary characterized in~\cite{Wagner2022}, both information constraints in~\eqref{eq:gauss-base-rate1}--\eqref{eq:gauss-base-rate2} are active.  Combining $\alpha_0\beta_0=\rho_0$ and $1-\beta_0^2=c_0(1-\alpha_0^2)$ gives
\begin{align}
    \beta_0^2
    &=\frac{1-c_0+\sqrt{(1-c_0)^2+4c_0\rho_0^2}}{2}.
    \label{eq:gauss-beta-base-resources}
\end{align}
Since finite-rate Renderer-Only Transfer requires $|\rho|<\beta_0$, define
\begin{align}
    \rho_{\max}(D_0,R_{c,0})
    &:=\sup\left\{|\rho|:
    R_{\ro}^{\infty}(\rho|\beta_0)<\infty\right\}
    =\beta_0.
    \label{eq:gauss-rho-max}
\end{align}

\begin{figure*}[t]
    \centering
    \subfloat[Base Common Randomness Versus Semantic Controllability]{%
        \begin{minipage}{0.475\textwidth}\centering
\begin{tikzpicture}
\begin{axis}[
  width=\linewidth, height=0.76\linewidth,
  xmin=0,xmax=2.5,ymin=0.4,ymax=1.015,
  xlabel={Base common-randomness rate $R_{c,0}$ [bits/sample]},
  ylabel={Semantic correlation boundary $\rho_{\max}$},
  xtick={0,0.5,1,1.5,2,2.5},ytick={0.4,0.5,0.6,0.7,0.8,0.9,1},
  tick label style={font=\scriptsize},label style={font=\scriptsize},
  legend style={font=\scriptsize,at={(0.97,0.04)},anchor=south east,
    draw=gray!50,fill=white,cells={anchor=west}},
  grid=major,grid style={gray!25},axis line style={black},
  every axis plot/.append style={line width=0.85pt,no marks}]
\addplot[blue!75!black] table[x=Rc0,y=D0_04]{numerical_plots/gaussian_transfer/data/controllability.dat};
\addlegendentry{$D_0=0.4$}
\addplot[orange!90!black,dashed] table[x=Rc0,y=D0_08]{numerical_plots/gaussian_transfer/data/controllability.dat};
\addlegendentry{$D_0=0.8$}
\addplot[green!55!black,dashdotted] table[x=Rc0,y=D0_12]{numerical_plots/gaussian_transfer/data/controllability.dat};
\addlegendentry{$D_0=1.2$}
\addplot[red!75!black,densely dotted] table[x=Rc0,y=D0_16]{numerical_plots/gaussian_transfer/data/controllability.dat};
\addlegendentry{$D_0=1.6$}
\end{axis}
\end{tikzpicture}\end{minipage}%
        \label{fig:gaussian-controllability}}
    \hfill
    \subfloat[Same Test Channel, Different Renderer-Only Transfer Rates]{%
        \begin{minipage}{0.475\textwidth}\centering
\begin{tikzpicture}
\begin{axis}[
  width=\linewidth,height=0.76\linewidth,
  xmin=0,xmax=0.92,ymin=0,ymax=4,
  xlabel={Semantic correlation magnitude $|\rho|$},
  ylabel={Compression rate [bits/sample]},
  xtick={0,0.2,0.4,0.6,0.8},ytick={0,0.5,1,1.5,2,2.5,3,3.5,4},
  tick label style={font=\scriptsize},label style={font=\scriptsize},
  legend style={font=\scriptsize,at={(0.03,0.97)},anchor=north west,
    draw=gray!50,fill=white,cells={anchor=west}},
  grid=major,grid style={gray!25},axis line style={black},
  unbounded coords=jump,
  every axis plot/.append style={line width=0.85pt,no marks}]
\addplot[blue!75!black] table[x=rho_full,y=R_full]{numerical_plots/gaussian_transfer/data/transfer_rates.dat};
\addlegendentry{Full Redesign}
\addplot[orange!90!black,dashed] table[x=rho_b1,y=R_b1]{numerical_plots/gaussian_transfer/data/transfer_rates.dat};
\addlegendentry{Renderer 1: $\beta_1=0.6$}
\addplot[green!55!black,dashdotted] table[x=rho_b2,y=R_b2]{numerical_plots/gaussian_transfer/data/transfer_rates.dat};
\addlegendentry{Renderer 2: $\beta_2=0.9$}
\draw[black,densely dotted,line width=0.6pt] (axis cs:0.6,0)--(axis cs:0.6,4);
\draw[black,densely dotted,line width=0.6pt] (axis cs:0.9,0)--(axis cs:0.9,4);
\node[anchor=north west,font=\scriptsize] at (axis cs:0.608,3.72) {$\beta_1$};
\node[anchor=north east,font=\scriptsize] at (axis cs:0.892,3.72) {$\beta_2$};
\end{axis}
\end{tikzpicture}\end{minipage}%
        \label{fig:gaussian-same-test-rates}}
    \caption{Gaussian Illustrations of Renderer-Only Transfer.  Panel~(a) plots the semantic-correlation boundary in~\eqref{eq:gauss-rho-max} along the {minimum-compression-rate} base RDP boundary for $D_0\in\{0.4,0.8,1.2,1.6\}$.  Panel~(b) fixes the end-to-end RDP correlation $\rho_0=0.5$ and compares Full Redesign with two factorizations having $\beta_1=0.6$ and $\beta_2=0.9$; the dotted vertical lines mark their respective controllability boundaries.}
    \label{fig:gaussian-evaluations}
\end{figure*}

Figure~\ref{fig:gaussian-controllability} shows how the resources used to construct the \emph{base} RDP system affect its later semantic capability.  At a fixed $D_0$, increasing $R_{c,0}$ along the {minimum-compression-rate} Gaussian RDP boundary increases $\beta_0$ and therefore enlarges the range $|\rho|<\rho_{\max}$ of jointly Gaussian transfer targets that can be synthesized at finite rate through the inherited renderer.  The horizontal axis is the common-randomness rate of the original RDP system, not the common randomness available to the later semantic code.  Larger base distortion produces a less controllable renderer at $R_{c,0}=0$, while additional base common randomness drives the boundary toward one.

Figure~\ref{fig:gaussian-same-test-rates} isolates the effect of the internal factorization.  The two systems have the same product $\alpha_i\beta_i=\rho_0$, and hence the same end-to-end RDP test channel and base distortion, but the larger renderer coefficient $\beta_2$ yields both a smaller Renderer-Only Transfer rate for every $0<|\rho|<\beta_1$ and a strictly larger feasible task range.  In particular, transfer targets with $0.6<|\rho|<0.9$ are infeasible through the first renderer but feasible through the second.  Thus equal end-to-end RDP behavior can conceal substantially different future semantic controllability.


\section{Concluding Remarks}
\label{sec:discussion}

This work develops an information-theoretic framework for assessing which future transfer tasks can be supported by reusing part or all of a generative RDP system. The results show that this capability cannot be inferred from the original RDP performance alone. In the Gaussian example, different internal factorizations realize the same end-to-end test channel, yet one renderer can support transfer targets that another cannot. Thus, the choice of internal representation and renderer matters beyond its role in the original compression task. This distinction becomes relevant when a system is designed before its eventual transfer tasks are known. If future reuse is anticipated, the initial design could account for both the original compression objective and the class of transfer tasks that the system is expected to support. A natural next problem is to determine how these objectives can be met jointly within a specified class of latent interfaces or renderer architectures.

A complementary question is how much the transfer capability changes when a controlled semantic mismatch is acceptable. For Representation-Induced Transfer, exact synthesis requires each target conditional distribution to be preserved by the inherited kernel. A target outside this invariant class can therefore fail the exact-transfer condition even when the actual conditional output distribution closely approximates the desired one. The base distortion bound provides a first guarantee in this direction by controlling the average single-letter transport cost between these distributions. This guarantee concerns the fixed transfer law, however, and does not by itself establish the vanishing block total variation required for strong synthesis. An operational theory of approximate transfer would connect these two levels: under a specified allowance for semantic mismatch, which additional transfer tasks become feasible, and what {compression and common-randomness rates} do they require?

\appendices

\section{Proof of the Renderer-Only Transfer Region}
\label{app:reuse-region}

\subsection{Achievability}

Fix a joint distribution
\begin{align}
    p(s,v,u,x)=q(s)p(v|s)p(u|v)W_0(x|u)
    \label{eq:app-reuse-law}
\end{align}
whose $(S,X)$ marginal equals $q(s,x)$, and define
\begin{align}
    \overline W_0(x|v)
    :=\sum_up(u|v)W_0(x|u).
    \label{eq:app-effective-channel}
\end{align}
Then $S-V-X$ and $q(s,x)=\sum_vq(s)p(v|s)\overline W_0(x|v)$. By Theorem~\ref{thm:cuff-full}, rates satisfying $R>I(S;V)$ and $R+R_c>I(S,X;V)$ permit strong synthesis using the effective decoder $\overline W_0$.  The decoder implements each use of $\overline W_0$ by drawing $U_i$ according to $p(u_i|v_i)$ given $V_i=v_i$ and then applying $W_0(x_i|u_i)$.  This realizes exactly the required operational form and proves achievability after closure.

\subsection{Converse}

We use the single-letterization argument for distributed channel synthesis~\cite[Sec.~VI]{Cuff2013}, while retaining the restriction that the decoder operates through $W_0$.

Consider any sequence of Renderer-Only Transfer codes satisfying~\eqref{eq:reuse-criterion}. Let $M$ and $K$ be the message and common-randomness index, and let $U^n$ be the decoder-generated renderer input. Let $Q\sim\U[1:n]$ be independent of all code variables, and define
\begin{align}
    V=(M,K,Q),
    \qquad
    S=S_Q,
    \quad U=U_Q,
    \quad X=\hat X_Q.
    \label{eq:app-converse-vars}
\end{align}
Since $U^n$ is generated from $(M,K)$ using decoder-private randomness and the renderer remains memoryless, the conditional distribution factors as
\begin{align}
    p(s,u,x|v)=p(s|v)p(u|v)W_0(x|u).
    \label{eq:app-renderer-factorization}
\end{align}
Thus $S-V-U-X$, with the prescribed renderer $W_0$.

For the {compression rate},

\begin{align}
    nR
    &\ge H(M|K)
    \ge I(S^n;M|K)
    \nonumber\\
    &\stackrel{(a)}{=}I(S^n;M,K)
    \nonumber\\
    &\stackrel{(b)}{\ge}\sum_{i=1}^n I(S_i;M,K)
    \nonumber\\
    &=nI(S_Q;M,K|Q)
    \stackrel{(c)}{=}nI(S;V),
    \label{eq:app-R-bound}
\end{align}
where step $(a)$ follows from the fact that $S^n$ is independent of $K$, step $(b)$ follows from the independence of the source coordinates and the fact that conditioning reduces entropy, and step $(c)$ follows from $S_Q\perp Q$ and the definition $V=(M,K,Q)$.

For the sum rate, the synthesized pairs need not be exactly i.i.d. By strong synthesis and~\cite[Lemma~VI.3]{Cuff2013}, there is a sequence $\delta_n\to0$ such that
\begin{align}
    \frac1n\sum_{i=1}^n I(S_i,\hat X_i;S^{i-1},\hat X^{i-1})
    &\le\delta_n,
    \nonumber\\
    I(S_Q,\hat X_Q;Q)&\le\delta_n.
    \label{eq:app-TC-vanish}
\end{align}
These bounds control the dependence across coordinates and the dependence on the selected time index, respectively. The chain rule now gives
\begin{align}
    n(R+R_c)
    &\ge H(M,K)
    \ge I(S^n,\hat X^n;M,K)
    \nonumber\\
    &=\sum_{i=1}^n I(S_i,\hat X_i;M,K,S^{i-1},\hat X^{i-1})
    \nonumber\\
    &\quad-\sum_{i=1}^n I(S_i,\hat X_i;S^{i-1},\hat X^{i-1})
    \nonumber\\
    &\stackrel{(a)}{\ge}\sum_{i=1}^n I(S_i,\hat X_i;M,K)-n\delta_n
    \nonumber\\
    &=nI(S_Q,\hat X_Q;M,K|Q)-n\delta_n
    \nonumber\\
    &=nI(S,X;V)-nI(S_Q,\hat X_Q;Q)-n\delta_n
    \nonumber\\
    &\stackrel{(b)}{\ge}nI(S,X;V)-2n\delta_n,
    \label{eq:app-sum-single}
\end{align}

where step $(a)$ follows from dropping the past coordinates from the second argument of mutual information and applying the first bound in~\eqref{eq:app-TC-vanish}, and step $(b)$ follows from its second bound.

Strong synthesis also gives $p(s,x)\to q(s,x)$ by marginalization and averaging over $Q$. Using the cardinality bound established below, a convergent subsequence of $p(v)$, $p(s|v)$ and $p(u|v)$ yields a limiting distribution satisfying~\eqref{eq:app-renderer-factorization} and the exact target marginal. Finite-alphabet continuity of mutual information then gives a law in~\eqref{eq:reuse-family} satisfying both rate bounds, proving the converse.

Finally, by the support lemma~\cite[Appendix~C]{ElGamalKim2011}, it suffices to take $|\Vc|\le|\Sc||\Xc|+1$ while preserving $p(s,x)$, both rate quantities, and the factorization in~\eqref{eq:app-renderer-factorization}. {With this cardinality bound, compactness of the admissible family and continuity of mutual information imply that the union in~\eqref{eq:reuse-region} is closed.}

\section{Proof of Theorem~\ref{thm:universal-no-loss}}
\label{app:universal-no-loss}

We prove the equivalence through the cycle $(i) \Longrightarrow(ii) \Longrightarrow(iii) \Longrightarrow(iv) \Longrightarrow(i)$.

For $(i)\Rightarrow(ii)$, consider the matched-marginal transfer target obtained by taking $S=X$, $q(s)=q_0(s)$, and $q(x'|s=x)=\delta_x(x')$. By controllability and Corollary~\ref{cor:controllability}, $\delta_x \in\Cren, \quad x\in\Xsupp$. Since $\Cren$ is convex, $\Pc(\Xsupp) =\conv\{\delta_x:x\in\Xsupp\} \subseteq\Cren$, which proves $(ii)$.

For $(ii)\Rightarrow(iii)$, since $\delta_x\in\Pc(\Xsupp)\subseteq\Cren$ for every $x\in\Xsupp$, and $\delta_x$ is an extreme point of the probability simplex, there must exist $u_x\in\Uc$ such that $W_0(x'|u_x)=\delta_x(x')$.

For $(iii)\Rightarrow(iv)$, fix an arbitrary matched-marginal transfer target $q(s,x)$ and any Full Redesign law $p(s,v,x)$ inducing it with $S-V-X$. By (iii), choose an injective map $g:\Xsupp\to\Uc$ such that $W_0(\cdot|g(x))=\delta_x$ for every $x\in\Xsupp$. Since $X$ is supported on $\Xsupp$, define $U=g(X)$. Then $X=g^{-1}(U)$, and $S-V-U-X$ with the prescribed renderer $W_0$.

This construction preserves the joint law of $(S,V,X)$, and hence both $I(S;V)$ and $I(S,X;V)$. Therefore, $\mathscr R_{\full}(q) \subseteq\mathscr R_{\ro}(q|W_0)$. The reverse inclusion is obvious by definition. This proves $(iv)$.

Finally, for $(iv)\Rightarrow(i)$, the Full Redesign region is nonempty for every finite-alphabet transfer target. Hence~\eqref{eq:universal-region-equality} implies $\mathscr R_{\ro}(q|W_0)\ne\emptyset$ for every matched-marginal transfer target, which is equivalent to controllability by Corollary~\ref{cor:controllability}.

\section{Proof of Theorem~\ref{thm:maximal-invariant}}
\label{app:ri-proofs}

The equivalence of (i) and (ii) follows from~\eqref{eq:ri-output-law}. We prove the equivalence of (ii) and (iii) and then establish maximality.

We use the finite-state recurrence and stationary-distribution results in~\cite[Secs.~1.5 and 1.7]{Norris1997}. Since $q_0(x)$ is stationary and positive on $\Xsupp$, this support is closed and contains no transient states. Hence the restriction of $T_0$ to $\Xsupp$ is a disjoint union of closed irreducible recurrent classes. Each class $c$ has a unique stationary distribution $\pi_c$. Conditioning any stationary distribution on a class of positive mass gives $\pi_c$, so every stationary distribution on $\Xsupp$ is a convex combination of these class distributions. In particular,
\begin{align}
    q_0(x|c)
    &=\pi_c(x).
    \label{eq:app-px-class}
\end{align}

Assume (ii). For each $s\in\suppo(q(s))$,
\begin{align}
    q(x|s)
    &=\sum_c q(c|s)\pi_c(x).
    \label{eq:app-row-stationary-mixture}
\end{align}
Together with~\eqref{eq:app-px-class}, this gives $q(x|s,c) =q_0(x|c)$, and hence $X-C-S$.

Assume (iii). Then
\begin{align}
    q(x|s)
    &=\sum_cq(c|s)q_0(x|c) =\sum_cq(c|s)\pi_c(x),
    \label{eq:app-converse-class-mixture}
\end{align}
which is stationary under $T_0$, proving (ii).

Finally, take $S=C$. By~\eqref{eq:app-px-class}, every conditional law $q_0(x|c)=\pi_c(x)$ is stationary, so $C$ itself is exactly transferable. Every other exactly transferable $S$ satisfies $X-C-S$, i.e., $S\preceq_XC$. Hence $C$ is the greatest exactly transferable semantic variable.

\section{Proofs for Full-Model Transfer}
\label{app:operational-transfer}

\subsection{Proof of Lemma~\ref{lem:class-consistency}}

Fix $u\in\Uc$, any $x$ such that $q_0(x|u)>0$, and any $y$ such that $W_0(y|u)>0$. Since $q_0(u)>0$, Bayes' rule gives $q_0(x)q_0(u|x)>0$, and therefore
\begin{align}
    T_0(y|x)
    &\ge q_0(u|x)W_0(y|u)
    >0.
    \label{eq:app-positive-transition}
\end{align}
Because $q_0T_0=q_0$, the support $\Xsupp$ is closed under $T_0$, and every state in $\Xsupp$ belongs to a closed recurrent class. A positive-probability transition from $x$ to $y$ therefore keeps both states in the same recurrent class. Since $x$ and $y$ were arbitrary in the two conditional supports associated with $u$, both supports lie in one class depending only on $u$. {This proves Lemma~\ref{lem:class-consistency}.}

\subsection{Proof of Theorem~\ref{thm:operational-transfer}}

{
We first transfer class preservation $c(X^n)=c(\tilde X^n)$ from the ideal distribution to the actual code. We then use that preservation to establish the sufficient task condition, and finally use the message-entropy bound to prove necessity.
}

{Let $p$ and $\bar p$ denote the actual and ideal laws from~\eqref{eq:actual-rdp-law} and~\eqref{eq:ideal-rdp-law}, respectively.} Compatibility gives
\begin{align}
    \|p-\bar p\|_{\TV}
    &\longrightarrow0.
    \label{eq:app-compatible-tv}
\end{align}
Under $\bar p$, conditional on a selected codeword, the coordinates of $X^n$ and $\tilde X^n$ are generated according to $q_0(x|u)$ and $W_0$. Lemma~\ref{lem:class-consistency} therefore gives $c(X^n) =c(\tilde X^n) \quad \bar p\text{-a.s.}$ Let $\bar{\mathsf P}$ denote event probability under the ideal law $\bar p$. For the event $\mathcal E_n=\{c(X^n)\ne c(\tilde X^n)\}$,
\begin{align}
    \mathsf P(\mathcal E_n)
    &\le \bar{\mathsf P}(\mathcal E_n)+\|p-\bar p\|_{\TV}
    \longrightarrow0,
\end{align}
which proves part (i).

For part (ii), fix any matched-marginal transfer target satisfying $q(s|x)=q(s|c(x))$.
{
We use the equivalent representation~\eqref{eq:fm-induced-law} of the induced encoder~\eqref{eq:fm-encoder}. Since $q(s^n,x^n)=q_0(x^n)q(s^n|c(x^n))$, this law can be obtained by appending the product kernel
} $q(s^n|c(x^n)) =\prod_{i=1}^nq(s_i|c(x_i))$ {
to the base operational law~\eqref{eq:actual-rdp-law}. In particular, its semantic source--reconstruction marginal is
\[
    p(s^n,\tilde x^n)
    =\sum_{x^n}p(x^n,\tilde x^n)q(s^n|c(x^n)).
\]
We now introduce a comparison distribution that keeps the actual reconstruction marginal $p(\tilde x^n)$ and replaces the input class $c(x^n)$ in the semantic kernel by the reconstruction class $c(\tilde x^n)$:
}
\begin{align}
    \tilde p(s^n,\tilde x^n)
    :=p(\tilde x^n)
      q(s^n|c(\tilde x^n)).
\end{align}
{
The two semantic kernels agree whenever $c(X^n)=c(\tilde X^n)$. Coupling them through the same pair $(X^n,\tilde X^n)$ therefore gives
}
\begin{align}
    \|p(s^n,\tilde x^n)-\tilde p\|_{\TV}
    &\le \mathsf P(\mathcal E_n),
\end{align}
{
while the target law equals $q_0(\tilde x^n)q(s^n|c(\tilde x^n))$ and thus uses the same semantic kernel as $\tilde p$. Since the reconstruction is retained, their TV distance equals that between their reconstruction marginals:
}
\begin{align}
    \|\tilde p-\prod_{i=1}^n q(s_i,\tilde x_i)\|_{\TV}
    &=\|p(\tilde x^n)-q_0(x^n)\|_{\TV}.
\end{align}
{
The first term tends to zero by part (i), and the second by the base code's strong near-perfect perception property~\eqref{eq:strong-near-perfect}. The triangle inequality therefore proves part (ii).
}

For part (iii), suppose $\baseC$ is full-model transferable for $q(s,x)$. {Under~\eqref{eq:fm-induced-law}, part (i) gives}
\begin{align}
    \eta_n
    :=\mathsf P[c(X^n)\ne c(\tilde X^n)]
    &\longrightarrow0,
\end{align}
and let
\begin{align}
    \epsilon_n
    :=\|p(s^n,\tilde x^n)-\prod_{i=1}^n q(s_i,\tilde x_i)\|_{\TV}
    &\longrightarrow0.
\end{align}
Applying the deterministic map $\tilde x^n\mapsto c(\tilde x^n)$ to the Full-Model output law and then replacing $c(\tilde X^n)$ by $c(X^n)$ gives
\begin{align}
&\left\|
p(s^n,c^n,\tilde x^n)
-\prod_{i=1}^n q(s_i,c_i,\tilde x_i)
\right\|_{\TV}\le \epsilon_n+\eta_n
\longrightarrow0.
\label{eq:app-fm-joint-close}
\end{align}
{
Applying finite-alphabet entropy continuity to the joint distribution in~\eqref{eq:app-fm-joint-close} and its two marginals (see~\cite[eq.~(73)]{Cuff2013}) therefore yields
}
\begin{align}
    \frac1n I(S^n,c(X^n);\tilde X^n)
    &\longrightarrow I(S,C;X).
    \label{eq:app-fm-mi-limit}
\end{align}
{The factorization in~\eqref{eq:fm-induced-law} gives} $(S^n,c(X^n))-(M_0,K_0)-\tilde X^n$, and $K_0$ is independent of $(S^n,c(X^n))$. {Hence
\begin{align}
    nR_0
    &\ge H(M_0|K_0)
    \nonumber\\
    &\ge I(S^n,c(X^n);M_0|K_0)
    \nonumber\\
    &=I(S^n,c(X^n);M_0,K_0)
    \nonumber\\
    &\ge I(S^n,c(X^n);\tilde X^n),
    \label{eq:app-fm-rate-bound}
\end{align}
{where the first inequality follows from $H(M_0|K_0)\le nR_0$, since $M_0\in[1:2^{nR_0}]$.}
Combining~\eqref{eq:app-fm-mi-limit}--\eqref{eq:app-fm-rate-bound} and letting $n\to\infty$ gives
\begin{align}
    R_0
    &\ge I(S,C;X) =H(C)+I(S;X|C).
\end{align}
This proves~\eqref{eq:full-model-converse}. If $R_0=H(C)$, the converse forces $I(S;X|C)=0$, equivalently $X-C-S$, while part (ii) proves achievability.
}

\section{{Optimality of the Common Base Rate}}
\label{app:examples}
\label{app:same-test-channel-separation}

{
To prove the optimality claim following~\eqref{eq:common-base-point}, consider any operational code whose average class distortion tends to zero, and let
\[
    \delta_n:=\frac1n\sum_{i=1}^n
      \mathsf P[{c}(X_i)\ne {c}(\tilde X_i)]\longrightarrow0.
\]
Writing $h_2$ for binary entropy, coordinatewise Fano's inequality and concavity give
\[
    H({c}(X^n)|{c}(\tilde X^n))
    \le\sum_{i=1}^n H({c}(X_i)|{c}(\tilde X_i))
    \le n h_2(\delta_n)=o(n).
\]
Since the class bits are independent and uniform, $K\perp {c}(X^n)$, and ${c}(X^n)-(M,K)-{c}(\tilde X^n)$, we obtain
\begin{align}
    nR_0
    &\ge H(M|K)
    \ge I({c}(X^n);M,K)
    \nonumber\\
    &\ge I({c}(X^n);{c}(\tilde X^n))
    \nonumber\\
    &\ge n-nh_2(\delta_n).
\end{align}
Thus $R_0\ge1$ in the limit, proving that the {common compression rate} $R_0=1$ is optimal.
}

\section{{Proof of Proposition~\ref{prop:gaussian-reuse}}}
\label{app:gaussian}
\label{app:gaussian-reuse}

{
Under the fixed renderer~\eqref{eq:gauss-base-y}, matching the conditional variance of the target in~\eqref{eq:gauss-semantic-target} requires
\[
    \beta_0^2\Var(U|S=s)
    =(1-\rho^2)-(1-\beta_0^2)
    =\beta_0^2-\rho^2.
\]
Thus $|\rho|\le\beta_0$ is necessary for representability. Conversely, the steering law~\eqref{eq:gauss-steering} produces the required Gaussian conditional distribution whenever $|\rho|\le\beta_0$, with zero conditional variance allowed at equality. This representing law is unique because convolution with Gaussian noise is injective, as its characteristic function is everywhere nonzero.

For $|\rho|<\beta_0$, the standard Gaussian mutual-information formula gives $I(S;U)$ equal to the expression in~\eqref{eq:gauss-reuse-rate}. At $|\rho|=\beta_0$, the steering law reduces to $U=(\rho/\beta_0)S$ almost surely, so $I(S;U)=\infty$. We next establish the corresponding operational lower bound for any sequence of codes satisfying strong synthesis.
}

For the converse, consider any sequence of Renderer-Only codes satisfying~\eqref{eq:reuse-criterion}, with message alphabets $\mathcal M_n$, and write $R_n=n^{-1}\log|\mathcal M_n|$ for its {compression rate}. Let $Q$ be uniform on $[1:n]$ and independent of all code variables. Since $K\perp S^n$, $S^n-(M,K)-U^n$, and the source coordinates are independent,
\begin{align*}
    nR_n
    &\ge I(S^n;M,K)
    \ge I(S^n;U^n)
    \ge \sum_{i=1}^n I(S_i;U_i)
    =nI(S_{Q};U_{Q}|Q)
    \ge nI(S_{Q};U_{Q}),
\end{align*}
where the last inequality uses $S_{Q}\perp Q$.

The synthesis criterion implies that $(S_{Q},\hat X_{Q})$ converges in total variation to the transfer target pair $(S,X)$. Thus, for every $a,b\in\mathbb R$, the fixed Gaussian renderer gives
\begin{align*}
    \varphi_{S_{Q},U_{Q}}(a,b)
    &=\varphi_{S_{Q},\hat X_{Q}}(a,b/\beta_0)
      \exp\!\left(\frac{1-\beta_0^2}{2\beta_0^2}b^2\right)\\
    &\longrightarrow
      \exp\!\left[-\frac12
      \left(a^2+b^2+2\frac{\rho}{\beta_0}ab\right)\right].
\end{align*}
If $|\rho|>\beta_0$, the limit exceeds one at $a=-(\rho/\beta_0)b$ with $b\ne0$, contradicting the modulus bound for characteristic functions. Hence no such code sequence exists in that range.

For $|\rho|\le\beta_0$, the characteristic-function convergence theorem~\cite[Thm.~3.10.5]{Durrett2019} implies that $(S_{Q},U_{Q})$ converges weakly to the jointly Gaussian pair $(S,U)$ with unit variances and correlation $\rho/\beta_0$. Joint lower semicontinuity of relative entropy under weak convergence~\cite[Thm.~19]{VanErvenHarremoes2014}, applied to the joint distribution and the product of its marginals, then gives
$\liminf_{n\to\infty}R_n\ge\liminf_{n\to\infty}I(S_{Q};U_{Q})\ge I(S;U)$.
Together with the preceding Gaussian calculation, this proves the lower bound in~\eqref{eq:gauss-reuse-rate} and excludes finite rate when $|\rho|=\beta_0$.

For $|\rho|<\beta_0$, apply the general-alphabet synthesis construction in~\cite[Secs.~V and VII]{Cuff2013} to $S-U-\hat X$, using $U$ as the codebook variable and $W_0$ as the decoder kernel. Strong synthesis is achievable whenever $R>I(S;U)$ and $R+R_c>I(S,\hat X;U)$. The Gaussian law is nonsingular in this range, so both mutual informations are finite. Hence every $R>I(S;U)$ is achievable with a sufficiently large finite common-randomness rate, proving the upper bound in~\eqref{eq:gauss-reuse-rate}.

Equation~\eqref{eq:gauss-full-rate} follows by letting the common-randomness rate tend to infinity in the Gaussian TV-approximate channel-synthesis region of~\cite[Cor.~3]{YuTan2020}. Subtracting it from~\eqref{eq:gauss-reuse-rate} gives~\eqref{eq:gauss-price}.


\end{document}